\documentclass[a4paper,11pt]{article}
\usepackage{siunitx}
\usepackage{fullpage}
\usepackage{listings}
\usepackage{latexsym, amsfonts, amssymb, amsmath, amsthm, mathrsfs, mathtools, setspace, graphics, graphicx, bbm, float,bigints, calc}
\usepackage{enumerate}
\usepackage{caption}
\usepackage{indentfirst}
\usepackage{framed}
\usepackage{yhmath}
\usepackage{bm}
\usepackage{booktabs}

\usepackage{color}
\usepackage[dvipsnames]{xcolor}
\usepackage{pdfpages}

\usepackage{lineno}

\usepackage{geometry}
\RequirePackage{tikz}
\usetikzlibrary{calc,trees,positioning,arrows,chains,shapes.geometric,%
    decorations.pathreplacing,decorations.pathmorphing,shapes,%
    matrix,shapes.symbols}
\allowdisplaybreaks    

\usepackage[affil-it]{authblk}

\usepackage{hyperref}

\newtheorem{thm}{Theorem}[section]

\newtheorem{lem}[thm]{Lemma}

\newtheorem{defi}[thm]{Definition}

\newtheorem{prop}[thm]{Proposition}

\newcommand{\vertiii}[1]{{\left\vert\kern-0.25ex\left\vert\kern-0.25ex\left\vert #1 
    \right\vert\kern-0.25ex\right\vert\kern-0.25ex\right\vert}}

\newcommand{\PP}{\mathbb{P}}
\newcommand{\E}{\mathbb{E}}

\newcommand{\R}{\mathbb{R}}

\definecolor{darkgreen}{rgb}{0.0, 0.5, 0.0}

\begin{document}
\title{Rainfall is rough}
\author[1,2]{Thomas Deschatre}
\author[2,3,4]{Marc Hoffmann}
\author[3]{Mathieu Rosenbaum}

\affil[1]{EDF Lab}
\affil[2]{FiME Lab}
\affil[3]{Université Paris Dauphine--PSL}
\affil[4]{Institut Universitaire de France}
{   \makeatletter
    \renewcommand\AB@affilsepx{: \protect\Affilfont}
    \makeatother


    \makeatletter
    \renewcommand\AB@affilsepx{, \protect\Affilfont}
    \makeatother

    }

\maketitle
\maketitle


\begin{abstract}
We propose a new approach to model rainfall by combining heterogeneous data sources at different time scales. Continuous arrivals of rain cells are incorporated into a Hawkes process formalism that encompasses the classical Bartlett–Lewis and Neyman–Scott models, thereby providing a more flexible representation of clustering. Analysis of high-frequency rainfall data (at the minute scale over several years) indicates that critical Hawkes processes with heavy-tailed power-law kernels yield a superior fit relative to classical models and alternative kernel specifications. Scaling arguments inspired by \cite{JaissonRosenbaum2016} imply that aggregated rainfall at coarse time scales converges to a rough fractional process with Hurst exponent close to zero. This prediction is supported by empirical evidence from low-frequency data (annual observations spanning centuries to millennia), where the Hurst exponent is estimated to lie between $10^{-2}$ and $10^{-1}$ based on either direct observations from weather stations or proxy reconstructions such as tree-ring records. These results establish a connection between rainfall dynamics and models developed in quantitative finance for market microstructure and volatility. They also provide a new perspective on classical scaling phenomena originally studied by Hurst and Mandelbrot.
\end{abstract}

\vspace{2mm}

{\small 
\noindent \textbf{Mathematics Subject Classification (2010)}: 60G22, 60G17, 62M05, 62M07, 62M10.\\
\noindent \textbf{Keywords}: rough fractional processes, nearly unstable Hawkes processes, heavy-tailed Hawkes processes, self-similarity, criticality, Bartlett-Lewis and Neyman-Scott rainfall models, statistical inference across scales.
}


\section{Introduction and main results}

Scaling laws and self-similarity properties are ubiquitous in many physical phenomena, and rainfall models in hydrology are no exception \cite{LeCam1961, Waymire1985Scaling}. Although scaling relations persist across scales, the statistical models that best describe the observations may differ substantially from one scale to another. This is particularly true for precipitation fields: at fine scales, precipitation is naturally described by clustered point-process models \cite{RodriguezIturbe1987}, whereas at coarser scales aggregated rainfall exhibits continuous self-similar behaviour \cite{SchertzerLovejoy1987, VerrierMalletBarthes2011}. Bridging these apparently different statistical descriptions remains a challenging problem. The present paper identifies a unified statistical framework that connects these two regimes by combining heterogeneous data sources while preserving key scaling characteristics.\\

The statistical inference methodology we introduce can identify critical cluster processes at small scales while recovering the signature of this criticality on coarser scale data obtained from other measurement sources. While we depart from classical approaches, our conclusions remain compatible with several well established findings. Interestingly, a similar conceptual shift has recently occurred in financial econometrics with the emergence of the theory of rough volatility \cite{Gatheral2022}. Our contribution can be viewed as an analogous paradigm for rainfall modelling.

\subsection{Modelling rainfall} \label{sec: mod rainfall gen}


Our primary interest lies in the dynamics of aggregated rainfall at small time scales. For a given weather station, we have observations at the scale of a few minutes over several years, see Section \ref{sec: stat criticality} for a description of the data. We model the data via
\begin{equation} \label{eq: data small scale}
Y_k^\delta = \int_{(k-1)\delta}^{k\delta}Y_s\,ds,\;\;k=1,2, \ldots
\end{equation}
where the latent process $Y_t = (Y_t)_{t \geq 0}$ is the intensity of rainfall and $\delta$ is on the order of a few minutes. In this widely accepted phenomenological paradigm, a stochastic counting
process $N_t = (N_t)_{t \geq 0}$ defines the arrival times of so-called rain cells. Each rain cell is associated with a rectangular pulse consisting of  a random intensity $I_i \geq 0$ times a random duration $L_i \geq 0$. The rainfall intensity $Y_t$ at time $t$ is the sum of all the intensities of living rain cells at time $t$:
\begin{equation} \label{eq:intensity_rain}  
\left\{
\begin{array}{ll}
Y_t  & = \sum_{i=1}^{N_t} I_i \,{\bf 1}_{\{\tau_i + L_i > t\}}, \\ \\
N_t & = \sum_{i \geq 1} {\bf 1}_{\{\tau_i \leq t\}},
\end{array}
\right.
\end{equation}
where the $(\tau_i)_{i \geq 1}$ are the arrival times of the rain cells.
\\ 

Continuous time modelling of rainfall dates back to \cite{LeCam1961}
and was popularized in the seminal papers by 
\cite{RodriguezIturbe1987, RodriguezIturbe1988}, see also \cite{OnofWheater1993, Onof2000, Cowpertwait2007}. Bartlett-Lewis (BL) and Neyman-Scott (NS) models are the most prominent models for rainfall intensity $Y_t$, see \cite{RodriguezIturbe1987}. They only differ by the specification of the counting process $N_t$, the pairs $(I_i, L_i)$ being drawn independently according to some pre-specified distribution. An advantage of this framework is the ability of $N_t$ to produce clusters, see \cite{DaleyVereJones2003} for a rigorous definition\footnote{Informally and consistently with the hydrology literature, we mean that the conditional intensity of observing an event increases following the occurrence of a previous event, resulting in temporal aggregation beyond that expected under a Poisson process.}.\\

 For the statistical analysis of rainfall at small scales, we have several time series of the form \eqref{eq: data small scale}, each time series being associated with a weather station at a given location over a specific observation period, see Section \ref{sec: stat criticality}. The fact that $Y_t$, hence $N_t$ is observed only via aggregated quantities of the form $Y_k^\delta = \int_{(k-1)\delta}^{k\delta} Y_s\,ds$ for $k=1, 2,\ldots$
creates a significant technical difficulty for statistical inference that we address in Section \ref{sec: inference small scales}  below.

\subsection{Hawkes processes, heavy tails and criticality} \label{sec: hawkes first time}

Among counting processes producing clusters, Hawkes processes \cite{Hawkes1971_Biometrika} are a natural candidate for extending the BL and NS models. They combine mathematical flexibility, transparent modelling interpretation and statistical tractability\footnote{This certainly explains the myriad of applications of Hawkes processes in a remarkably wide variety of fields ranging from finance (credit risk \cite{errais2010affine}, \cite{azizpour2018credit}; order book modelling \cite{abergel2015long, bacry2015hawkesob}; microstructure \cite{BacryDelattreHoffmannMuzy2013,el2018microstructural,el2019characteristic}; endogeneity \cite{filimonov2012quantifying,HardimanBercotBouchaud2013}; market impact and rough volatility \cite{JaissonRosenbaum2015, JaissonRosenbaum2016,  dandapani2021quadratic,jusselin2020no,chahdi2024theory}), to social networks \cite{zhao2015seismic}, life sciences, such as epidemiology \cite{rizoiu2017sirhawkes}, neuroscience \cite{reynaud2010adaptive, reynaud2014detecting, eichler2016graphical} and even criminology \cite{mohler2011self}.}. More importantly, Hawkes processes offer a new pathway to better understand the statistical nature of rainfall at fine scales via the notion of heavy-tailed criticality, see Section \ref{sec: intro from crit to rough} below. A counting process $N_t$ can be defined via its stochastic intensity $\lambda_t$:  
\begin{equation}\label{int_Hawkes}\PP(N_{t+dt}-N_t =1\,|\,N_s,s<t) = \lambda_t\,dt.
\end{equation}
The intensity of a linear Hawkes process is specified by
\begin{equation} \label{eq: def intensity Hawkes}
\lambda_t = \mu + \int_{[0,t)} \varphi(t-s)dN_s = \mu + \sum_{i, \tau_i <t} \varphi(t-\tau_i),
\end{equation}
where $\mu >0$ is a baseline Poisson intensity, the $\tau_i$ are the event times, and $\varphi: [0,\infty) \rightarrow [0,\infty)$ is a causal function that allows past events to trigger new events that can form clusters. 
Two prototypes of non-increasing reproduction kernels are given by the exponential kernel
$$\varphi(t) = \alpha_1 \exp(-\beta_1 t){\bf 1}_{\{t \geq 0\}},\;\;\alpha_1, \beta_1 >0,$$
and the power-law kernel
\begin{equation} \label{eq: power-law def}
\varphi(t) = \frac{\gamma}{(1+ \beta t)^{1+\alpha}}{\bf 1}_{\{t \geq 0\}},\;\; \alpha, \beta,\gamma >0.
\end{equation}
The two parametrizations yield quite different behaviours over large time intervals: typically, as $T \rightarrow \infty$,  $N_T$ scales like $T$  for an exponential kernel but it can scale like $T^{2\alpha}$ for a power-law kernel under additional conditions on $\mu$ and $\varphi$, see \cite{BacryDelattreHoffmannMuzy2013, JaissonRosenbaum2016} and see Section \ref{sec: connecting} for a mathematical description of these scaling laws.This will be of importance in our subsequent analysis.\\
 
Exponential kernels yield explicit moment formulas as well as autocorrelation functions via a spectral approach, see \cite{Hawkes1971_Biometrika, HawkesOakes1974}. We show in Proposition \ref{prop: expo second-order} in Section \ref{sec: model small scales} that second-order statistics computed from data \eqref{eq: data small scale} when the intensity process $(Y_t)_{t \geq 0}$ in \eqref{eq:intensity_rain} is governed by either the BL or the NS model are indistinguishable from those obtained under a model where $(N_t)_{t \geq 0}$ in \eqref{eq:intensity_rain} is taken as a Hawkes process with exponential kernel. Power-law kernels are significantly more flexible to enforce cluster at longer characteristic time scales. Intermediate between these situations are multiscale exponential kernels of the form
$$\varphi(t) = \sum_{k = 1}^d \alpha_k \exp(-\beta_k t){\bf 1}_{\{t \geq 0\}},\;\; \alpha_k,\beta_k >0,$$
see \cite{HardimanBercotBouchaud2013} in the context of Hawkes processes. For computational efficiency, power-law kernels are approximated by sums of exponentials to within prescribed accuracy, a common practice in numerical analysis and physics \cite{beylkin2005approximation}.\\

 We develop in Section \ref{sec: stat methodo micro} two statistical methods for inferring the parameters of exponential and power-law kernels (possibly approximated by sums of exponentials with suitable weights) from data \eqref{eq: data small scale}. The first approach is based on spectral methods  for Hawkes processes observed via count data as in \cite{CheyssonLang2022}, while the second method relies on the minimisation of second-order contrasts based on the aggregated quantities of $\E[Y_k^h],  \mathrm{Var}(Y_k^h)$, $\mathrm{Cov}(Y_k^h, Y_{k+1}^h)$, where $h$ is a multiple of $\delta$, {\it i.e.} a contrast across observable scales. 
In Section \ref{sec: stat criticality}, we analyse several datasets from weather stations, with a small-scale resolution $\delta$ ranging from 5 to 10 minutes and a total observation period spanning several years.\\


Our key finding is that power-law kernel systematically yield better statistical fits than other kernels or the BL and NS models, for different goodness-of-fit criteria. Moreover, we have strong evidence of criticality. This means that $\|\varphi\|_{1} =  \int_0^\infty \varphi(s)ds$ is close to $1$ (between 0.90 and 0.99). The parameter $\|\varphi\|_{1}$ is a natural measure of endogeneity\footnote{By endogeneity, we mean the ratio of events that are triggered via $\varphi$ against the total number of events.}, see Section \ref{sec: model small scales}, and our analysis empirically demonstrates criticality for rainfall at small scales, a phenomenon similar to the criticality observed in high-frequency transaction flows on financial markets \cite{HardimanBercotBouchaud2013}. 

\subsection{From heavy-tailed critical Hawkes to rough fractional  processes} \label{sec: intro from crit to rough}

Another important finding of Section \ref{sec: stat criticality} is that the estimators of the power-law tail parameter $\alpha$ are close to $0.5$ and away from $1$. Combined with criticality, the heavy-tailed behaviour of the kernel $\varphi$ is the gateway to roughness at large time scales: \cite{JaissonRosenbaum2016} prove that a nearly critical linear Hawkes process with a heavy-tailed kernel under a large-time scaling has a single non-degenerate limit behaviour as the integral of a continuous rough fractional process. This imposes $\mu \approx T^{\alpha-1}$ and $1-\|\varphi\|_{1} \approx T^{-\alpha}$ for some $\alpha \in (1/2, 1)$ in the representation of the power-law kernel \eqref{eq: power-law def}, see Definition \ref{def: Critical linear Hawkes processes} in Section \ref{sec: model small scales} below. This scaling yields\footnote{The precise meaning of the convergence is discussed in Section \ref{sec: connecting} below.}
\begin{equation} \label{eq: conv hawkes}
T^{-2\alpha}N_{tT} \longrightarrow \int_0^t X_s\,ds
\end{equation}
as $T\rightarrow \infty$,
where $X_t = (X_t)_{t \in [0,1]}$ is a rough fractional process with Hurst index 
$$H=\alpha -\tfrac{1}{2} \in (0,\tfrac{1}{2}).$$ By fractional, we mean a probabilistic self-similarity property of the form 
\begin{equation} \label{eq: scaling non linear approx}
\E\big[\big|X_{t+h}-X_t\big|^q\big] \approx h^{qH},\;\;q>0,\;h>0,
\end{equation}
and rough means $H \in (0,1/2)$ following the terminology of \cite{Gatheral2022}, see Definition \ref{def: fractional process} in Section \ref{sec: model large scale data} below. 
In particular, the H\"older smoothness saturates in a weak sense to the value $H$. Thanks to the convergence \eqref{eq: conv hawkes}, we establish in Section \ref{sec: connecting hawkes and fractional} a similar result for the rainfall intensity process, namely
 \begin{equation} \label{eq: aggregated limit}
T^{-2\alpha}\int_0^{tT} Y_s\,ds \longrightarrow \kappa \int_0^t X_s\,ds
\end{equation}
as $T\rightarrow \infty$, with $\kappa = \mathbb{E}[I]\mathbb{E}[L]$. In other words, the approximation \eqref{eq: aggregated limit} suggests that aggregated rainfall\footnote{{\it i.e.} increments of the left-hand side of \eqref{eq: aggregated limit}.} over large time resembles a rough fractional process over a fixed macroscopic time, see Proposition~\ref{prop: small meets large}. The argument is developed in Section \ref{sec: connecting}.

\subsection{Rainfall is rough at large time scales}

We undertake the verification of this theoretical prediction from a statistical angle in Section \ref{sec: rain is rough at large scales}.
For a fixed geographical area and a temporal aggregation scale $\Delta >0$ ranging from one year to one decade, we now have observations of the form 
\begin{equation} \label{eq: empirical data macro}
Z_1^\Delta, Z_2^\Delta, \ldots, Z_k^\Delta,\ldots
\end{equation}
Each value $Z_k^\Delta$ measures a proxy of the aggregation of rain during a period of size $\Delta$. The time period over which each time series is studied varies from several decades to a few thousand of years. For the subsequent mathematical analysis, it is convenient to define a continuous time embedding
$
Z_{k\Delta/T} = Z_k^\Delta,
$
where the continuous random process $(Z_t)_{t \in [0,1]}$ interpolates the time series $(Z_k^\Delta)_{k=1,\ldots}$ at the discrete times $k\Delta/ T$ whenever $T$ is the macroscopic time horizon over which the experiment is conducted. 
From empirical data \eqref{eq: empirical data macro}, we estimate\footnote{Whether a smoothness parameter can be inferred from noisy data is an old question in statistics, and property \eqref{eq: scaling non linear approx} gives a definite positive answer, since the scaling law provides some stability, see \cite{GloterHoffmann2007, ChongHoffmannLiuRosenbaumSzymanski2024, Szymanski2024} for a mathematical analysis of this phenomenon.} the expectation \eqref{eq: scaling non linear approx} for several values of $h$ as multiple of $\Delta/T$ via the empirical structure function
\begin{equation} \label{eq: empirical structure}
m_h(q, Z) = N^{-1} \sum_{k = 1}^{N}|Z_{kh}-Z_{(k-1)h}|^q,
\end{equation}
that are observable, up to boundary effects, thanks to the embedding $
Z_{k\Delta/T} = Z_k^\Delta$. For a fixed value of $q$, we obtain an estimator of $Hq$ by a linear regression on a log-log scale of the relation \eqref{eq: empirical structure} {\it i.e.} $\log m_h(q, Z)$ against $Hq\log h$ for different values of $h$. We then regress our estimators for different values of $q$ in a second time in order to obtain our final estimator of $H$, see Section \ref{sec: stat evidence roughness}.\\

We analyse several categories of data in Section~\ref{sec: evidence roughness}: i) Weather station data: observed data, including the four M\'et\'eo France station used in the empirical study at small time scales of Section~\ref{sec: stat criticality} and the data from~\cite{volpi2024} ii) Paleoclimatic data, from indirect measurements (tree rings, lake sediments, and pollen), used in the metastudy of \cite{iliopoulou2018}. Our key finding is that $H$ is empirically small, between $10^{-2}$ and $10^{-1}$, and that this result is robust.\\ 

These results are in accordance with the small scale study via the prediction of the scaling limit \eqref{eq: aggregated limit}, although the data we analyse at small and large scales are heterogeneous, measured with completely different methods. This thus reveals the emerging trace of heavy-tailed criticality in large time scales.
  
\subsection{Organisation of the paper}

In Section \ref{sec: rainfall at small scales}, we construct our model of rainfall intensity based on rain cells arrivals modelled as a linear Hawkes process. 
We show in particular in Proposition \ref{prop: expo second-order} that the Hawkes process formalism is already contained in the classical Neyman-Scott and Bartlett-Lewis models, and that these classical models are indistinguishable from a Hawkes model with an exponential kernel from first and second-order statistics.
In Section \ref{sec: stat methodo micro}, we develop two methods of inference for the parameters of the intensity of the model, based on a spectral approach for the first one and on second-order contrast minimization over several temporal scales for the second one.\\ 

In Section \ref{sec: connecting}, we connect our findings on Hawkes processes at small time scales by looking at large time scale data. 
By exploiting the results of \cite{JaissonRosenbaum2016}, we establish a formal link between heavy-tailed critical rainfall models at small time scales and their rough limiting behaviour, see in particular Proposition \ref{prop: small meets large}.\\

In Section \ref{sec: rain is rough at large scales}, we undertake the verification of the compatibility of Hawkes processes at small time scales with rough fractional processes at large time scales.  We briefly explain in Section \ref{sec: stat evidence roughness} the statistical methodology needed to recover the parameter $H$. We analyse large time scale data in Section \ref{sec: evidence roughness} for rainfall aggregation in several geographical areas over thousands of years, and we observe a systematic rough behaviour at large time scales.\\ 

In the discussion Section \ref{sec: discussion}, we compare our results with other more classical approaches, including the seminal results of \cite{hurst1951} and \cite{mandelbrot1968}. Also, the multifractal analysis that seems ubiquitous in physical phenomenona involving scaling laws is present in rainfall, see \cite{LovejoySchertzer2007}. We discuss how the multifractal approach compares with our framework. Interestingly, apparent statistical paradoxes about the presence of long memory or the multifractal nature of the data when confronted with the rough models in the context of rainfall have parallels in statistical finance, see \cite{Bacry2003LogInf, Gatheral2022,Neuman2018fBmZero, Wu2022LogSfBM}. We discuss how to reconcile seemingly different observations.\\ 

An appendix, Section \ref{sec: appendix}, contains the technical proofs and additional statistical results.

\section{Rainfall is critical and heavy-tailed at small time scales} \label{sec: rainfall at small scales}

\subsection{Modelling rainfall at small time scales} \label{sec: model small scales}

We generalize the classical approach of NS and BL models by allowing each rain cell to itself generate a new generation of rain cells, via linear Hawkes processes, as introduced in \cite{Hawkes1971_Biometrika}. 
The dynamics of the rain-cell arrival process $N_t$ can be described as follows:
\begin{itemize}
    \item[(i)] Parent rain cells arrive as a homogeneous Poisson process with intensity $\mu$;
    \item[(ii)] Each parent rain cell produces rain cells as an inhomogeneous Poisson process with intensity $\varphi(a)$, where $a$ is the age of the parent rain cell (the difference between current time $t$ and the birth of the parent rain cell);
       \item[(iii)] Each offspring rain cell gives rise to rain cell children as in step (ii), and so on.
\end{itemize}

The stochastic intensity $\lambda_t$ of a counting process $N_t$ is defined by Equation \eqref{int_Hawkes}.
In mathematically rigorous terms, this is equivalent to 
saying that 
$N_t-\int_0^t \lambda_s\,ds$ is a (local) martingale while $\int_0^t\lambda_sds$ is predictable. The intensity process $(\lambda_t)_{t \geq 0}$ entirely characterizes the law of $N_t = (N_t)_{t \geq 0}$, see {\it e.g.} \cite{Jacod1975} or \cite{Karr1991}. A linear Hawkes process $(N_t)_{t \geq 0}$ is defined via its stochastic intensity specified in Equation \eqref{eq: def intensity Hawkes}. 
A solution to \eqref{eq: def intensity Hawkes}  exists and is uniquely  defined as soon as $\varphi$ is locally integrable.\\

A key parameter is $\|\varphi\|_1 = \int_0^{\infty} \varphi(s)ds$, which accounts for the mean number of children produced by each rain cell in the population interpretation of $N_t$.
Throughout the rest of the paper, we assume the following natural stability condition, see {\it e.g. }\cite{JaissonRosenbaum2015}.
\begin{defi} \label{def: stability}
A linear Hawkes model is stable if $\|\varphi\|_{1} < 1$.
\end{defi}


\subsubsection*{Bartlett-Lewis, Neyman-Scott and the Hawkes formalism}
The BL and NS models \cite{RodriguezIturbe1987} are constructed as follows. For the BL model\footnote{For the BL and Hawkes models, the parent rain cells are included in $N_t$, but not for the NS process, as is customary in the original papers.} 
\begin{itemize}
    \item[(i)] Parent rain cells\footnote{Called storms in \cite{RodriguezIturbe1987}} appear as a homogeneous Poisson process with intensity $\mu_{BL} > 0$;
    \item[(ii)] During the lifetime of each parent rain cell, taken as exponentially distributed with parameter $\gamma_{BL} >0$, new rain cells appear as a second homogeneous Poisson process with intensity $\nu_{BL} > 0$.
\end{itemize}
The NS model has a similar structure: 
\begin{itemize}
    \item[(i')] Parent rain cells appear as a homogeneous Poisson process with intensity $\mu_{NS} > 0$;
    \item[(ii')] For each parent rain cell, a random number $C$ new rain cells are generated and independently displaced from the parent rain cell's origin, with a common distribution $f_{NS}(t)dt$.
\end{itemize}

The BL and NS models differ from the Hawkes process in that they involve a finite number of clustering levels. In contrast, in a Hawkes process, each generation of cells can trigger subsequent generations through step (iii). However, the arrival process of rain cells for both the BL and NS models actually follows a kind of degenerate Hawkes dynamics, with latent components. This observation for the NS model already appears in \cite{HawkesOakes1974}\footnote{Hawkes and Oakes in the proof of \cite[Lemma 1]{HawkesOakes1974} explain that for any arrival time of the Hawkes model, the cluster generated by that point is ``{\it a Neyman-Scott cluster process in which the number of events in a cluster has a Poisson distribution with mean $\|\varphi\|_1$, and the distances of each point of the cluster from the cluster centre are independent random variables with probability distribution function $\frac{\varphi(t)}{\|\varphi\|_1}$}''. }. We establish this analogy rigorously in Appendix \ref{app: proof prop embedding}.
\\

Conversely, if rain cell arrivals are modelled via a linear Hawkes process with an exponential kernel
\begin{equation} \label{eq: forme exp}
\varphi(t) = \alpha_1 \exp(-\beta_1 t){\bf 1}_{\{t \geq 0\}},\;\;\text{with}\;\;\alpha_1, \beta_1 >0\;\;\text{and}\;\;\alpha_1/\beta_1<1,
\end{equation}
under the stability condition $ \|\varphi\|_{1} = \alpha_1/\beta_1 < 1$, first and second-order statistics of data \eqref{eq: data small scale} are indistinguishable between BL, NS and Hawkes models, up to an appropriate parametrisation. More precisely, we have the following proposition.
\begin{prop}[Exponential Hawkes, BL and NS have same second-order statistics] \label{prop: expo second-order}
Assume that the time series $(Y_k^\delta)_{k \geq 1}$ defined in \eqref{eq: data small scale} is stationary\footnote{See Appendix \ref{sec: stationary assumption} for a discussion on stationarity.}. If the rain cell arrivals is a Hawkes process with an exponential kernel of the form \eqref{eq: forme exp}, we have
\begin{equation*} \label{eq: moment 1}
\E\big[Y_i^\delta\big] = \E[I]\E[L]  \Lambda \delta,
\end{equation*}
with $\Lambda = \frac{\mu}{1-\|\varphi\|_{1}} = \frac{\mu}{1-\alpha_1/\beta_1}$.
If the $L_i$ are exponentially distributed with parameter $\lambda_L >0$, we have
\begin{equation*} \label{eq: moment 2}
\mathrm{Var}(Y_i^\delta) =
\frac{2\delta\Lambda C_1}{\beta_1-\alpha_1} \Big(1 - \frac{1-\mathrm{e}^{-(\beta_1-\alpha_1)\delta}}{(\beta_1-\alpha_1)\delta}\Big) + \frac{2\delta \Lambda  C_2}{\lambda_L} \Big(1 - \frac{1-\mathrm{e}^{-\lambda_L\delta}}{\lambda_L\delta}\Big),
\end{equation*}
with
$$C_1 = \E[I]^2\frac{\alpha_1(2\beta_1 - \alpha_1)}{2(\beta_1-\alpha_1)(\lambda_L^2 - (\beta_1-\alpha_1)^2)},\;C_2 = \lambda_L^{-1}(\E[I^2] - C_1(\beta_1-\alpha_1)),$$
and for $i \neq j$:
\begin{equation*} \label{eq: moment 3}\mathrm{Cov}\big(Y_i^\delta, Y_j^\delta\big) = \Lambda C_1 \frac{\mathrm{e}^{-(\beta_1-\alpha_1)(|j-i|-1)\delta}}{(\beta_1-\alpha_1)^2}\big(1 - \mathrm{e}^{-(\beta_1-\alpha_1)\delta} \big)^2 +  \Lambda C_2 \frac{\mathrm{e}^{-\lambda_L(|j-i|-1)\delta}}{\lambda_L^2}\big(1 - \mathrm{e}^{-\lambda_L\delta} \big)^2.
\end{equation*}
Moreover, if the rain cell arrival process follow a BL process with parameters $(\mu_{BL}, \gamma_{BL}, \nu_{BL})$, the same formulas  hold true for the parametrisation
$$\gamma_{BL} = \beta_1-\alpha_1,\;\;\nu_{BL} = \frac{\alpha_1(1 - \alpha_1/(2\beta_1))}{1 - \alpha_1/\beta_1}\;\;\text{and}\;\;\mu_{BL}\Big(1+\frac{\nu_{BL}}{\gamma_{BL}}\Big)  = \frac{\mu}{1-\alpha_1/\beta_1}.$$
For the NS model, if $C$ follows a Poisson distribution with parameter $\nu_{NS}$ and $ f_{NS}$ is an exponential density with parameter $\gamma_{NS}$, the same formulas hold true for the parametrisation
$$\gamma_{NS} = \beta_1-\alpha_1,\;\; \nu_{NS} = \frac{\alpha_1}{\beta_1}\Big(2 -  \frac{\alpha_1}{\beta_1}\Big)\Big(1 -  \frac{\alpha_1}{\beta_1}\Big)^{-2}\;\;\text{and}\;\;\mu_{NS}\nu_{NS} = \frac{\mu}{1-\alpha_1/\beta_1}.$$ 
\end{prop}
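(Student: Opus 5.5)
Everything reduces to the first two moments of the stationary arrival process $N$, i.e.\ its rate and reduced covariance density. I would then check that the BL and NS arrival processes, under the stated parametrisations, have the same rate and covariance density as the exponential Hawkes process.

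\textbf{Step 1: shot-noise representation.} Since the marks $(I_i,L_i)$ are i.i.d.\ and independent of $N$, write
$$Y_t=\int_{(-\infty,t]} g_i(t-s)\,dN_s,\qquad g_i(u)=I_i\mathbf 1_{\{0\le u<L_i\}}.$$
Campbell's formula for the marked stationary point process gives
$$\E[Y_t]=\Lambda\,\E[I]\,\E[L],$$
where $\Lambda=\mu/(1-\|\varphi\|_1)$ is the stationary Hawkes rate, obtained by taking expectations in \eqref{eq: def intensity Hawkes}. Integrating over an interval of length $\delta$ yields $\E[Y_i^\delta]$.

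For the covariance, let $c(u)$ be the reduced covariance density of $N$, so that $\mathrm{Cov}(dN_s,dN_{s+u})=\Lambda\delta_0(du)\,ds\,du+c(u)\,ds\,du$. Then
$$\mathrm{Cov}(Y_t,Y_{t+u})=\Lambda\,\E[I^2]\,\E\big[(L-u)_+\big]+\E[I]^2\iint \PP(L>a)\,\PP(L>b)\,c(u+b-a)\,da\,db .$$
The diagonal term comes from a single cell; the off-diagonal term uses independence of the marks of distinct cells. With $L\sim\mathrm{Exp}(\lambda_L)$ the first term equals $\Lambda\E[I^2]e^{-\lambda_L u}/\lambda_L$.

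\textbf{Step 2: covariance density of the exponential Hawkes process.} By Hawkes' spectral formula, $c(u)=\Lambda\,\psi(|u|)$ restricted to $u\neq0$, where
$$\psi=\varphi+\psi*\varphi+\dots,\qquad \psi(u)=\alpha_1 e^{-(\beta_1-\alpha_1)u}.$$
Hawkes actually gives $c(u)=\Lambda\big(\psi(u)+\int_0^\infty\psi(u+v)\psi(v)\,dv\big)$. For exponential $\psi$ this is
$$c(u)=\Lambda\,\alpha_1\Big(1+\frac{\alpha_1}{2(\beta_1-\alpha_1)}\Big)e^{-(\beta_1-\alpha_1)|u|}=\Lambda\,\frac{\alpha_1(2\beta_1-\alpha_1)}{2(\beta_1-\alpha_1)}\,e^{-(\beta_1-\alpha_1)|u|}.$$
The constant $\frac{\alpha_1(2\beta_1-\alpha_1)}{2(\beta_1-\alpha_1)}$ matches the structure of $C_1$. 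Substituting into the double integral with exponential $L$, the convolution of two exponentials with $e^{-(\beta_1-\alpha_1)|u|}$ gives a combination
$$\mathrm{Cov}(Y_t,Y_{t+u})=\Lambda\big(C_1e^{-(\beta_1-\alpha_1)|u|}+C_2e^{-\lambda_L|u|}\big).$$
The factor $\lambda_L^2-(\beta_1-\alpha_1)^2$ arises from partial fractions, and $C_2$ collects the $e^{-\lambda_L|u|}$ contributions of both terms, giving the stated form. Finally,
$$\mathrm{Var}(Y^\delta_i)=2\int_0^\delta(\delta-u)\,r(u)\,du,\qquad \mathrm{Cov}(Y_i^\delta,Y_j^\delta)=\int_0^\delta\!\!\int_0^\delta r\big((|j-i|-1)\delta+\delta+b-a\big)\,da\,db,$$
where $r(u)$ is the autocovariance above. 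These integrals of exponentials directly yield the displayed formulas. This bookkeeping is routine but error-prone, and I expect it to be the main obstacle, specifically tracking $C_2$ consistently.

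\textbf{Step 3: BL and NS.} Step 1 uses only the rate and reduced covariance density of $N$, so it suffices to compute these for BL and NS and match them.
\emph{BL.} Storms arrive at rate $\mu_{BL}$ and each produces on average $\nu_{BL}/\gamma_{BL}$ cells plus its origin, so the rate is $\mu_{BL}(1+\nu_{BL}/\gamma_{BL})$. By the Poisson cluster formula, $c(u)$ equals $\mu_{BL}$ times the second factorial moment density of a cluster. For cell pairs within one storm, both the origin–cell and cell–cell pairs give terms proportional to $e^{-\gamma_{BL}|u|}$. Explicitly, $c(u)=\mu_{BL}\big(\nu_{BL}+\nu_{BL}^2/(2\gamma_{BL})\big)e^{-\gamma_{BL}|u|}$, using $\E\big[\int\!\!\int_{\text{life}}\cdots\big]$ with exponential lifetime.
\emph{NS.} The rate is $\mu_{NS}\nu_{NS}$. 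For Poisson $C$, $\E[C(C-1)]=\nu_{NS}^2$, and the displacement difference of two exponentials is Laplace-distributed, so $c(u)=\mu_{NS}\nu_{NS}^2\frac{\gamma_{NS}}{2}e^{-\gamma_{NS}|u|}$.
Equating rates, decay rates ($\gamma=\beta_1-\alpha_1$) and prefactors with Step 2 gives algebraic equations, which I will verify are solved by the stated parametrisations. Since $\mathrm{Var}$ and $\mathrm{Cov}$ of $Y^\delta$ depend on $N$ only through $(\Lambda,c)$, the formulas transfer.
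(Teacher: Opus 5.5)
Your strategy is the paper's, carried out in the time domain instead of the frequency domain. The paper also reduces $\mathrm{Var}(Y_i^\delta)$ and $\mathrm{Cov}(Y_i^\delta,Y_j^\delta)$ to $\mathrm{Cov}(Y_0,Y_s)$ and obtains that from the rate and Bartlett spectrum of $N$. For BL and NS it replaces $|1-\mathcal F\varphi(\omega)|^{-2}-1$ by an explicit factor $p(\omega)$, and your reduced covariance density $c$ is $\Lambda$ times the inverse Fourier transform of that factor. Several of your steps are correct:
\begin{itemize}
\item the Hawkes density $c(u)=\Lambda a\,e^{-(\beta_1-\alpha_1)|u|}$ with $a=\frac{\alpha_1(2\beta_1-\alpha_1)}{2(\beta_1-\alpha_1)}$;
\item the resulting $C_1,C_2$ and the aggregation integrals;
\item the NS computation, which returns the stated $\nu_{NS}$.
\end{itemize}

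The BL step has a wrong constant, and with it the verification you announce would fail. Conditionally on the storm lifetime $E\sim\mathrm{Exp}(\gamma_{BL})$, the cells form a Poisson process of rate $\nu_{BL}$ on $(0,E)$. Ordered cell--cell pairs at lag $u$ therefore have density $\nu_{BL}^2\,\mathbb{E}[(E-|u|)_+]=\nu_{BL}^2e^{-\gamma_{BL}|u|}/\gamma_{BL}$, not $\nu_{BL}^2e^{-\gamma_{BL}|u|}/(2\gamma_{BL})$. Equivalently, by memorylessness, each of the on average $1+\nu_{BL}/\gamma_{BL}$ points of a storm, origin included, sees later cells at rate $\nu_{BL}e^{-\gamma_{BL}u}$. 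Hence
\[
c_{BL}(u)=\mu_{BL}\Big(1+\frac{\nu_{BL}}{\gamma_{BL}}\Big)\nu_{BL}\,e^{-\gamma_{BL}|u|},
\]
which is the time-domain form of the paper's $p(\omega)=2\nu_{BL}\gamma_{BL}/(\gamma_{BL}^2+\omega^2)$ with $\Lambda=\mu_{BL}(1+\nu_{BL}/\gamma_{BL})$.

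With the corrected density, matching rate, decay and prefactor gives:
\begin{itemize}
\item $\gamma_{BL}=\beta_1-\alpha_1$;
\item $\mu_{BL}(1+\nu_{BL}/\gamma_{BL})=\Lambda$;
\item $\nu_{BL}=a=\frac{\alpha_1(1-\alpha_1/(2\beta_1))}{1-\alpha_1/\beta_1}$, exactly as stated.
\end{itemize}
With your constant, the prefactor equation would instead read $\nu_{BL}\big(1+\nu_{BL}/(2\gamma_{BL})\big)=a\big(1+\nu_{BL}/\gamma_{BL}\big)$. The stated value $\nu_{BL}=a$ does not satisfy this, so as written your argument cannot confirm the BL part of the proposition. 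Fixing the cell--cell term repairs it.
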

The proof of Proposition \ref{prop: expo second-order} is given in Appendix \ref{app: proof prop expo second-order}. As an important consequence, Proposition \ref{prop: expo second-order} establishes that any conclusion drawn from an empirical method based on first and second-order statistics in the BL or NS models can equivalently be derived from a Hawkes process approach with an exponential kernel.

\subsubsection*{Critical and heavy-tailed Hawkes processes} 

Hawkes processes modelling rain cell arrivals can be connected in a unique way to large time scale via a renormalisation argument combined with a notion of criticality for heavy-tailed reproduction kernels. We consider a Hawkes process over a time interval $[0,T]$ and we are interested in its behaviour as $T$ becomes large. We allow its parameters to depend on $T$ in order to obtain non-trivial large time behaviour up to appropriate normalisation.

\begin{defi}[Critical and heavy-tailed Hawkes processes] \label{def: Critical linear Hawkes processes}
A linear Hawkes process $(N_t)_{t \in [0,T]}$ with intensity of the form \eqref{eq: def intensity Hawkes} and kernel  $\varphi(t) = \gamma(1+\beta t)^{-(1+\alpha)}$, with $\gamma, \beta >0$ that may depend on $T$ and $\alpha \in (0,1)$ independent of $T$ is asymptotically critical and heavy-tailed if 
$$\|\varphi\|_{1} = a_T < 1,$$
with
\begin{equation} \label{eq: kernel expansion}
a_T = 1-c_1T^{-\alpha}+o(T^{-\alpha}),
\end{equation}
for some $c_1>0$ (independent of $T$) and 
\begin{equation} \label{eq: baseline expansion}
\mu^{T} = c_2T^{\alpha-1}+o(T^{\alpha-1}),
\end{equation}
for some $c_2 >0$ (independent of $T$).
\end{defi}
As mentioned in the introduction, a critical and heavy-tailed Hawkes process in the sense of Definition \ref{def: Critical linear Hawkes processes} is the gateway to a fractional process behaviour under a scaling limit. It has a macroscopic rough scaling limit in large time scales that contains a trace of the criticality and the heavy-tail index, as developed in Section \ref{sec: connecting} and empirically confirmed by the macroscopic study in Section \ref{sec: rain is rough at large scales}.\\

In the rest of the section, we develop new statistical tools for Hawkes based rainfall models at small time scales; we then analyse our small time scale data.  


\subsection{Two inference methods at small time scales} \label{sec: inference small scales} \label{sec: stat methodo micro}

We develop in this section two different inference methods for estimating $(\mu, \varphi)$ in a linear Hawkes process based on indirect measurements of accumulated rainfall
$$Y_1^\delta, Y_2^\delta, \ldots, Y_k^\delta, \ldots$$
over the time horizon $[0,T_{\text{micro}}]$. The sample size is $n = \lfloor T_{\text{micro}}/\delta \rfloor$.
We model our data by
\begin{equation} \label{eq: data micro}
Y_k^\delta = \int_{(k-1)\delta}^{k\delta}Y_s\,ds,\;\;k = 1,\ldots, n= \lfloor T_{\text{micro}}/\delta \rfloor,
\end{equation}
with $(Y_t)_{t \geq 0}$ as in  \eqref{eq:intensity_rain} with a Hawkes process as rain cell arrivals generator. The first method is based on spectral estimation, see, among others, \cite{Chandler1997} for BL and NS models. Our extension to Hawkes processes governing rain cells arrivals is new and builds on the recent paper of \cite{CheyssonLang2022}. The second method is based on a penalised moment approach across scales. It is classical for parameter estimation in rainfall models, where variants of the moment formulations can be used. Here we take the same contrast as in \cite{wei2024}. This is the historical method of \cite{RodriguezIturbe1987}. However, it is usually implemented for a number of time scales that is much smaller than the one we take\footnote{Usually, 5 minutes, 1 hour, 6 hours, and 24 hours. In~\cite{RodriguezIturbe1987}, only the 6-hour and 12-hour aggregation scales are used.}.\\

\medskip
In the following, we estimate $\|\varphi\|_{1}$ and $\alpha$ in the model $(\mu, \varphi)$ from data \eqref{eq: data micro}, with 
\begin{equation} \label{eq: power-law}
\varphi(t) = \frac{\gamma}{(1+\beta t)^{\alpha+1}}.
\end{equation}
We approximate the power-law kernel \eqref{eq: power-law} by a sum of exponential functions with power-law weights
$$\varphi(t) = \sum_{i = 1}^d \alpha_i \exp(-\beta_i t),\;\;\alpha_i,\beta_i >0,\;\; \sum_{i = 1}^d \alpha_i/\beta_i < 1,$$
as in \cite{BochudChallet2007, HardimanBercotBouchaud2013}. More specifically, we take
\begin{equation} \label{eq: power-law approximation}
\varphi(t) = \frac{\eta}{Z}\sum_{i = 0}^{M-1} \frac{1}{(\tau m^i)^{1+\alpha}}\exp\Big(-\frac{t}{\tau m^i}\Big),
\end{equation}
where $Z$ is a normalizing constant such that $\|\varphi\|_{1}=\eta$ and $\alpha$ plays the same role as in \eqref{eq: power-law}. Moreover, we assume that the $(I_i)_{i \geq 1}$ and the $(L_i)_{i \geq 1}$ are exponentially distributed, with respective parameters $\lambda_I$ and $\lambda_L$. This choice is common in the literature, see for instance \cite{RodriguezIturbe1987}. The model thus has eight parameters: the baseline $\mu$, the kernel parameters $(\alpha, \eta, \tau, m, M)$, and the rain cell parameters $(\lambda_I, \lambda_L)$.

\medskip
For both methods, we further need to consider a stationary version of $(N_t)_{t \geq 0}$, see Appendix \ref{sec: stationary assumption} for a rigorous clarification of this notion.

\subsubsection*{A spectral inference method}

In turn, the process $(Y_t)_{t \geq 0}$ has a stationary version $(Y_t)_{t \in \R}$ with Fourier transform defined via
$$\mathcal F_{Y}(\omega) = \int_{\R} \mathrm{Cov}(Y_0, Y_t)\exp(-\iota \omega t)dt$$
with $\iota^2=-1$. Furthermore, define $\mathcal F(\varphi)(\omega) = \int_{\R}\varphi(t)\exp(-\iota\omega t)dt$.  We have the following proposition.
 \begin{prop} \label{prop: fourier spectral}
Let $(Y_t)_{t \in \R}$ be a stationary version of \eqref{eq:intensity_rain}, as constructed in Appendix \ref{sec: stationary assumption}, Equation \eqref{eq: def models spectral stat}, driven by a stationary linear Hawkes process governing rain-cell arrivals and such that $\E[L] < \infty$ and  $\E[I^2]< \infty$. We have
\begin{align*}
\mathcal F_{Y}(\omega) & =2\Lambda\E[I^2]\frac{1 - \mathrm{Re}(\E[\exp(\iota \omega L)])}{\omega^2}\\
&+\E[I]^2\Lambda\frac{ |1 -\E[\exp(\iota\omega L)]|^2}{\omega^2}\Big(\frac{1}{|1 - \mathcal F(\varphi)(\omega)|^2}-1\Big),
\end{align*}
with
$\Lambda = \mu(1-\|\varphi\|_1)^{-1}$. 
\end{prop}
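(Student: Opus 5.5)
We write $Y_t=\int_{\R}\int g(t-s,\ell,x)\,\bar N(ds,d\ell,dx)$, where $\bar N$ is the marked point process of rain-cell arrivals with i.i.d. marks $(L_i,I_i)$ independent of the arrivals, and $g(u,\ell,x)=x\,\mathbf 1_{\{0\le u<\ell\}}$. The covariance of such a shot-noise functional splits into a diagonal part (a single cell contributing at both times) and an off-diagonal part (two distinct cells). Using the stationary Hawkes second-order structure, the reduced covariance measure of $N$ is $\mathrm{Cov}(dN_0,dN_t)=\Lambda\delta_0(dt)\,dt+\Lambda\,\psi_c(t)\,dt$ with $\Lambda=\mu(1-\|\varphi\|_1)^{-1}$. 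Here the smooth part has Fourier transform
$$\mathcal F(\psi_c)(\omega)=\frac{1}{|1-\mathcal F(\varphi)(\omega)|^2}-1,$$
which is the Bartlett spectrum of a Hawkes process \cite{HawkesOakes1974}. With this in hand, I would compute
$$\mathrm{Cov}(Y_0,Y_t)=\Lambda\,\E\Big[I^2\!\int\mathbf 1_{\{0\le -s<L\}}\mathbf 1_{\{0\le t-s<L\}}ds\Big]+\Lambda\,\E[I]^2\!\iint h(-s)\,h(t-s')\,\psi_c(s'-s)\,ds\,ds',$$
where $h(u)=\PP(L>u)\mathbf 1_{\{u\ge0\}}$. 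This uses independence of the marks from the arrivals and from each other, so distinct cells factor into $\E[I]^2$ and the marginal survival function of $L$.

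Next I take Fourier transforms. The diagonal term is $\Lambda\E[I^2]\,\E[(\text{autocorrelation of }\mathbf 1_{[0,L)})]$, and its transform is $\Lambda\E[I^2]\,\E|\widehat{\mathbf 1_{[0,L)}}(\omega)|^2$. Since $\widehat{\mathbf 1_{[0,L)}}(\omega)=(1-e^{-\iota\omega L})/(\iota\omega)$, we get $|\cdot|^2=2(1-\cos\omega L)/\omega^2$. Taking expectations gives $2\Lambda\E[I^2](1-\mathrm{Re}\,\E[e^{\iota\omega L}])/\omega^2$, the first term. The off-diagonal term is a convolution $h\ast\check h\ast\psi_c$. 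Its transform is $|\hat h(\omega)|^2\,\mathcal F(\psi_c)(\omega)$, and by Fubini $\hat h(\omega)=\E[\widehat{\mathbf 1_{[0,L)}}(\omega)]=(1-\E[e^{-\iota\omega L}])/(\iota\omega)$. Hence $|\hat h|^2=|1-\E[e^{\iota\omega L}]|^2/\omega^2$, giving the second term.

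The integrability conditions need checking. $\E[L]<\infty$ makes $h\in L^1$, and $\E[I^2]<\infty$ together with $\E[L]<\infty$ makes the diagonal covariance integrable, since $\int\E[I^2(L-|t|)_+]dt=\E[I^2]\E[L^2]$. This is problematic if only $\E[L]<\infty$. I would handle it by interpreting $\mathcal F_Y$ in the distributional or $L^2$ sense, noting that the right-hand side stays bounded near $\omega=0$ only under stronger moments, and by arguing via truncation of $L$ and monotone limits.

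The main obstacle is rigorously justifying the Hawkes reduced covariance density and its spectrum for the stationary version constructed in Appendix \ref{sec: stationary assumption}. I would obtain it through the cluster (Poisson branching) representation: $\psi_c$ is built from the renewal-type resolvent $R=\sum_{n\ge1}\varphi^{\ast n}$, with $\psi_c=R+\check R+R\ast\check R$. Then $1+\mathcal F(\psi_c)=|1+\hat R|^2=|1-\hat\varphi|^{-2}$, which uses $\|\varphi\|_1<1$ for convergence of the Neumann series.
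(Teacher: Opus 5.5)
Your argument is correct. It uses the same two ingredients as the paper: the Hawkes Bartlett spectrum $\Lambda|1-\mathcal F\varphi(\omega)|^{-2}$ and the transform $(1-e^{-\iota\omega L})/(\iota\omega)$ of the rectangular pulse. What differs is the bookkeeping.

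The paper applies Br\'emaud's spectral formula for marked shot noise (Theorem 9.4.1 of \cite{bremaud2020}) as a black box. This splits $\mathcal F_Y$ into $\Lambda|\E[\mathcal F h_0]|^2/|1-\mathcal F\varphi|^2$ plus a mark-fluctuation term $\Lambda\,\mathrm{Var}(\mathcal F h_0)$, and the paper then recombines the two to produce the $-1$ in the statement.

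You work in the time domain instead, via Campbell's formula and the second factorial moment measure, splitting into same-cell and distinct-cell contributions. You also obtain the Hawkes covariance measure $\Lambda(\delta_0+R)\ast(\delta_0+\check R)$ from the resolvent rather than citing it. Your two pieces are exactly the two terms of the statement, so no recombination is needed. Your route is more self-contained, and it isolates the single term in which $\E[L^2]$ enters, which the paper's ``by Fourier inversion'' step passes over.

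On that integrability point, the correct fix is the spectral-density (distributional) reading. It is legitimate because the right-hand side is integrable under $\E[L]<\infty$ alone, since $\int_{\R}(1-\cos x)x^{-2}\,dx=\pi$. The $L^2$ reading you also mention can fail: $t\mapsto\int_{|t|}^\infty\PP(L>u)\,du$ need not be square-integrable (take $\PP(L>u)=(1+u)^{-3/2}$, for which $\E[L]=2$).
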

The proof is given in Appendix \ref{app: proof prop fourier}, where we also give analogous results for the BL and NS models, see also \cite{Chandler1997}. The aggregated stationary continuous time process $(Y_t^\delta)_{t \in \R}$, defined by $Y_t^\delta  = \int_{t\delta}^{(t+1)\delta} Y_s ds$, which coincides with the discrete time process $(Y_k^\delta)_{k=1,\ldots, n}$ at integer times $t=k-1$, has Fourier transform
$$\mathcal F_{Y}^{\delta}(\omega) = \delta \mathcal F_Y(\delta^{-1}\omega)\left(\frac{\sin(\omega / 2)}{\omega / 2}\right)^2.$$

We use the parametrization 
\begin{equation} \label{eq: parametrisation}
\vartheta  = \Big(\alpha, \eta, \tau, m, M, \Lambda \lambda_I^{-2}= \frac{\mu \lambda_I^{-2}}{1-\|\varphi\|_{1}}, \lambda_L\Big),
\end{equation}
to avoid identifiability issues, as will become transparent below.\\

From the explicit representation provided by Proposition \ref{prop: fourier spectral}, we construct an estimator $\widehat \vartheta_n$ by minimising the spectral score 
\begin{equation} \label{eq:spectral_estim}
    \vartheta \mapsto \mathcal S_n(\vartheta) = \frac{1}{4\pi} \int_{-\pi}^{\pi} \left(\log f_{\vartheta}(\omega) + \frac{I_n(\omega)}{f_{\vartheta}(\omega)}\right)d\omega, 
\end{equation}
where $I_n(\omega)$ denotes the periodogram of $(Y^{\delta}_k)_{k=1,\ldots, n}$ defined by
\[I_n(\omega) = (2\pi n)^{-1} \big|\sum_{k=1}^n \big(Y^{\delta}_k-\overline{Y}_n^\delta\big) e^{-ik\omega}\big|^2,\;\; \overline{Y}_n^\delta = \frac{1}{n}\sum_{k=1}^n Y^{\delta}_k,\] 
and
$$f_{\vartheta}(\omega) = \sum_{k \in \mathbb{Z}} \mathcal F_{Y}^{\delta}(\omega + 2k\pi).$$
Since $L$ is exponentially distributed with parameter $\lambda_L$, we have $\mathbb{E}[L^2] = 2\mathbb{E}[L]^2 = 2\lambda_L^{-2}$ and  
\[\frac{1 - \mathrm{Re}(\E[\exp(i\omega L)])}{\omega^2} =\frac{|1 -\E[\exp(i\omega L)]|^2}{\omega^2} = \frac{1}{\lambda_L^2+  \omega^2}.
\]
The spectral density identifies only the product $\Lambda\mathbb E[I^2]$, or equivalently $\Lambda\mathbb E[I]^2$ under the exponential assumption on $I$, which leads to an identifiability issue. To overcome this, we estimate $\Lambda \mathbb{E}[I]^2 = \Lambda \lambda_I^{-2}$ in the optimisation problem~\eqref{eq:spectral_estim} using the parametrisation~\eqref{eq: parametrisation}. Combining the estimate of $\Lambda\lambda_I^{-2}$ with the additional moment equation
\[
\mathbb{E}[Y_k^{\delta}]
=
\frac{\Lambda\delta}{\lambda_I\lambda_L}
\]
allows us to estimate $\Lambda$ and $\lambda_I$ separately. We then recover
\[
\mu=\Lambda\bigl(1-\|\varphi\|_1\bigr).
\]

\subsubsection*{A second-order contrast method across scales}

 We have data $(Y_k^\delta)_{k=1,\ldots, n}$ from \eqref{eq: data micro}.
The parameters of the model are thus 
\begin{equation} \label{eq: parameter multiscale}
\vartheta = \big(\alpha, \eta, \tau, m, M, \mu, \lambda_I, \lambda_L\big).
\end{equation}
We have explicit forms of second-order statistics of the time series $(Y_k^\delta)_{k=1,\ldots, n}$ when $\varphi$ is parametrised by sum of exponential functions with weights mimicking a power-law. In turn, we obtain simple and explicit contrast functions with several numerical and stability advantages over the spectral method.\\

More precisely, let $\alpha_i, \beta_i >0$,  $i=1,\ldots, d$, be parameters satisfying $\sum_{i = 1}^d \frac{\alpha_i}{\beta_i} < 1$ with all $\beta_i$ distinct. Let $p_1,\ldots,p_d>0$ denote the distinct roots of the polynomial
$$P(\omega) = \prod_{i=1}^d (\beta_i-\omega) - \sum_{j=1}^d \alpha_j \prod_{\substack{i=1\\i\neq j}}^d (\beta_i-\omega),$$
$$A_i =
 -\frac{\prod_{j=1}^d (\beta_j^2-p_i^2)}{P(-p_i)P'(p_i)},$$
where $P'$ denotes the derivative of $P$, and
$$C_i = \frac{\mathbb{E}[I]^2 A_i}{\lambda_L^2-p_i^2},\;i = 1,\ldots, d,\;\;C_{d+1} = \lambda_L^{-1}\Big(\mathbb{E}[I^2] -\mathbb{E}[I]^2  \sum_{i=1}^d \frac{A_ip_i}{\lambda_L^2-p_i^2}\Big),$$
with $\Lambda = \frac{\mu}{1-\|\varphi\|_{1}}=
\frac{\mu}{1-\sum_{i=1}^d\alpha_i/\beta_i}.$ We have the following proposition.
\begin{prop} \label{prop: multiscale contrast}
Let $(Y_t)_{t \in \R}$ be a stationary version of \eqref{eq:intensity_rain}, as constructed in Appendix \ref{sec: stationary assumption}, Equation \eqref{eq: def models spectral stat}, driven by a stationary linear Hawkes process governing rain-cell arrivals with kernel $\varphi(t) = \sum_{i=1}^d \alpha_i e^{-\beta_i t}$, with $\alpha_i>0$ and the $\beta_i > 0$ all different, satisfying the stability condition $\sum_{i=1}^d \frac{\alpha_i}{\beta_i} < 1$. Assume further that the rain-cell durations $(L_i)_{i\geq1}$ are exponentially distributed with parameter $\lambda_L>0$, $\lambda_L \neq p_i$ for $i=1,\ldots,d$, and that $\mathbb E[I^2]<\infty$. For arbitrary $h >0$ and integer $k \geq 1$, let $Y_k^h = \int_{(k-1)h}^{kh}Y_s\,ds$. We have 
\[
\mathrm{Var}(Y_k^{h})=
2h \Lambda \sum_{i=1}^d \frac{C_i}{p_i}\Big(1 - \frac{1-\exp(-p_ih)}{p_ih}\Big)+\frac{2h \Lambda  C_{d+1}}{\lambda_L} \Big(1 - \frac{1-\exp(-\lambda_Lh)}{\lambda_Lh}\Big), 
\]
and for $|k-k'| \geq 1$, we have
\begin{align*}
\mathrm{Cov}(Y_k^{h}, Y_{k'}^{h}) & =  \Lambda \sum_{i=1}^d \frac{C_i}{p_i^2} \exp(-p_i(|k-k'|-1)h)\big(1 - \exp(-p_i h)\big)^2
\\
&+ \frac{ \Lambda C_{d+1}}{\lambda_L^2} \exp(-\lambda_L(|k-k'|-1)h)\big(1 - \exp(-\lambda_Lh) \big)^2.
\end{align*}
\end{prop}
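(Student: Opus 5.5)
The proof proceeds in three stages. First I obtain the covariance function $c(t)=\mathrm{Cov}(Y_0,Y_t)$ of the stationary continuous-time intensity as a sum of exponentials. Then I check its value at $0$, and finally I integrate $c$ over the aggregation windows.

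The starting point is Proposition \ref{prop: fourier spectral} with $L$ exponential. Then $\frac{1-\mathrm{Re}(\E[e^{\iota\omega L}])}{\omega^2}=\frac{|1-\E[e^{\iota\omega L}]|^2}{\omega^2}=\frac{1}{\lambda_L^2+\omega^2}$, so
\[
\mathcal F_Y(\omega)=\frac{\Lambda}{\lambda_L^2+\omega^2}\Big(2\E[I^2]+\E[I]^2\Big(\frac{1}{|1-\mathcal F(\varphi)(\omega)|^2}-1\Big)\Big).
\]
For $\varphi(t)=\sum_i\alpha_i e^{-\beta_i t}$ we have $\mathcal F(\varphi)(\omega)=\sum_i \alpha_i/(\beta_i+\iota\omega)$. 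Writing $z=-\iota\omega$ gives
\[
1-\mathcal F(\varphi)(\omega)=\frac{P(z)}{\prod_j(\beta_j-z)}.
\]
Hence $\frac{1}{|1-\mathcal F(\varphi)|^2}-1$ is a rational function of $\omega^2$, namely
\[
R(\omega^2)=\frac{\prod_j(\beta_j^2+\omega^2)}{P(z)P(-z)}-1 .
\]
Its poles in $\omega$ sit at $\omega=\pm\iota p_i$, where the $p_i$ are the roots of $P$ (distinct and positive, as assumed). Since numerator and denominator have equal degree and the leading coefficients match, $R$ vanishes at infinity. I therefore decompose it into partial fractions in $\omega^2$ as
\[
R=\sum_i \frac{2A_ip_i}{p_i^2+\omega^2},
\]
identifying $A_i$ by taking the residue at $z=p_i$. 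This is where the formula $A_i=-\prod_j(\beta_j^2-p_i^2)/(P(-p_i)P'(p_i))$ should emerge, and I must track the sign and factor-$2$ conventions carefully.

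Next I multiply by $1/(\lambda_L^2+\omega^2)$. Using $\lambda_L\neq p_i$, I split $\frac{1}{(\lambda_L^2+\omega^2)(p_i^2+\omega^2)}$ into the two simple pieces $\frac{1}{\lambda_L^2-p_i^2}\big(\frac{1}{p_i^2+\omega^2}-\frac{1}{\lambda_L^2+\omega^2}\big)$. Inverting with $\int_{\R} e^{-p|t|}e^{-\iota\omega t}\,dt=\frac{2p}{p^2+\omega^2}$ then yields
\[
c(t)=\Lambda\Big(\sum_{i=1}^d C_i e^{-p_i|t|}+C_{d+1}e^{-\lambda_L|t|}\Big).
\]
Collecting the coefficient of $e^{-\lambda_L|t|}$ from the $2\E[I^2]$ term and from the subtracted pieces gives exactly $C_{d+1}=\lambda_L^{-1}\big(\E[I^2]-\E[I]^2\sum_i A_ip_i/(\lambda_L^2-p_i^2)\big)$. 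As a sanity check I will verify that $c(0)$ agrees with the direct computation $\mathrm{Var}(Y_0)=\Lambda\E[I^2]\E[L]+\E[I]^2\cdot(\text{cluster term})$.

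It remains to aggregate. For any kernel $e^{-p|t|}$ I use the two elementary integrals
\[
\int_0^h\!\!\int_0^h e^{-p|s-u|}\,ds\,du=\frac{2h}{p}\Big(1-\frac{1-e^{-ph}}{ph}\Big),
\]
and, for $|k-k'|=m\geq1$,
\[
\int_{(k-1)h}^{kh}\!\!\int_{(k'-1)h}^{k'h}e^{-p|s-u|}\,ds\,du=\frac{e^{-p(m-1)h}(1-e^{-ph})^2}{p^2}.
\]
Because $\mathrm{Var}(Y_k^h)$ and $\mathrm{Cov}(Y_k^h,Y_{k'}^h)$ are these double integrals of $c$, linearity immediately gives the stated formulas.

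The main obstacle is the partial-fraction step. I need to justify that the poles come only from the roots of $P$ (none are cancelled by the numerator), and I need to get the precise constant $A_i$ together with its sign. I will first sanity-check against the case $d=1$, where $p_1=\beta_1-\alpha_1$. The result must reproduce $C_1$ from Proposition \ref{prop: expo second-order}, so that $A_1=\alpha_1(2\beta_1-\alpha_1)/(2(\beta_1-\alpha_1))$.
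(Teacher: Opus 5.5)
Your proposal is correct and follows essentially the same route as the paper. Both start from the spectral density of Proposition \ref{prop: fourier spectral}, decompose $|1-\mathcal F(\varphi)(\omega)|^{-2}$ into partial fractions with poles at $\pm\iota p_i$ (your residue at $z=p_i$ gives exactly the paper's $A_i$, and your $d=1$ check is consistent), split a second time against $(\lambda_L^2+\omega^2)^{-1}$, invert to a sum of exponentials, and finish with the elementary aggregation integrals. The one ingredient the paper proves that you take as given is that $P$ has $d$ distinct positive real roots: ordering $\beta_1<\cdots<\beta_d$ with $\beta_0=0$, the function $1-\sum_j\alpha_j/(\beta_j-\omega)$ is strictly decreasing with exactly one zero in each $(\beta_{i-1},\beta_i)$, and $P>0$ on $(-\infty,0]$; this also yields $P(-p_i)\neq 0$ and $p_i\neq\beta_j$, which settles the non-cancellation you flagged.
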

The proof is given in Appendix \ref{app: proof of multiscale contrast}. The remarkable simplicity of Proposition \ref{prop: multiscale contrast} enables us to build a contrast across scales based on second-order statistics by minimizing
\begin{align}
\vartheta \mapsto \mathcal L_n(\vartheta) = 
&\sum_{h \in \mathcal{H}} \Big(w_{1}(h)\big(E_n[Y_i^h]-\mathbb{E}_{\vartheta}[Y_i^{h}] \big)^2 +w_{2}(h)\Big(\frac{V_n(Y_i^h)^{1/2}}{E_n[Y_i^h]}-\frac{\mathrm{Var}_\vartheta(Y_i^h)^{1/2}}{\E_\vartheta[Y_i^h]}\Big)^2 \nonumber \\
&+w_3(h)\Big(\frac{C_n(Y_i^h,Y_{i+1}^h)}{V_n(Y_i^h)}-\frac{\mathrm{Cov}_{\vartheta}(Y_i^h,Y_{i+1}^h)}{\mathrm{Var}_{\vartheta}(Y_i^h)}\Big)^2\Big), \label{eq: def contraste}
\end{align}
where $\mathcal H$ consists of a grid of the form $i\delta$ for the largest set of indices $i \geq 1$ that are observable thanks to the aggregation property $Y_k^h =
\sum_{l=(k-1)i+1}^{ki}Y_l^\delta$ as soon as $h=i\delta$. The $w_i(h)$ are non-negative weights\footnote{The weights are chosen with the method used in~\cite{kaczmarska2014}: for a given statistic, the associated weight is equal to the inverse of the empirical variance of the same statistic, calculated each year. In order not to penalise some timescales (typically the daily timescale where there are fewer data), we consider the same weights for all scales for a given statistic (mean, coefficient of variation, autoregressive coefficient), calculated as the mean of the different initial weights (inverse of the variance) over the different timescales.}, the operators $\E_\vartheta$, $\mathrm{Var}_\vartheta$
and $\mathrm{Cov}_{\vartheta}$ denote expectation, variance and covariances of the $Y_k^h$ computed for the parameter $\vartheta$ defined in \eqref{eq: parameter multiscale}, obtained thanks to Proposition \ref{prop: multiscale contrast}, and $E_n$, $V_n$, and $C_n$ denote their respective empirical counterparts.


\subsection{Statistical evidence of criticality and heavy-tails at small time scales} \label{sec: stat criticality}

We analyse two categories of datasets:
\begin{itemize}
\item The Bochum weather station data\footnote{Bochum is a city in the Ruhr area in Germany, Central Europe. The data were recorded from January 1931 to December 1999, using a Hellmann rain gauge. They are analysed in depth in \cite[Section 3.2]{kaczmarska2013} and are available at {\tt https://github.com/NTU-CompHydroMet-Lab/pyBL/blob/main/examples/data.zip}. The Bochum dataset is provided with the Python package pyBL~\cite{wei2024}.}. We have $\delta = 5$ minutes and $T_{\text{micro}} = 69$ years.

\item Data from four M\'et\'eo France weather stations\footnote{Available on~\url{https://meteo.data.gouv.fr/}. }:  Lille (M\'et\'eo France ID: 59343001), Marseille (M\'et\'eo France ID: 13054001), Strasbourg (M\'et\'eo France ID: 67124001), and Toulouse (M\'et\'eo France ID: 31069001). We have $\delta = 6$ minutes and $T_{\text{micro}} = 20$ years.
  \end{itemize}
 The coordinates of the weather stations are reported in Table~\ref{tab:median_MF}. \\
  

We implement both the spectral method and the second-order contrast method of Section \ref{sec: stat methodo micro} to estimate $\|\varphi\|_{1}$ and $\alpha$ for each month over 69 years, thereby accounting for basic seasonality. We build 95\% Monte-Carlo confidence intervals and estimate the standard deviations of the estimators from 100 repeated samples of data generated with the estimated parameters. We also compute an estimate of the statistical information $\mathbb{E}[N_T] \sim \frac{\mu T}{1-\|\varphi\|_{1}}$ for each combined month, where $\mu$ and $\|\varphi\|_{1}$ are replaced by our estimators. The results are displayed in 
Table \ref{table: second-order} for the second-order contrast method and in Table \ref{table: sub spectral} for the spectral approach.\\

\begin{table}[H]
\centering
\begin{tabular}{lllllll}
\toprule
Month & $\|\varphi\|_1$ & $\sigma_{\|\varphi\|_1}$ & $\alpha$ & $\sigma_{\alpha}$ & $n_{>0}$ & $\frac{\mu T}{1-\|\varphi\|_1}$ \\
\midrule
Jan & 0.971 & 0.004 & 0.531 & 0.042 & \num{6.50e+04} & \num{2.73e+05} \\
Feb & 0.930 & 0.007 & 0.682 & 0.062 & \num{5.24e+04} & \num{5.56e+04} \\
Mar & 0.928 & 0.008 & 0.578 & 0.042 & \num{5.25e+04} & \num{4.89e+04} \\
\midrule
Apr & 0.931 & 0.008 & 0.614 & 0.053 & \num{4.83e+04} & \num{6.30e+04} \\
May & 0.872 & 0.012 & 0.764 & 0.083 & \num{4.08e+04} & \num{2.56e+04} \\
Jun & 0.951 & 0.012 & 0.422 & 0.049 & \num{4.07e+04} & \num{1.54e+05} \\
\midrule
Jul & 0.776 & 0.035 & 0.678 & 0.841 & \num{3.89e+04} & \num{7.96e+03} \\
Aug & 0.842 & 0.016 & 0.407 & 0.053 & \num{3.49e+04} & \num{9.67e+03} \\
Sep & 0.856 & 0.012 & 0.646 & 0.074 & \num{3.78e+04} & \num{1.48e+04} \\
\midrule
Oct & 0.928 & 0.006 & 0.640 & 0.038 & \num{4.77e+04} & \num{4.65e+04} \\
Nov & 0.940 & 0.009 & 0.607 & 0.057 & \num{6.24e+04} & \num{6.25e+04} \\
Dec & 0.978 & 0.003 & 0.528 & 0.040 & \num{6.60e+04} & \num{4.03e+05} \\
\bottomrule
\end{tabular}
\caption{{\it {\small Parameter estimates for $\alpha$ and $\|\varphi\|_{1}$ for the Bochum station under the Hawkes model with the power-law approximation \eqref{eq: power-law approximation} based on the second-order contrast method, with standard deviation 
$\sigma_{\|\varphi\|_{1}}$ or $\sigma_\alpha$ based 
on 100 repeated simulations with estimated parameters. The number $n_{>0}$ indicates the number of non-zero data. The last column displays a proxy of the statistical information (in number of events) $\frac{\mu T}{1-\|\varphi\|_1}$, where $\mu$ and $\|\varphi\|_{1}$ are replaced by our estimators.}}}
 \label{table: second-order}
\end{table}

\begin{table}[H]
\centering
\begin{tabular}{lllllll}
\toprule
Month & $\|\varphi\|_1$ & $\sigma_{\|\varphi\|_1}$ & $\alpha$ & $\sigma_{\alpha}$ & $n_{>0}$ & $\frac{\mu T}{1-\|\varphi\|_1}$ \\
\midrule
Jan & 0.989 & 0.003 & 0.657 & 0.026 & \num{6.17e+04} & \num{3.98e+06} \\
Feb & 0.992 & 0.002 & 0.671 & 0.024 & \num{4.85e+04} & \num{8.15e+06} \\
Mar & 0.984 & 0.004 & 0.557 & 0.030 & \num{4.89e+04} & \num{1.67e+06} \\
\midrule
Apr & 0.978 & 0.004 & 0.506 & 0.031 & \num{4.65e+04} & \num{9.02e+05} \\
May & 0.903 & 0.013 & 0.751 & 0.056 & \num{3.93e+04} & \num{5.17e+04} \\
Jun & 0.961 & 0.013 & 0.622 & 0.065 & \num{3.83e+04} & \num{3.70e+05} \\
\midrule
Jul & 0.969 & 0.012 & 0.780 & 0.082 & \num{3.61e+04} & \num{7.37e+05} \\
{\color{red}Aug} & 0.902 & 0.007 & 1.535 & 1.156 & \num{3.37e+04} & \num{7.98e+04} \\
Sep & 0.976 & 0.008 & 0.628 & 0.048 & \num{3.65e+04} & \num{9.67e+05} \\
\midrule
Oct & 0.970 & 0.005 & 0.509 & 0.024 & \num{4.49e+04} & \num{2.52e+05} \\
Nov & 0.989 & 0.003 & 0.599 & 0.025 & \num{6.01e+04} & \num{3.47e+06} \\
Dec & 0.988 & 0.002 & 0.634 & 0.021 & \num{5.82e+04} & \num{2.98e+06} \\
\bottomrule
\end{tabular}\caption{{\it {\small Same experiment as in Table \ref{table: second-order}, except that the spectral method is used for parameter estimation instead of the second-order contrast method. Months shown in red indicate cases where the power-law Hawkes model does not achieve the lowest AIC score~\eqref{eq:aic} among the Hawkes model with an exponential kernel, the BL model, and the NS model.}}}
 \label{table: sub spectral}
\end{table}

A comparative analysis of Table \ref{table: second-order} and \ref{table: sub spectral} shows that, although both approaches yield consistent results, the spectral method tends to estimate $\|\varphi\|_{1}$ closer to one than the second-order contrast method. Also, it seems that criticality is statistically more pronounced for rainier periods of the year, as measured by the number of non-zero observations $n_{>0}$, which is of course not a surprise. From a physical standpoint, \cite{marani2003} explains that events are less correlated in summer, when precipitation is mainly associated with isolated convective phenomena, than in winter, when it is linked to synoptic-scale systems. The detailed results for the four M\'et\'eo France data are provided in Appendix~\ref{app:mf_micro_results} and lead to the same conclusions. Here, we only provide the median values for $\|\varphi\|_1$ and $\alpha$ across months for each station in Table~\ref{tab:median_MF}. 

\begin{table}[H]
\centering
\begin{tabular}{lllcccc}
\toprule
City &Lon &Lat & \multicolumn{2}{c}{Contrast} & \multicolumn{2}{c}{Spectral} \\
\cmidrule(lr){4-5} \cmidrule(lr){6-7}
& & & $\|\varphi\|_1$ & $\alpha$ & $\|\varphi\|_1$ & $\alpha$ \\
\midrule
Bochum & 7.2$^\circ$E & 51.5$^\circ$N & 0.929 & 0.611 & 0.978 & 0.628 \\
Lille & 3.1$^\circ$E & 50.6$^\circ$N & 0.926 & 0.484 & 0.923 & 0.877 \\
Marseille & 5.2$^\circ$E & 43.4$^\circ$N & 0.932 & 0.581 & 0.955 & 0.800 \\
\midrule
Strasbourg & 7.6$^\circ$E & 48.5$^\circ$N & 0.930 & 0.571 & 0.915 & 0.806 \\
Toulouse & 1.4$^\circ$E & 43.6$^\circ$N & 0.905 & 0.516 & 0.936 & 0.707 \\
\bottomrule
\end{tabular}
\caption{\label{tab:median_MF}\small \it Median across months of the estimates of $\|\varphi\|_1$ and $\alpha$ under Hawkes model with the power-law approximation \eqref{eq: power-law approximation}, obtained using the second-order contrast method and the spectral approach. For the second-order contrast method, we discard the very few months for which the considered model does not attain the lowest score in \eqref{eq:contrast} among the Hawkes model with an exponential kernel, the BL model, and the NS model. Similarly, for the spectral method, we discard the very few months for which the model does not attain the lowest AIC score in \eqref{eq:aic}.
}
\end{table}
We next compare the goodness-of-fit of several models in Figure~\ref{fig:score_moment} and Figure~\ref{fig:score_spectral}, for the following set of models: a Hawkes with an exponential kernel, a Hawkes with a power-law kernel approximated by a sum of multiscale exponentials, Bartlett-Lewis and Neyman-Scott (the last two viewed as benchmarks). We compute a goodness-of-fit criterion for each estimation method. For the spectral method, we define an AIC score by setting 
\begin{equation}
    \label{eq:aic}
2n \mathcal S_n(\widehat \vartheta_n)+2p,
\end{equation}
where $\widehat \vartheta_n$ minimises $\vartheta \mapsto \mathcal S_n(\vartheta)$ defined in \eqref{eq:spectral_estim} and $p$ is the dimension of the model: $p=5$ for the exponential Hawkes, Bartlett-Lewis, and Neyman-Scott models, while $p=8$ for the Hawkes model with a power-law kernel approximated by a sum of exponentials \eqref{eq: power-law approximation}. For the second-order contrast method, our goodness-of-fit criterion is simply
\begin{equation} \label{eq:contrast}
\mathcal L_n(\widehat \vartheta_n),
\end{equation}
where $\widehat \vartheta_n$ minimises  $\vartheta \mapsto \mathcal L_n(\vartheta)$ defined in \eqref{eq: def contraste}. Whereas the AIC is an information criterion that accounts for model complexity by penalising the number of parameters, thereby enabling a direct statistical comparison of competing models, the criterion for the second-order contrast method only measures the gain in model fit resulting from the inclusion of additional parameters. Because both criteria rely exclusively on first and second-order statistics, the Hawkes model with a simple exponential kernel should, in theory, achieve the same performance as the BL and NS models (see Proposition \ref{prop: expo second-order}). Consequently, any differences observed in practice can only be attributed to numerical optimisation effects.\\



\begin{figure}[H]
\begin{tabular}{ccc}
        \includegraphics[width=0.3\linewidth]{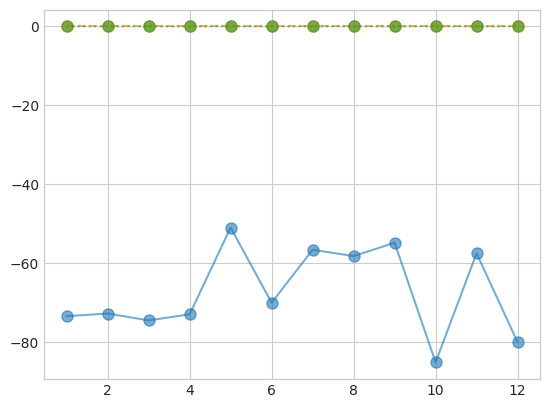}  &    \includegraphics[width=0.3\linewidth]{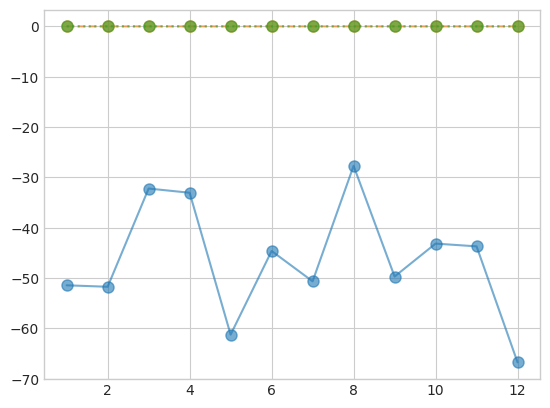}  &    \includegraphics[width=0.3\linewidth]{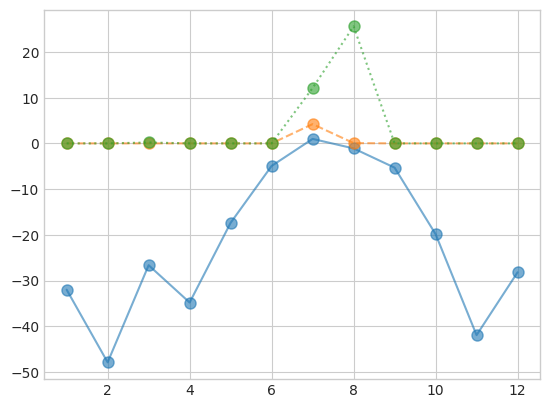} \\
   Bochum   &  Lille & Marseille \\
               \end{tabular}
               \centering
               \begin{tabular}{cccc}
     
               \includegraphics[width=0.3\linewidth]{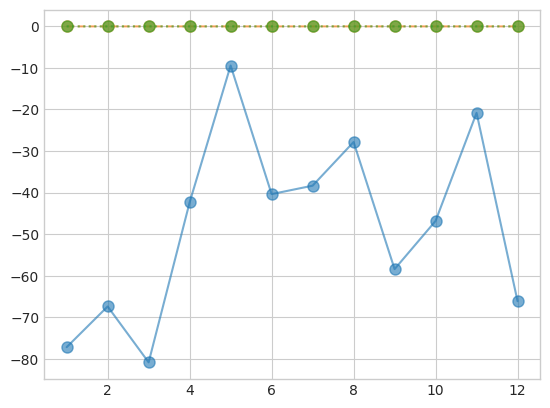}  &   &  \includegraphics[width=0.3\linewidth]{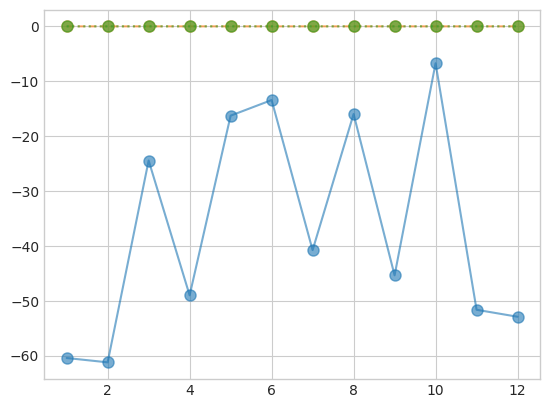} &\\
             Strasbourg & &Toulouse &
\end{tabular}
\caption{{\small {\it  Relative difference (monthly, \%) of the scores in the second-order contrast method across scales between:  Hawkes model with a power-law approximation kernel \eqref{eq: power-law approximation}, Bartlett-Lewis model, Neyman-Scott model, and the score obtained by the Hawkes fits for an exponential kernel for different weather stations. Blue: Hawkes with power-law approximation kernel \eqref{eq: power-law approximation}, Orange: Bartlett-Lewis, Green: Neyman-Scott. The Hawkes model with a simple exponential kernel has the same performance as BL or NS in theory, by Proposition \ref{prop: expo second-order}; any observed differences are therefore only due to numerical optimisation. The relative difference has been chosen to provide more readable results, since the values of $\mathcal{L}_n(\widehat \vartheta_n)$ can vary significantly from one month to another. This is not the case for the AIC score shown in Figure~\ref{fig:score_spectral}.}}}
    \label{fig:score_moment}
\end{figure}

\begin{figure}[H]
\begin{tabular}{ccc}
        \includegraphics[width=0.3\linewidth]{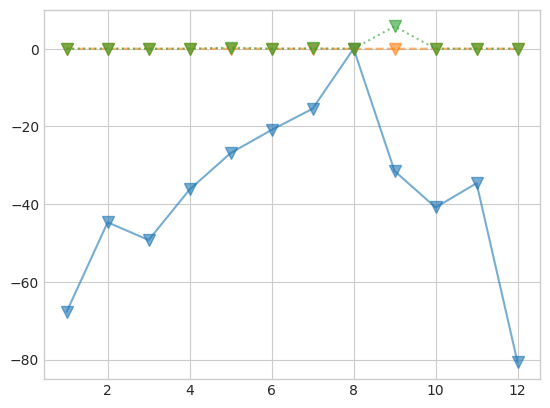}  &    \includegraphics[width=0.3\linewidth]{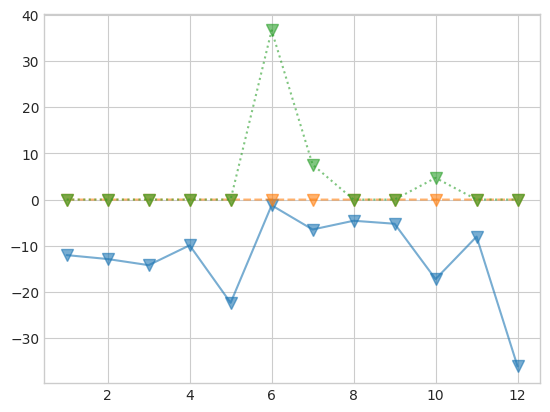}  &    \includegraphics[width=0.3\linewidth]{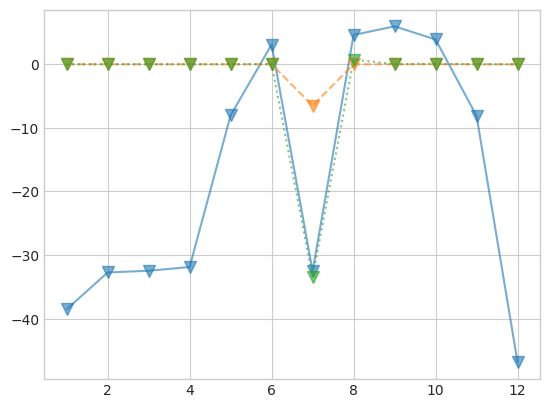} \\
   Bochum   &  Lille & Marseille \\
               \end{tabular}
               \centering
               \begin{tabular}{cccc}
     
               \includegraphics[width=0.3\linewidth]{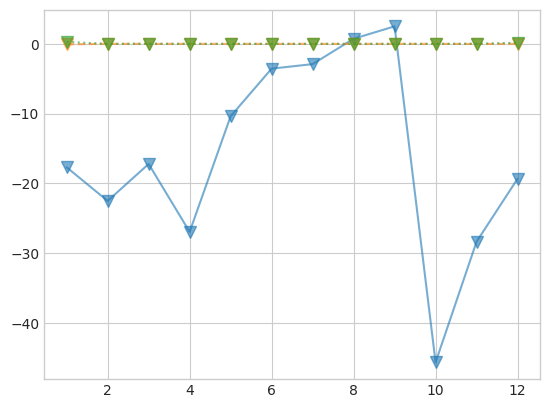}  &   &  \includegraphics[width=0.3\linewidth]{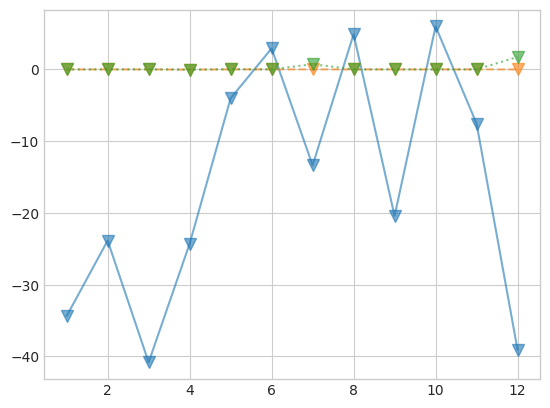} &\\
             Strasbourg & &Toulouse &
\end{tabular}
    \caption{{\small {\it Difference (monthly) of spectral AIC scores between: Hawkes model for the power-law approximation \eqref{eq: power-law approximation}, Bartlett-Lewis model, Neyman-Scott model model, and the AIC obtained by the Hawkes fits for an exponential kernel for different weather stations. Blue: Hawkes with power-law approximation kernel \eqref{eq: power-law approximation}, Orange: Bartlett-Lewis, Green: Neyman-Scott. The Hawkes model with a simple exponential kernel has the same performance as BL or NS in theory, by Proposition \ref{prop: expo second-order}; any observed differences are therefore only due to numerical optimisation.}}}
    \label{fig:score_spectral}
\end{figure}


The results consistently show that Hawkes models with a power-law approximation~\eqref{eq: power-law approximation} generally provide the best statistical fits\footnote{With respect to classical Neyman-Scott, Bartlett-Lewis, and the Hawkes model with an exponential kernel as far as second-order statistics are concerned.} for modelling arrivals of rain cells. They systematically achieve the best performance under the second-order contrast method (except for Marseille in June) and, in most cases, also minimise the AIC score under the spectral method. Thus, they significantly improve the representation of the second-order moments of the data while introducing only a few additional parameters and a limited increase in model complexity. \\

Heavy-tailed kernels $\varphi(t) \approx c_1(1+c_2t)^{-(1+\alpha)}$ provide the best statistical fits, with criticality, in the sense that $\|\varphi\|_{1} \approx 1$ and $\alpha \approx 0.5+$. Of course, the symbol $\approx$ has to be taken with some care, but our evidence is sufficiently strong to pursue our study from small to large time scales with the roughness paradigm. 


    

\subsection{Empirical validation of the asymptotic regime} 

As mentioned in Section \ref{sec: intro from crit to rough} and developed in detail in Section \ref{sec: connecting} below, a critical Hawkes process with a heavy-tailed kernel is paramount to obtain a rough limit when scaling the process in large time. A rough limit can only be obtained in certain regimes for $\|\varphi \|_{1}$ and $\mu$ as functions of $T$, see Definition \ref{def: Critical linear Hawkes processes}. More precisely, if such a limit prevails, by \eqref{eq: kernel expansion} and \eqref{eq: baseline expansion} we must have
$$\mu = cT^{\alpha-1}+o(T^{\alpha-1})\;\;\text{and}\;\;\|\varphi\|_{1}  = 1-c'T^{-\alpha}+o(T^{-\alpha})\;\;\text{as}\;T\rightarrow \infty.$$
Moreover, since $N_T$ is of order $\frac{\mu T}{1-\|\varphi\|_{1}}$, we must have
$$N_T \sim \frac{\mu T}{1-\|\varphi\|_{1}} \sim \frac{cT^{\alpha-1}T}{c'T^{-\alpha}} \sim c'' T^{2\alpha}.$$
In turn,
$$\log \frac{\mu T}{1-\|\varphi\|_{1}} \sim 2\alpha \log T,$$
and therefore $\alpha$ must be close to $\log  \frac{\mu T}{1-\|\varphi\|_{1}}/(2\log (T))$ when $T$ is large.\\

To validate our asymptotic setting, we can test this asymptotic identity, at least in term of order of magnitude, using our estimators of $\mu$, $\alpha$, and $\|\varphi\|_{1}$. We display the results in Figures \ref{fig:link_parameter_bochum} and \ref{fig:link_parameter} for the Bochum data and the M\'et\'eo France data. We obtain on data that $\log(\frac{\mu T}{1 - \|\varphi\|_1})/(2\log(T))$ indeed lies essentially between $0.5\alpha$ and $1.5\alpha$, which is another indication of the relevance of our approach.

\begin{figure}[H]
    \centering
    \includegraphics[width=0.35\linewidth]{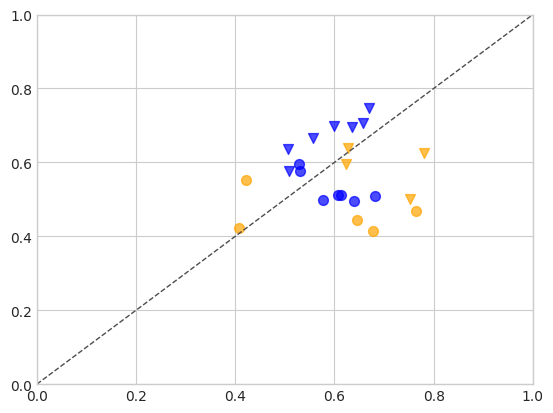}
    \caption{{\small {\it $\log(\frac{\mu T}{1 - \|\varphi\|_1})/(2\log(T))$ as a function of $\alpha$ in the Hawkes multiscale model with parameters estimated on Bochum dataset, month by month, with the moment method (o) and the spectral method ($\triangledown$). Blue: winter months (October to March), Orange: summer months (April to September). For the second-order contrast method, we discard the very few months for which the considered model does not attain the lowest score in \eqref{eq:contrast} among the Hawkes model with an exponential kernel, the BL model, and the NS model. Similarly, for the spectral method, we discard the very few months for which the model does not attain the lowest AIC score in \eqref{eq:aic}.}}}
    \label{fig:link_parameter_bochum}
\end{figure}

\begin{figure}[H]
    \centering
    \begin{tabular}{cc}
        \includegraphics[width=0.35\linewidth]{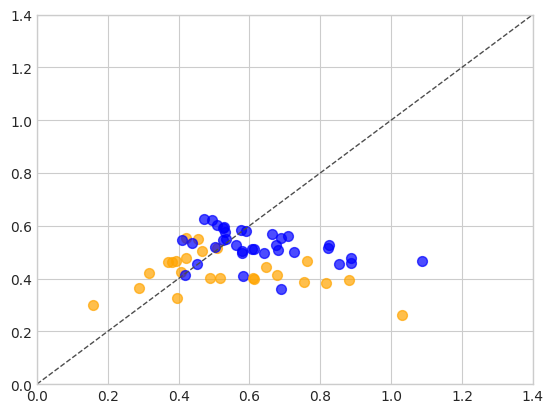} &     \includegraphics[width=0.35\linewidth]{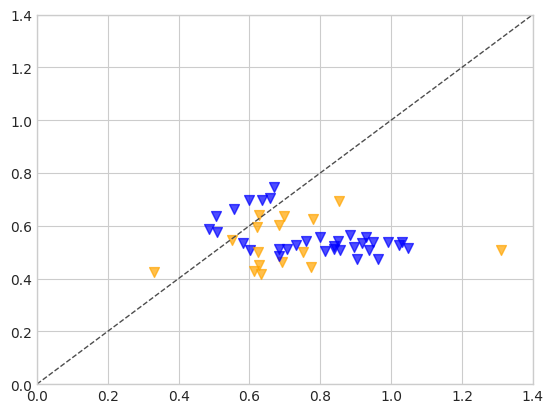}
    \end{tabular}
    \caption{{\small {\it $\log(\frac{\mu T}{1 - \|\varphi\|_1})/(2\log(T))$ as a function of $\alpha$ in the Hawkes multiscale model with parameters estimated on the different dataset, month by month, with the moment method (left) or the spectral method (right) for the five datasets. Blue: winter months (October to March), Orange: summer months (April to September). For the second-order contrast method, we discard the very few months for which the considered model does not attain the lowest score in \eqref{eq:contrast} among the Hawkes model with an exponential kernel, the BL model, and the NS model. Similarly, for the spectral method, we discard the very few months for which the model does not attain the lowest AIC score in \eqref{eq:aic}. 
    }}}
    \label{fig:link_parameter}
\end{figure}


\newpage

\section{Connecting small and large time scales} \label{sec: connecting}

\subsection{Connecting Hawkes and fractional processes} \label{sec: connecting hawkes and fractional}

In \cite{JaissonRosenbaum2015, JaissonRosenbaum2016}, the following is established:
Consider a linear Hawkes process $N_t$ with baseline $\mu >0$ and a power-law kernel 
\begin{equation} \label{eq: power-law kernel}
\varphi(t) =   \frac{\gamma}{(1+\beta t)^{1+\alpha}}{\bf 1}_{\{t \geq 0\}},
\end{equation}
with parameters $\alpha \in (1/2, 1), \beta,\gamma >0$,
as in the representation \eqref{eq: def intensity Hawkes}.
If we allow $\mu$ and $\varphi$ to depend on $T$ so that $N_t = (N_t)_{t \in [0,T]}$ is heavy-tailed and critical in the sense of Definition \ref{def: Critical linear Hawkes processes} then
\begin{equation} \label{eq: limit CLT}
T^{-2\alpha}N_{tT} \rightarrow \int_0^t X_s\,ds,\;\;t \in [0,1],
\end{equation}
as $T \rightarrow \infty$ in distribution\footnote{The convergence holds in the sense that the family $T^{-2\alpha}(N_{tT})_{t \in [0,1]}$ is tight (for the Skorokhod topology of c\`adl\`ag processes), and every limit point $\big(\int_0^tX_s\,ds\big)_{t \in [0,1]}$ is differentiable, with derivative process $(X_t)_{t \in [0,1]}$ satisfying
$$X_t = g(t)+\int_0^t k_{H}(t-s)\sqrt{X_s}dB_s,\;\;t \in [0,1],$$
where $(B_t)_{t \in [0,1]}$ is a Brownian motion, $g$ and $k_H$ are explicit functions depending only on $c_1$, $c_2$, and $H$, and $g$ is smooth, see also \cite{Horst2023HeavyTailed} for more precise results on this convergence.}, where $X_t = (X_t)_{t \in [0,1]}$ is a fractional process of the form \eqref{eq: fractional processes} below with Hurst index $H=\alpha-1/2$.\\

We are now ready to connect rainfall models at small time scales and fractional processes at large time scales via the parameters $\alpha$ and $H$. Consider a rainfall intensity model \eqref{eq:intensity_rain} at small scales
$$Y_t = \sum_{i=1}^{N_t} I_i \,{\bf 1}_{\{\tau_i + L_i > t\}} = \sum_{i \geq 1}I_i {\bf 1}_{\{\tau_i \leq t < \tau_i + L_i \}},$$
with independent and identically distributed pairs of rain-cell intensities and durations $(I_i,L_i)$, independent of the arrival process, having finite second moments, and such that $I_i$ and $L_i$ are independent. Here, $N_t$ is a heavy-tailed critical Hawkes process according to Definition \ref{def: Critical linear Hawkes processes}. With a little algebra, we obtain, for $t \in [0,T]$,
\begin{align*}
\int_0^{t} Y_s\,ds  & = \sum_{i \geq 1} I_i{\bf 1}_{\{\tau_i \leq t\}}\big(\min(t, \tau_i+L_i)-\tau_i\big) \\
& = \sum_{i \geq 1} I_iL_i{\bf 1}_{\{\tau_i \leq t\}} - \sum_{i \geq 1} I_i(\tau_i+L_i-t){\bf 1}_{\{\tau_i \leq t \leq \tau_i+L_i\}} \\
& = \sum_{i = 1}^{N_t}I_iL_i+\xi_t.
\end{align*}
Let us rescale time over $[0,1]$. The previous representation becomes
\begin{equation} \label{eq: macro approx}
\int_0^{tT} Y_s\,ds  = \kappa N_{tT}  +   \xi_{tT} + \zeta_{tT},\;\;t \in [0,1],
\end{equation}
with $\kappa = \E[I]\E[L]$, anticipating a law of large numbers induced by $N_{tT} \rightarrow \infty$ almost surely as $T\rightarrow \infty$. More precisely, we have the following lemma.
\begin{lem} \label{lemma: error bound}
Assume that $\E[L^2]<\infty$, $\E[I^2] < \infty$ and that $(N_t)_{t \geq 0}$ is a heavy-tailed critical Hawkes process according to Definition \ref{def: Critical linear Hawkes processes}. We have 
$$\sup_{t \geq 0}\E\big[|\xi_{t}|] \lesssim T^{2\alpha-1}\;\;\text{and}\;\;\sup_{t \in [0,1]}\E\big[|\zeta_{tT}|\big] \lesssim T^\alpha.$$
\end{lem}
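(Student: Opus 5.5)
We will bound the two error terms separately; both reduce to first- and second-moment estimates on the counting process. Here $\xi_t = -\sum_{i \geq 1} I_i(\tau_i+L_i-t){\bf 1}_{\{\tau_i \leq t \leq \tau_i+L_i\}}$ and $\zeta_{tT} = \sum_{i=1}^{N_{tT}}(I_iL_i-\kappa)$. The only facts about $N$ we need are
\begin{equation*}
\E[\lambda_s] \lesssim T^{2\alpha-1},\;\; s\in[0,T], \qquad \E[N_{tT}] \lesssim T^{2\alpha},\;\; t\in[0,1].
\end{equation*}
Both follow from the renewal equation $\E[\lambda_s] = \mu^T + \int_0^s \varphi(s-u)\E[\lambda_u]\,du$. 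It gives $\E[\lambda_s] \leq \mu^T/(1-a_T)$, and by \eqref{eq: kernel expansion}--\eqref{eq: baseline expansion} this is $\sim (c_2/c_1)T^{\alpha-1}T^{\alpha} = (c_2/c_1) T^{2\alpha-1}$. Integrating over $[0,tT]$ then gives the bound on $\E[N_{tT}]$. (The supremum over $t\geq0$ in the first bound should be read over the horizon $[0,T]$ where the model is defined; with $\mu^T$ fixed the same bound $\mu^T/(1-a_T)$ holds for all $s$ anyway.)

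For $\xi_t$ we have $|\xi_t| \leq \sum_i I_i L_i {\bf 1}_{\{t-L_i \leq \tau_i \leq t\}}$, because $0\le \tau_i+L_i-t \le L_i$ on the indicator. Since the marks $(I_i,L_i)$ are i.i.d. and independent of the arrival process, we condition on the marks, or equivalently use the compensator of the marked point process. This gives
\begin{equation*}
\E|\xi_t| \leq \E\Big[\int_0^t \int I\ell\,{\bf 1}_{\{t-\ell\le s\}}\,\PP_{(I,L)}(dI,d\ell)\,\lambda_s\,ds\Big] = \E[I]\,\E\Big[L\int_{(t-L)^+}^t \E[\lambda_s]\,ds\Big].
\end{equation*}
The equality uses Fubini and that the marks are independent of $\lambda$; a mark attached to event $i$ is independent of $\mathcal F_{\tau_i-}$, so the compensator argument is legitimate. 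Bounding $\E[\lambda_s]\lesssim T^{2\alpha-1}$ gives $\E|\xi_t|\lesssim \E[I]\E[L^2]\,T^{2\alpha-1}$, uniformly in $t$.

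For $\zeta_{tT}$, the summands $\epsilon_i = I_iL_i-\kappa$ are i.i.d., centred, and have finite variance because $I$ and $L$ are independent with finite second moments. They are also independent of $N$, so conditionally on $N_{tT}$ the sum has variance $N_{tT}\mathrm{Var}(IL)$. By Cauchy--Schwarz and Jensen,
\begin{equation*}
\E|\zeta_{tT}| \leq \big(\mathrm{Var}(IL)\,\E[N_{tT}]\big)^{1/2} \lesssim (T^{2\alpha})^{1/2} = T^{\alpha},
\end{equation*}
uniformly in $t\in[0,1]$.

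The main point needing care is the uniform intensity bound, specifically checking that $\mu^T/(1-a_T)$ is of exact order $T^{2\alpha-1}$ under Definition \ref{def: Critical linear Hawkes processes}. This is immediate from the expansions, since the ratio is $\frac{c_2T^{\alpha-1}(1+o(1))}{c_1T^{-\alpha}(1+o(1))}$. The bound $\E[\lambda_s]\le \mu^T/(1-a_T)$ itself comes from iterating the renewal equation, $\E[\lambda_s]\le \mu^T\sum_k a_T^k$. The remaining subtlety is justifying the compensator identity for the marked sum defining $\xi_t$, which I would handle by conditioning on the marks, using their independence from the arrival times.
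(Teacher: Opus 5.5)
Your proposal is correct and follows essentially the same route as the paper: you bound $|\xi_t|$ by $\sum_i I_iL_i{\bf 1}_{\{\tau_i\le t\le\tau_i+L_i\}}$, compute its expectation through the compensator (the paper realises this via an explicit Poisson-embedding/thinning representation), and use the renewal bound $\E[\lambda_s]\le \mu/(1-\|\varphi\|_1)\lesssim T^{2\alpha-1}$ to get $\E[I]\E[L^2]T^{2\alpha-1}$. For $\zeta_{tT}$ you use the same second-moment argument $\E[\zeta_{tT}^2]\lesssim \E[N_{tT}]\lesssim T^{2\alpha}$; your $\mathrm{Var}(IL)$ in place of the paper's $\E[I^2]\E[L^2]$ is a purely cosmetic difference.
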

The proof of Lemma \ref{lemma: error bound} is given in Appendix \ref{app: proof of error bound}. Multiplying both sides of \eqref{eq: macro approx} by $T^{-2\alpha}$, using Lemma \ref{lemma: error bound} and the limit theorem \eqref{eq: limit CLT}, we obtain
\begin{equation} \label{eq: serious scale}
T^{-2\alpha} \int_0^{tT} Y_sds \rightarrow \kappa \int_0^t X_s\,ds,\;\;t \in [0,1],
\end{equation}
where $X_t = (X_t)_{t \in [0,1]}$ is a fractional process in the sense of Definition \ref{def: fractional process} in Section \ref{sec: model large scale data} below with Hurst exponent $H=\alpha - 1/2$. 

\subsection{Reconciling small time scale data and large time scale data}

In turn, this result enables us to rigorously connect the data we have across two different time scales. On the one hand, we have small time aggregated rainfall data:
\begin{equation} \label{eq: data set micro}
Y_1^\delta, Y_2^\delta, \ldots,  Y_{n}^\delta\;\;\;\text{over}\;\;\;[0,T_{\text{micro}}],
\end{equation}
and $n  =  \lfloor T_{\text{micro}}/\delta\rfloor$. On the other hand, we have large time aggregated rainfall data:
\begin{equation} \label{eq: yet an embedding}
Z_1^\Delta , Z_2^\Delta, \ldots, Z_N^\Delta\;\;\;\text{over}\;\;\;[0,T_{\text{macro}}],
\end{equation}
and $N  =  \lfloor T_{\text{macro}}/\Delta\rfloor$. How do we reconcile these two datasets?  In practice, they are extracted from completely different time periods, locations, and measurement protocols. Yet, if we mathematically embed them into the same framework, we can relate a key property shared by both datasets.\\

 We define a continuous time embedding via a random process $(Z_t)_{t \in [0,1]}$ defined by
\begin{equation} \label{eq: continuous embedding}
Z_{k\Delta/T_{\text{macro}}} = Z_k^\Delta,\;\;\text{for}\;\;k=1,\ldots, N,
\end{equation}
that interpolates\footnote{As for the values of $(Z_t)_{t \in [0,1]}$ for $t \neq k\Delta/ T_{\text{macro}}$, we do not need to specify yet.} the times series $(Z_k^\Delta)_{k=1,\ldots, N}$ at discrete time $k\Delta/ T_{\text{macro}}$. \\

\begin{prop} \label{prop: small meets large}
Set $T_{\mathrm{micro}}=T$, $T_{\mathrm{macro}}=1$ and let $T \rightarrow \infty$. Assume the data sets \eqref{eq: data set micro} and \eqref{eq: yet an embedding} are compatible in the following sense:  
$$
Z_k^\Delta  = \sum_{\ell = (k-1)\Delta T/\delta+1}^{k\Delta T/\delta} Y_\ell^\delta,
$$
Then, in distribution,
$$Z_{k\Delta} = Z_k^\Delta = \int_{(k-1)\Delta T}^{k\Delta T} Y_s\,ds  = \kappa \Delta  T^{2\alpha}X_{k\Delta} +o(T^{2\alpha})$$
as $T \rightarrow \infty$.  
\end{prop}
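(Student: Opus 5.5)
First I would make the compatibility assumption concrete. The index range $\ell=(k-1)\Delta T/\delta+1,\ldots,k\Delta T/\delta$ covers exactly the micro-intervals of length $\delta$ that tile $[(k-1)\Delta T,k\Delta T]$. By the aggregation property $Y_\ell^\delta=\int_{(\ell-1)\delta}^{\ell\delta}Y_s\,ds$ from \eqref{eq: data micro}, this gives $Z_k^\Delta=\int_{(k-1)\Delta T}^{k\Delta T}Y_s\,ds$, which is the middle identity in the statement. I would assume $\Delta T/\delta$ is an integer, or absorb the boundary by rounding, since that contributes at most one $Y^\delta_\ell$ whose expectation is $O(\delta\Lambda)=o(T^{2\alpha})$. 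With $T_{\mathrm{macro}}=1$, the embedding \eqref{eq: continuous embedding} reads $Z_{k\Delta}=Z_k^\Delta$.

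Next I would write the increment as a difference of the cumulative process at the two macroscopic times $t=k\Delta$ and $t=(k-1)\Delta$. By the decomposition \eqref{eq: macro approx},
$$\int_{(k-1)\Delta T}^{k\Delta T}Y_s\,ds=\kappa\big(N_{k\Delta T}-N_{(k-1)\Delta T}\big)+\big(\xi_{k\Delta T}-\xi_{(k-1)\Delta T}\big)+\big(\zeta_{k\Delta T}-\zeta_{(k-1)\Delta T}\big).$$
Lemma \ref{lemma: error bound} bounds the $\xi$ and $\zeta$ terms in $L^1$ by $T^{2\alpha-1}+T^{\alpha}$. Since $\alpha>0$, both are $o(T^{2\alpha})$. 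By Markov's inequality, after multiplication by $T^{-2\alpha}$ these terms tend to zero in probability.

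For the main term I would invoke the limit theorem \eqref{eq: limit CLT}. The pair of coordinates $T^{-2\alpha}(N_{k\Delta T},N_{(k-1)\Delta T})$ converges in distribution to $(\int_0^{k\Delta}X_s\,ds,\int_0^{(k-1)\Delta}X_s\,ds)$. Finite-dimensional convergence follows from Skorokhod convergence because the limit is continuous, so the fixed times are continuity points. The continuous mapping theorem then gives $T^{-2\alpha}\kappa(N_{k\Delta T}-N_{(k-1)\Delta T})\to\kappa\int_{(k-1)\Delta}^{k\Delta}X_s\,ds$. Slutsky's lemma absorbs the $o_P(1)$ remainders.

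The last step, which I expect to be the main obstacle, is replacing $\int_{(k-1)\Delta}^{k\Delta}X_s\,ds$ by $\Delta X_{k\Delta}$. Here I would use the Hölder regularity of the limit $X_t=g(t)+\int_0^tk_H(t-s)\sqrt{X_s}\,dB_s$ from the footnote, equivalently the scaling \eqref{eq: scaling non linear approx}. This gives $\E|X_s-X_{k\Delta}|\lesssim|k\Delta-s|^{H}$, so
$$\int_{(k-1)\Delta}^{k\Delta}X_s\,ds=\Delta X_{k\Delta}+O_P(\Delta^{1+H}).$$
The remainder is of smaller order than $\Delta$ when $\Delta$ is small. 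Multiplying back by $T^{2\alpha}$ then yields the claim $\kappa\Delta T^{2\alpha}X_{k\Delta}+o(T^{2\alpha})$ in distribution.

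The argument is therefore exact up to this discretisation error, which is $O(T^{2\alpha}\Delta^{1+H})$. The proposition's $o(T^{2\alpha})$ statement should be read with this understood: it is $o(T^{2\alpha})$ only in the regime where $\Delta$ is small relative to the macroscopic horizon, so I would either state this as an extra hypothesis or interpret $X_{k\Delta}$ as the local average of $X$ over the $k$-th block.
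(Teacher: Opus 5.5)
Your proposal is correct and follows the paper's proof. It uses the aggregation identity $Z_k^\Delta=\int_{(k-1)\Delta T}^{k\Delta T}Y_s\,ds$ (ignoring boundary effects), then Lemma~\ref{lemma: error bound} with \eqref{eq: serious scale} to reach $\kappa\int_{(k-1)\Delta}^{k\Delta}X_s\,ds$, and finally replaces this integral by $\Delta X_{k\Delta}$; where the paper appeals to path continuity of $X$ ``in the limit $\Delta\to0$ and $T\to\infty$'', your H\"older moment bound quantifies the error as $O(T^{2\alpha}\Delta^{1+H})$, making explicit that the paper's $o(T^{2\alpha})$ statement relies on $\Delta\to 0$.
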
 

In particular, the macroscopic data $ Z_k^\Delta$ is well approximated (up to rescaling in space) by the process $Z_t$ which has the same smoothness properties than $X_t$, namely that of a rough process with Hurst index $H = \alpha - 1/2$.
The proof is given in Appendix \ref{sec: proof of small meets large}.

\section{Rainfall is rough at large time scales} \label{sec: rain is rough at large scales}

\subsection{Aggregated rainfall at large time scales as a fractional process} \label{sec: model large scale data}
For a fixed geographical area and a temporal aggregation $\Delta >0$, we now have measurements (direct or indirect) of aggregated rainfall, of  the form 
$$Z_1^\Delta, Z_2^\Delta, \ldots, Z_k^\Delta,\ldots.$$
By large time scales, we mean $\Delta$ ranging from 1 to 30 years. The (calendar) time period $[0, T_{\text{macro}}]$  varies from several decades to a few thousand years. The sample size is $N=\lfloor T_{\text{macro}}/\Delta\rfloor$, ranging from 200 to 2500, and the asymptotic analysis is conducted as $N \rightarrow \infty$. 

Remember that we associate with the time series $(Z_k^\Delta)_{k=1,\ldots, N}$ a continuous process $Z_t = (Z_t)_{t \in [0,1]}$ via the embedding \eqref{eq: continuous embedding}. We will need the following notion of a fractional process.
\begin{defi} \label{def: fractional process}
A random process $(Z_t)_{t \in [0,1]}$ is a fractional process with exponent $H \in (0,1)$ if
\begin{equation} \label{eq: scaling property gen}
k_q h^{Hq} \leq \E\big[|Z_{t+h}-Z_t|^q\big] \leq K_q h^{Hq}
\end{equation}
for every $t \in [0,1]$, $q >0$, sufficiently small $h >0$, and some $k_q, K_q >0$.
\end{defi} 
A prototype example is given by fractional Brownian motion (fBm) with Hurst parameter $H \in (0,1)$.
In the paper, we consider a class of random processes $(Z_t)_{t \geq 0}$ of the form
\begin{equation} \label{eq: fractional processes}
Z_t = \xi_0+g(t)+\int_0^t k_H(t-s)h(Z_s)dB_s,
\end{equation}
where $(B_t)_{t \geq 0}$ is a standard Brownian motion, $g(t)$ and $h(x)$ are smooth real-valued functions defined for every non-negative time $t \geq 0$ and real number $x$, and $k_H$ is a positive kernel defined for every $t \geq 0$ that may be singular at the origin. 
The representation \eqref{eq: fractional processes} is reminiscent of the \cite{MandelbrotVanNess1968} representation of fBm when  $k_H(t-s) \approx (t-s)^{H-{1/2}}$.
The presence of the function $h$ provides modelling flexibility that encompasses fractional stochastic differential equations.
We prove in Appendix \ref{app: proof of frac property} the following.
\begin{prop} \label{prop: frac property}
Assume that $g$ is Lipschitz continuous and $h$ is continuous with at most polynomial growth. Suppose that for some $H \in (0,1)$,  we have
$$c_H(t-s)^{H-1/2} \leq k_H(t-s) \leq C_H(t-s)^{H-{1/2}}$$
for some $0 < c_H, C_H$ and every $0 \leq s \leq t$. Then, any process $(Z_t)_{t \in [0,1]}$ of the form \eqref{eq: fractional processes} such that  $\E[|\xi_0|^q]<\infty$ for every $q >0$  and such that $\E[h(Z_t)] \neq 0$ is a fractional process with exponent $H$ in the sense of Definition \ref{def: fractional process}, with the restriction $q \geq 2$ for the lower bound in Definition \ref{def: fractional process}.
\end{prop}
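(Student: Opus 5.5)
We split the increment of \eqref{eq: fractional processes} into a drift part and a stochastic-integral part and treat upper and lower bounds separately. Fix $t\in[0,1]$ and small $h>0$, and write
$$Z_{t+h}-Z_t = \big(g(t+h)-g(t)\big) + \underbrace{\int_t^{t+h} k_H(t+h-s)h(Z_s)\,dB_s}_{=:A_{t,h}} + \underbrace{\int_0^t \big(k_H(t+h-s)-k_H(t-s)\big)h(Z_s)\,dB_s}_{=:R_{t,h}}.$$
The Lipschitz drift contributes $O(h)=o(h^{H})$ since $H<1$. As a preliminary step, we will show $\sup_{s\in[0,1]}\E[|Z_s|^p]<\infty$ for all $p$. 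This follows from the moment assumption on $\xi_0$, the Burkholder--Davis--Gundy inequality, the polynomial growth of $h$ and a Volterra-type Gronwall lemma, using $\int_0^t k_H(t-s)^2ds\lesssim t^{2H}<\infty$. Consequently $\sup_s\E[|h(Z_s)|^p]<\infty$ for every $p$.

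\textbf{Upper bound.} By BDG and Minkowski (for $q\ge2$), then Hölder's inequality for $q<2$,
$$\E[|A_{t,h}|^q]\lesssim \Big(\int_t^{t+h}k_H(t+h-s)^2\,\|h(Z_s)\|_{L^q}^2ds\Big)^{q/2}\lesssim \Big(C_H^2\int_0^h u^{2H-1}du\Big)^{q/2}\lesssim h^{Hq}.$$
The term $R_{t,h}$ requires $\int_0^\infty \big(k_H(u+h)-k_H(u)\big)^2du\lesssim h^{2H}$. Two-sided power bounds alone do not give this, so we will interpret the hypothesis as $k_H(u)=c\,u^{H-1/2}$ times a smooth bounded factor, or add a monotonicity/regularity assumption on $k_H$. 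With such an assumption, the substitution $u=hv$ gives $h^{2H}\int_0^\infty((v+1)^{H-1/2}-v^{H-1/2})^2dv$, which is finite for $H<1$, the integrand decaying like $v^{2H-3}$. This yields $\E|Z_{t+h}-Z_t|^q\le K_q h^{Hq}$ for all $q>0$.

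\textbf{Lower bound ($q\ge 2$).} Since $\E|X|^q\ge(\E X^2)^{q/2}$ for $q\ge2$, it suffices to show $\E[(Z_{t+h}-Z_t)^2]\ge k h^{2H}$. By the Itô isometry, $A_{t,h}$ and $R_{t,h}$ are orthogonal, because they integrate over disjoint time intervals. Hence
$$\E[(A+R)^2]\ge \E[A^2]\ge c_H^2\int_t^{t+h}(t+h-s)^{2H-1}\E[h(Z_s)^2]\,ds,$$
and the drift is handled by $(a+b)^2\ge a^2/2-b^2$ with $b=O(h)$. Since $\E[h(Z_s)^2]\ge(\E h(Z_s))^2$, it remains to bound $(\E h(Z_s))^2$ from below uniformly for $s$ near $t$. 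We will use the continuity of $s\mapsto\E[h(Z_s)]$, which follows from the $L^p$-continuity of $Z$ (upper bound), the continuity of $h$ and uniform integrability. Together with the nonvanishing assumption, compactness of $[0,1]$ then gives $\inf_s (\E h(Z_s))^2>0$. This yields $\E[A^2]\gtrsim h^{2H}$ and hence the claim.

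\textbf{Main obstacle.} The main difficulty is the $R_{t,h}$ estimate, since it is the only place where pointwise bounds on $k_H$ are insufficient. We will first try to show that the intended kernels, fractional or gamma-type kernels as in the Hawkes limit, satisfy the increment bound $|k_H(u+h)-k_H(u)|\lesssim \min(u^{H-1/2},\,h\,u^{H-3/2})$, and then state this as the working assumption.
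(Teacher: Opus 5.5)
Your plan has the same structure as the paper's proof. For the upper bound, the paper also reduces to $\int_0^h k_H^2\lesssim h^{2H}$ and $\int_0^T\big(k_H(u+h)-k_H(u)\big)^2du\lesssim h^{2H}$, then quotes Lemma~2.4 of Abi Jaber--Larsson--Pulido instead of running BDG by hand. For the lower bound, it keeps only the $\int_t^{t+h}k_H(t+h-s)^2\,\E[h(Z_s)^2]\,ds$ piece and gets $\inf_s\E[h(Z_s)^2]>0$ from continuity and compactness. Your reduction to $q=2$ via $\E|X|^q\ge(\E X^2)^{q/2}$ and It\^o orthogonality is a cleaner form of its BDG-plus-Jensen step. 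Continuity of $s\mapsto\E[h(Z_s)]$ only needs $L^2$-continuity of $Z$, which follows from $k_H\in L^2_{\mathrm{loc}}$ by continuity of translations. So the lower bound needs no extra kernel hypothesis.

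Your ``main obstacle'' is genuine, and it is a defect of the paper, not of your plan. The paper writes $k_H(t+h)/k_H(t)\sim(1+h/t)^{H-1/2}$ (two-sided bounds up to constants) and concludes $|k_H(t+h)/k_H(t)-1|\sim h/t$. But when $c_H<C_H$ the hypothesis only traps the ratio between $\frac{c_H}{C_H}(1+h/t)^{H-1/2}$ and $\frac{C_H}{c_H}(1+h/t)^{H-1/2}$, which says nothing about its distance to $1$. The statement is in fact false as written. Take $g\equiv0$, $\xi_0=0$, $h\equiv1$ and $k_H(u)=u^{H-1/2}m(u)$. Here $m\in\{1,2\}$ alternates on consecutive intervals of length $h_j=2^{-3j}$ inside $[2^{-j},2^{1-j})$, $j\ge1$, and $m=1$ on $[1,\infty)$. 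Then $u^{H-1/2}\le k_H(u)\le 2u^{H-1/2}$, $Z$ is Gaussian, and $|k_H(u+h_j)-k_H(u)|\ge\frac12u^{H-1/2}$ on $[2^{-j},2^{1-j}-h_j)$. Hence for $j\ge2$
$$\E\big[(Z_{1/2+h_j}-Z_{1/2})^2\big]\ge\int_{2^{-j}}^{2^{1-j}-h_j}\big(k_H(u+h_j)-k_H(u)\big)^2du\gtrsim 2^{-2Hj}=h_j^{2H/3},$$
which is not $O(h_j^{2H})$. For $H>1/2$, a single jump of $m$ at some $u_0\in(0,1/2)$ already gives order $h\gg h^{2H}$. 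So a hypothesis like your $|k_H(u+h)-k_H(u)|\lesssim\min(u^{H-1/2},hu^{H-3/2})$ is necessary, not merely convenient. It holds, for instance, when $|k_H'(u)|\lesssim u^{H-3/2}$, which covers $u^{H-1/2}$ and the Mittag-Leffler kernel of the Hawkes limit. With it, your computation $\int_0^hu^{2H-1}du+h^2\int_h^\infty u^{2H-3}du\lesssim h^{2H}$ closes the upper bound.

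There is, however, a gap in your own preliminary step: $\sup_s\E|Z_s|^p<\infty$ does not follow from polynomial growth of $h$. If $|h(x)|\lesssim1+|x|^m$, BDG bounds $\|Z_t\|_{L^p}$ through $\|h(Z_s)\|_{L^p}\lesssim1+\|Z_s\|_{L^{mp}}^m$. The Volterra--Gronwall argument therefore only closes for $m\le1$, and the conclusion can genuinely fail. Take $H=1/2$, $k_H\equiv1$, $g\equiv0$, $\xi_0=1$, $h(x)=x^2$. The solution of $Z_t=1+\int_0^tZ_s^2\,dB_s$ is the reciprocal of a three-dimensional Bessel process started at $1$. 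So $\E[h(Z_t)]\in(0,\infty)$, but $\E|Z_{t+h}-Z_t|^3=\infty$, and the upper bound fails for every $q\ge3$. The paper's appeal to Abi Jaber--Larsson--Pulido tacitly assumes linear growth too. You should add linear growth of $h$ (it covers $h(x)=\sqrt{x_+}$ in the Hawkes limit), or assume the moment bounds directly.
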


\subsection{Evidence of roughness: statistical methodology} \label{sec: stat evidence roughness}

We observe
$$Z_{1}^\Delta, Z_{2}^{\Delta},\ldots, Z_N^{\Delta},$$ 
with $N = \lfloor T_{\text{macro}}/\Delta\rfloor$. Equivalently, via the continuous embedding \eqref{eq: continuous embedding}, we discretely observe the continuous process $(Z_t)_{t \in [0,1]}$ at $N$ equidistant times:
$$Z_{\Delta/T_{\text{macro}}}, Z_{2\Delta/T_{\text{macro}}},\ldots, Z_{N\Delta/T_{\text{macro}}}.$$ 
For $q > 0$, define
\begin{equation} \label{eq: empirical scaling moment}
m_N(q, Z) = N^{-1} \sum_{k=1}^N|Z_{k\Delta/T_{\text{macro}}}-Z_{(k-1)\Delta/T_{\text{macro}}}|^q.
\end{equation}
In the same spirit as \cite{Gatheral2022}, our main assumption is a convergence
\begin{equation} \label{eq: scaling}
N^{qs_q}m_N(q, Z) \rightarrow b_q,\;\;\text{as}\;\;N\rightarrow \infty,
\end{equation}
for some $s_q>0$ and $b_q> 0$.
As for the process $(Z_t)_{t \in [0,1]}$, from an asymptotic point of view, it is noteworthy that having in mind a fixed macroscopic timescale $[0,T_{\text{macro}}]$, we equivalently have $\Delta/T_{\text{macro}} \rightarrow 0$. This does not mean that $\Delta$ is small  but rather that the sample size $N = \lfloor T_{\text{macro}}/\Delta \rfloor$ is large, and this is how we conduct statistical inference.
Under the continuity assumption for the random function $t \mapsto Z_t$, Equation \eqref{eq: scaling} is equivalent to say that aggregated rainfall has saturated smoothness $s_q>0$, in the sense that its random paths are in the Besov space $\mathcal B^{s_q}_{q, \infty}([0,1])$ and not in $\mathcal B^{s'}_{q, \infty}([0,1])$ for every $s' > s_q$, see \cite{rosenbaum2009first}. In particular, if $(Z_t)_{t \geq 0}$ is a smooth transformation of an fBm, we have \eqref{eq: scaling} in probability and $s_q = H$ for ever $q$. Finally, the quantity $m_N(q, Z)$ can be seen as an empirical counterpart of
$$\E\big[|Z_{\Delta/T_{\text{macro}}}-Z_0|^q\big] \approx b_q (\Delta/T_{\text{macro}})^{qs_q},$$
provided a law of large numbers holds. Now we see that if $(Z_t)_{t \geq 0}$ is an fBm or, more generally, a fractional process of the form \eqref{eq: fractional processes} as in Section \ref{sec: model large scale data} above, we can recover the Hurst exponent $H$ by means of empirical moments of increments \eqref{eq: empirical scaling moment} at several scales and for several values $q$. The mathematical link is granted by the convergence \eqref{eq: scaling} that asserts the identity $s_q=H$ for every $q > 0$.

\subsection{Statistical evidence of roughness at large time scales} \label{sec: evidence roughness}

We analyse two categories of data:
\begin{itemize}
\item[$\bullet$] Weather station data: observational data spanning more than 200 consecutive years, including the data from the Global Historical Climatology Network (GHCN) used in \cite{markonis2016}\footnote{Available at \url{https://www.ncei.noaa.gov/data/ghcnm/v4/precipitation/archive/}. We did not find the data for Paris and Manchester used in \cite{markonis2016} in this database.}, as well as the dataset from Collegio Romano (Rome, Italy) described in \cite{volpi2024}. These data are among the longest available precipitation records and are all located in Europe; see 
Table~\ref{tab:H_weather_station} for their coordinates. Four of the GHCN stations (Lille, Marseille, Strasbourg, and Toulouse) are the same weather stations as the M\'et\'eo France stations used in the microscopic experiments of Section~\ref{sec: stat criticality}.
\item[$\bullet$] Paleoclimatic data from indirect measurements (tree rings, lake sediments, pollen), including some of the data from the meta-study of \cite{iliopoulou2018}\footnote{Available at \url{https://www.ncei.noaa.gov/access/paleo-search/}.}. 
\end{itemize}



For a time series $X$ of length $N$ and aggregation time step $\Delta$, we estimate 
$m(q,h) = \mathbb{E}[|X_{t+h} - X_t|^q] \approx C h^{\xi_q}$, with $\xi_q = Hq$, 
for $h = k\Delta/T_{\text{macro}}$, $k = 1, \ldots, \min(60, N/10)$. We regress $\log(m(q,h))$ against $\log(h)$ and obtain an estimate of $\xi_q$ as the slope of this linear regression. We then regress $\xi_q$ against $q$ and estimate $H$ as the slope of this second empirical relationship.

\medskip
Concerning the weather station data, we work with different datasets spanning more than $N=200$ consecutive years, with $\Delta = 1$ year. The estimates of $H$, as well as information about the weather stations, are provided in Table~\ref{tab:H_weather_station}. An illustration for the Collegio Romano station~\cite{volpi2024} is shown in Figure~\ref{fig:romano}.

\medskip
Concerning the paleoclimatic data, we consider four datasets based on different reconstruction sources:
\begin{itemize}
    \item Tree rings, which require preprocessing in order to recover low-frequency climate variations ``are embedded in the data along with other long-term effects such as tree ageing and population dynamics''. Three main methods are employed. The first method, corresponding to the dataset from~\cite{yang2014}, consists of detrending combined with pre-whitening (DP), in which a parametric trend is removed and ARMA-type pre-whitening is applied to reduce the autocorrelation induced by climate change. The second method, corresponding to the dataset from \cite{buntgen2011}, uses regional curve standardisation (RCS), which consists of averaging tree-ring measurements from the same region and species according to their biological age in order to estimate a common regional growth curve and remove age-related growth trends~\cite{helama2017}. The third method relies on neural-network regression (NN) and corresponds to the dataset from~\cite{ni2002} (Arizona Climate Dataset 1).
    \item Lake sediments, corresponding to the dataset from~\cite{romero2011}.
    \item Pollen, corresponding to the dataset from~\cite{viau2009}.
\end{itemize}

The different values of $H$ are reported in Table~\ref{tab:H_paleo}. The tree-ring data reconstructed using RCS, together with the different linear regressions involved in the estimation of $H$, are displayed in Figure~\ref{tab: TreeCE11}.

\medskip
The results consistently support compatibility with a rough model at large time scales, with a Hurst exponent between $10^{-2}$ and $10^{-1}$, subject to some variability. Some values from the weather stations are very close to zero, or even negative, possibly due to the limited amount of available data. Paleoclimatic data help mitigate this issue, as they are based on a larger number of observations. In Table~\ref{tab:H_paleo}, it can also be seen that the pollen-based reconstructions exhibit a much higher $H$, which may be related to the coarser 100-year aggregation. In Figure~\ref{tab: TreeCE11_simu}, we provide a simulation of the model $\sigma W_t^H$ using parameters estimated from the Central Europe dataset, where $W^H_t$ is an fBm with the Hurst parameter $H=0.06$ and $\sigma >0$\footnote{$\sigma$ is estimated from the intercept of $\log(m(2,h))$.}. The close agreement between Figures~\ref{tab: TreeCE11} and~\ref{tab: TreeCE11_simu} indicates that the model captures the key statistical features of the observed data, lending support to its validity and plausibility.
%
%
%

\begin{figure}[H]
    \centering
    \begin{tabular}{cc}
        \includegraphics[width=0.35\linewidth]{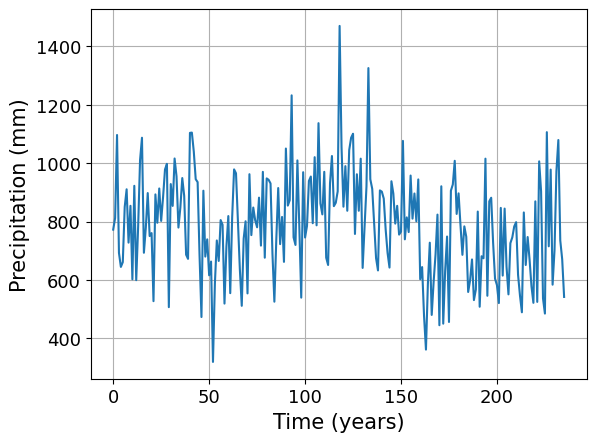}     & \includegraphics[width=0.35\linewidth]{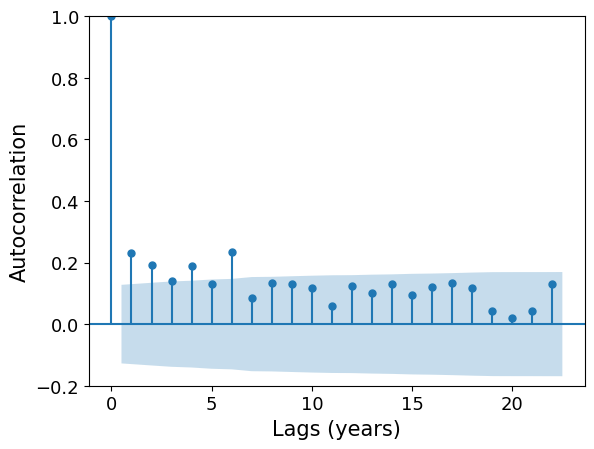}  \\
          \includegraphics[width=0.35\linewidth]{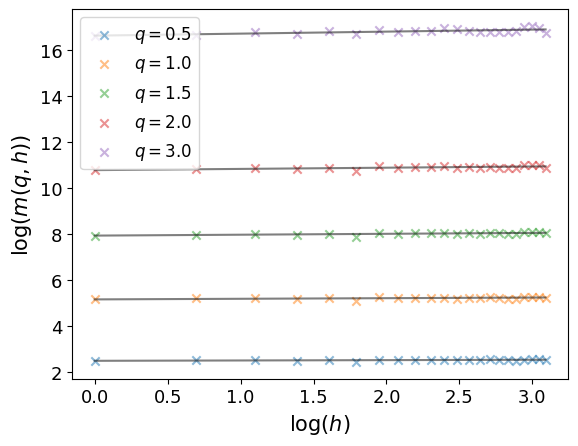}         &      \includegraphics[width=0.35\linewidth]{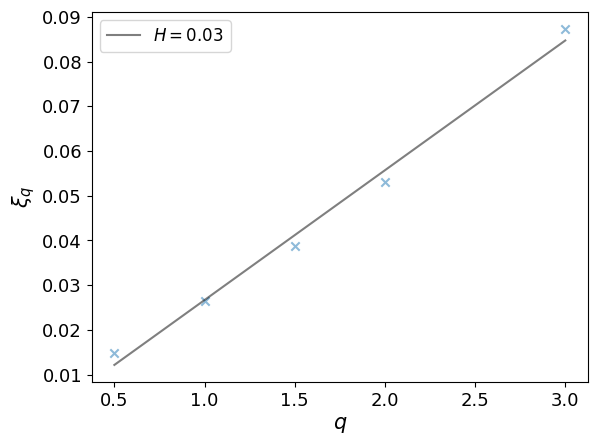} 
    \end{tabular}  
    \caption{\small \it Rainfall in Romano (Italy), $\Delta=1$ year, $N=236$. Upper Left: Data; Upper Right: autocorrelation of the data with Bartlett standard error bands; Lower Left: Scaling law of the structure function; Lower right: estimation of $H$ by linear regression on the slopes of the structure function.\label{fig:romano}} 
    \end{figure}

\begin{table}[H]
    \centering
\begin{tabular}{llllrr}
\toprule
Name & ID & Lon & Lat & $N$ & $H$ \\
\midrule
Edinburgh & UKXLP329564 & 3.2$^\circ$W & 55.9$^\circ$N & 215 & 0.01 \\
Hoofddorp & NLE00100503 & 4.7$^\circ$E & 52.3$^\circ$N & 291 & 0.02 \\
Kew Gardens & UKMLP003775 & 0.3$^\circ$W & 51.5$^\circ$N & 303 & -0.01 \\
\midrule
Klagenfurt & AUM00011231 & 14.3$^\circ$E & 46.6$^\circ$N & 210 & 0.04 \\
Lille & FRE00104040 & 3.1$^\circ$E & 50.6$^\circ$N & 242 & 0.01 \\
Lund & SWE00137568 & 13.2$^\circ$E & 55.7$^\circ$N & 278 & 0.00 \\
\midrule
Marseille & FR000007650 & 5.2$^\circ$E & 43.4$^\circ$N & 277 & -0.00 \\
Milano & ITMLP016080 & 9.3$^\circ$E & 45.5$^\circ$N & 236 & 0.02 \\
Oxford & UK000056225 & 1.3$^\circ$W & 51.8$^\circ$N & 259 & 0.00 \\
\midrule
Padua & ITXLP330782 & 12$^\circ$E & 45.4$^\circ$N & 250 & 0.03 \\
Podehole & UKXLP329602 & 0.1$^\circ$W & 52.8$^\circ$N & 269 & 0.01 \\
Praha & EZE00100082 & 14.4$^\circ$E & 50.1$^\circ$N & 201 & -0.01 \\
\midrule
Rome &  & 12.47$^\circ$E & 41.9$^\circ$N & 236 & 0.03 \\
Strasbourg & FR000007190 & 7.6$^\circ$E & 48.5$^\circ$N & 224 & 0.02 \\
Toulouse & FR000007630 & 1.4$^\circ$E & 43.6$^\circ$N & 217 & 0.02 \\
\midrule
Uppsala & SWE00139148 & 17.6$^\circ$E & 59.9$^\circ$N & 252 & 0.01 \\
\bottomrule
\end{tabular}
    \caption{\small \it Estimates of $H$ for the different weather station data, including 15 dataset from the Global Historical Climatology Network with identification code reported in ID, and the dataset of Rome from~\cite{volpi2024}. $\Delta =1$ year and $N$ is the number of consecutive years in the data.}
    \label{tab:H_weather_station}
\end{table}

\begin{figure}[H]
    \centering
    \begin{tabular}{cc}
        \includegraphics[width=0.39\linewidth]{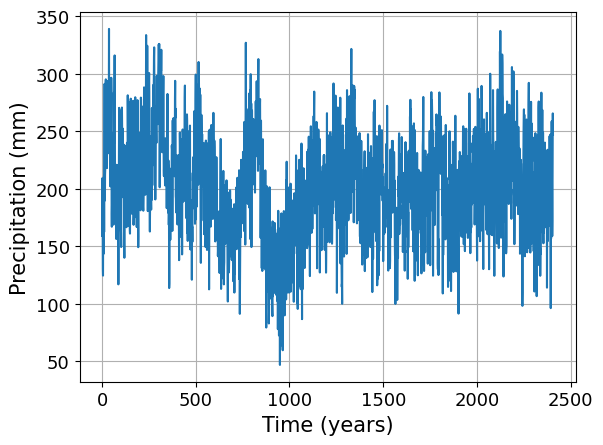}     & \includegraphics[width=0.39\linewidth]{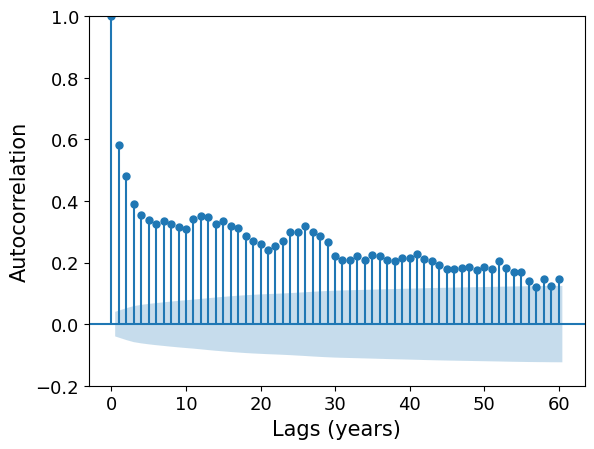}  \\
          \includegraphics[width=0.39\linewidth]{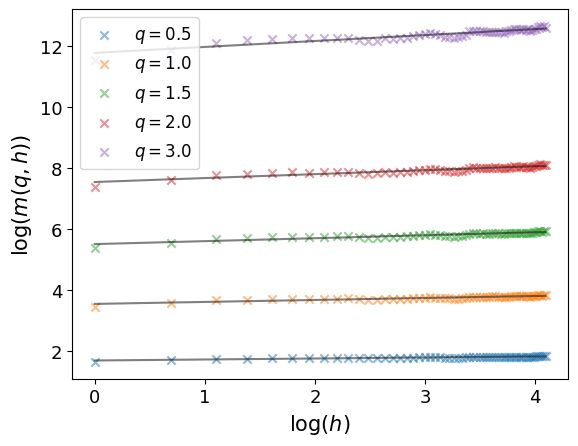}         &      \includegraphics[width=0.39\linewidth]{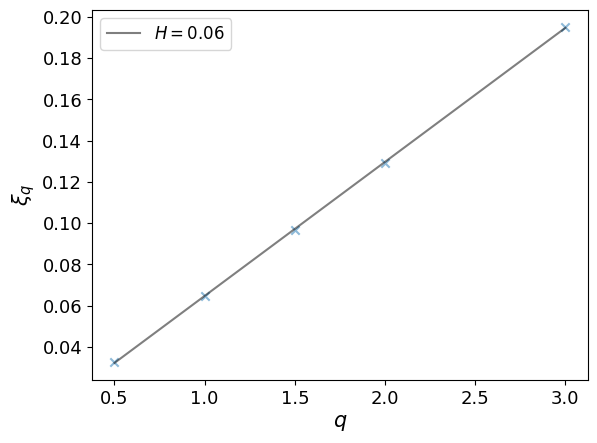}   
            \end{tabular}   
      \caption{\label{tab: TreeCE11}\small \it Tree rings measurements: Central Europe data from \cite{buntgen2011} (total precipitation over the months of April, May, and June), $\Delta=1$ year, $N=2407$. Upper Left: Data, pre-processed by RCS reconstruction; Upper Right: autocorrelation of the data with Bartlett standard error bands; Lower Left: Scaling law of the structure function; Lower right: estimation of $H$ by linear regression on the slopes of the structure function.} 
          \end{figure}
          
\begin{figure}[H]
    \centering
    \begin{tabular}{cc}
        \includegraphics[width=0.39\linewidth]{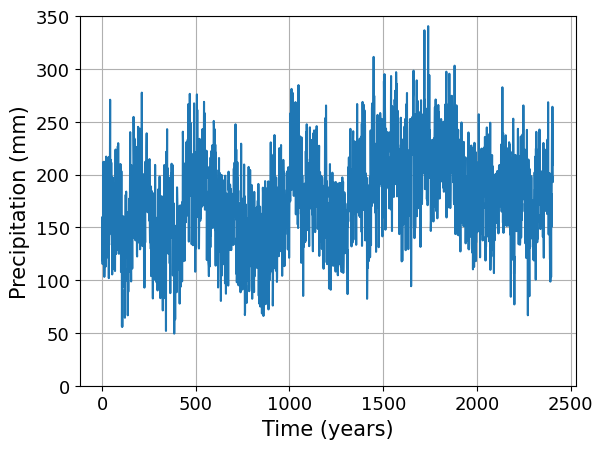}     & \includegraphics[width=0.39\linewidth]{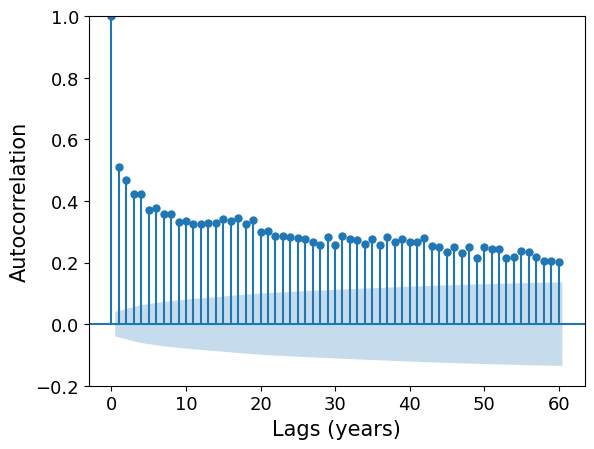}  \\
          \includegraphics[width=0.39\linewidth]{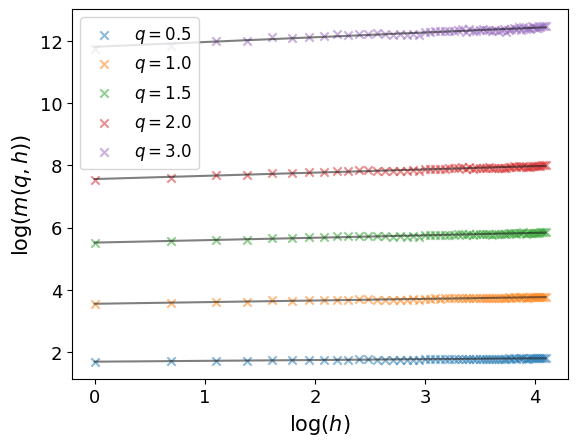}         &      \includegraphics[width=0.39\linewidth]{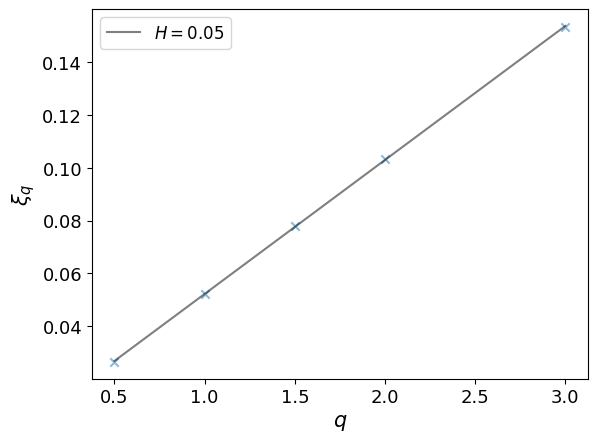}   
            \end{tabular}   
      \caption{\label{tab: TreeCE11_simu}\small \it  Simulation of $\sigma W^H_t$ where $W^H$ is an fBm with parameter $H$ and $\sigma$ and $H$ are estimated from the Central Europe data from \cite{buntgen2011}. Upper Left: Data, pre-processed by RCS reconstruction; Upper Right: autocorrelation of the data with Bartlett standard error bands; Lower Left: Scaling law of the structure function; Lower right: estimation of $H$ by linear regression on the slopes of the structure function.} 
          \end{figure}
          
    \begin{table}[H]
    \centering
\begin{tabular}{llllrrr}
\toprule
Type & Region & Lon & Lat & $N$ & $\Delta$ & $H$ \\
\midrule
Tree rings (RCS) & Central Europe & 6-20$^\circ$E & 45-53$^\circ$N & 2407 & 1 & 0.06 \\
Tree rings (DP) & Tibet & 97-100$^\circ$E & 37-39$^\circ$N & 3512 & 1 & 0.04 \\
Tree rings (NN) & Arizona, USA & 115-113$^\circ$W & 34-37$^\circ$N & 989 & 1 & 0.03 \\
\midrule
Lake sediments & La Cruz, Spain & 2$^\circ$W & 40$^\circ$N & 372 & 1 & 0.07 \\
Pollen & Central Boreal, Canada & 120-80$^\circ$W & 50-70$^\circ$N & 119 & 100 & 0.26 \\
\bottomrule
\end{tabular}
    \caption{Estimates of $H$ for the different paleoclimatic data. $\Delta$ is the aggregation timestep in years and $N$ is the number of consecutive data.}
    \label{tab:H_paleo}
\end{table}

\section{Discussion} \label{sec: discussion}

%
%
%
%
%
%
%

\subsection{Hurst and Mandelbrot revisited}

In his seminal works on reservoir design~\cite{hurst1951, hurst1956a, hurst1956b}, Hurst investigated the long-term behaviour of several annual rainfall series, focusing in particular on the so-called \emph{rescaled range} statistic $R(t,h)/S(t,h)$. This statistic is defined by
\[
R(t,h) ={} \max_{k=0,\ldots, h}
\Big(
\sum_{i=t+1}^{t+k} Z_i^{\Delta}
- \frac{k}{h} \sum_{i=t+1}^{t+h} Z_i^{\Delta}
\Big) \\
-
\min_{k=0,\ldots, h}
\Big(
\sum_{i=t+1}^{t+k} Z_i^{\Delta}
- \frac{k}{h} \sum_{i=t+1}^{t+h} Z_i^{\Delta}
\Big),
\]
and measures the deviation of cumulative rainfall from its mean trend over the considered time window. The quantity
\[
S^2(t,h)
=
h^{-1} \sum_{i=t+1}^{t+h} \left(Z_i^{\Delta}\right)^2
-
\Big(
h^{-1} \sum_{i=t+1}^{t+h} Z_i^{\Delta}
\Big)^2
\]
denotes the empirical variance of the data between times $t+1$ and $t+h$. Hurst observed a behaviour that was unexpected at the time:
\[
\frac{R(t,h)}{S(t,h)} \sim h^{H_{\text{Hurst}}},
\qquad H_{\text{Hurst}} > \frac12.
\]
This finding apparently contradicts the hypothesis of independent observations, which would correspond to the classical case $H_{\text{Hurst}} = 1/2$. Hurst showed that this phenomenon occurs across several different data sets, suggesting a form of universality.\\

Mandelbrot~\cite{mandelbrot1968} subsequently proposed modelling the cumulative rainfall over a time interval $[0,t]$ at macroscopic time scales, $\int_0^t Z_s \, ds$,
by an fBm $(W_t^{H_{\text{Hurst}}})_{t \geq 0}$ with parameter $H_{\text{Hurst}} > 1/2$, in order to account for the Hurst effect observed in the rescaled range statistic. Since the formal derivative of $W^{H_{\text{Hurst}}}_t$ corresponds to a fractional Gaussian noise (fGn) and is too irregular to be studied directly as an ordinary stochastic process\footnote{``Unfortunately, the derivative $B_H'(t)$, called
'fractional Gaussian noise,' is too irregular to
be studied directly. As we interpolated the integral of the independent Gauss process by Brownian motion, we must now replace $X(t)$ by $B_{H}(t+1)-B_{H}(t)$''.}, the authors are led to consider a discrete-time model instead, by setting
\[
Z_k^{\Delta} = Z_{k\Delta/T_{\text{macro}}} = W^{H_{\text{Hurst}}}(k\Delta/T_{\text{macro}}) - W^{H_{\text{Hurst}}}((k-1)\Delta/T_{\text{macro}}).
\]

Our approach is different and is closer in spirit to the work of~\cite{Gatheral2022} on the volatility of financial time series: we model the process $Z_t$ directly, rather than its integral, using a rough fractional process, for instance, an fBm, and $Z_k^{\Delta} = Z_{k\Delta/T_{\text{macro}}}$ is a discrete time sampling of $Z_t$. The resulting regularity is no longer characterised by $H_{\text{Hurst}} > 1/2$, but rather by $H < 1/2$, consistently with the fact that the object under consideration is less regular than the one studied by~\cite{mandelbrot1968}, namely its integral.   \\

In Figure~\ref{fig:charleston}, we illustrate the results of the experiment described in Section~\ref{sec: evidence roughness}, applied to the Charleston rainfall series analysed in~\cite{mandelbrot1969}. While the authors of~\cite{mandelbrot1969} report $H_{\text{Hurst}} = 0.89$ for the integrated process, our analysis consistently yields $H = 0.11$ for the underlying process $Z_t$, with no contradiction. Note that ~\cite{georgakakos1994} also studied the scaling behaviour of $Z_t$, rather than $\int_0^t Z_s\,ds$, for storms recorded in Iowa City, reporting values of $H$ below $1/2$.\\

\begin{figure}[H]
    \centering
    \begin{tabular}{cc}
        \includegraphics[width=0.35\linewidth]{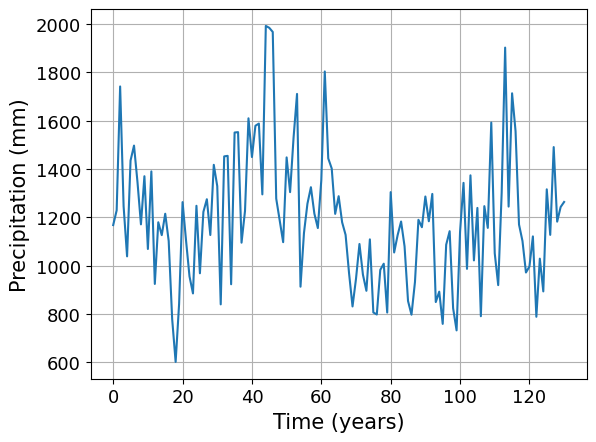}     & \includegraphics[width=0.35\linewidth]{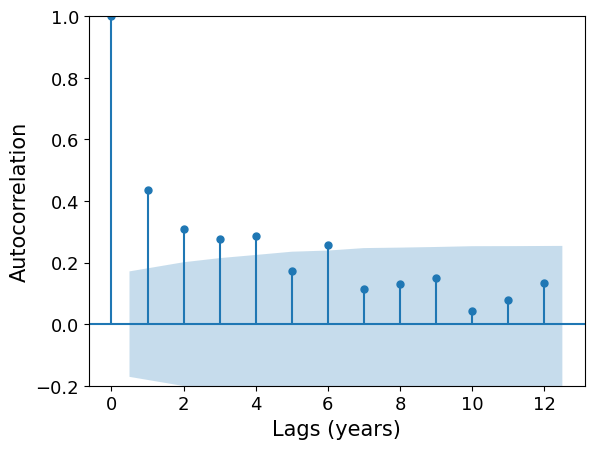}  \\
          \includegraphics[width=0.35\linewidth]{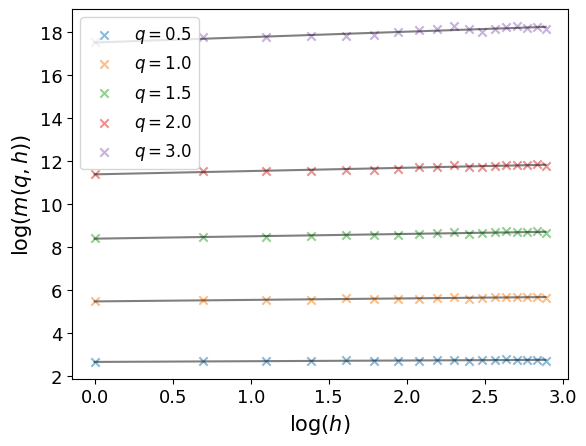}         &      \includegraphics[width=0.35\linewidth]{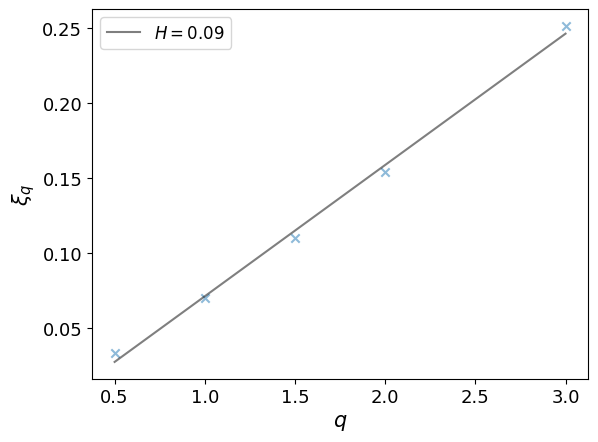} 
    \end{tabular}  
    \caption{\small \it Rainfall in Charleston (USA) between 1832 and 1962, $\Delta=1$ year, $N=131$. Upper Left: Data; Upper Right: autocorrelation of the data with Bartlett standard error bands; Lower Left: Scaling law of the structure function; Lower right: estimation of $H$ by linear regression on the slopes of the structure function.\label{fig:charleston}} 
    \end{figure}

Our framework departs from the seminal approach of~\cite{mandelbrot1968} in two important ways:
\begin{itemize}
    \item The process $(Z_t)$ is stationary by construction in~\cite{mandelbrot1968}. 
    The stationarity for fractional processes such as those in our macroscopic limits is a delicate issue, see \cite{gnabeyeu2025inhomogeneous,gnabeyeu2026}. However, the small values for our Hurst parameter $H$ ensure the consistency of the approach over a very large range of time scales. This is because the scaling between a time interval $[0,1]$ and $[0,T]$ is in $T^H$, which is very slowly increasing with $T$ (with large $H$ some kind of mean reversion force would be needed to avoid exploding behaviours). Consequently, with $H$ small, the laws of the dynamics are very slowly distorted with time. This is exactly what we observe on data, where after a simple time scaling (and no space scaling), 300 years of data look very similar to 3000 years of data (up to the number of data points in each graphs), see Figure~\ref{fig:CE11multiscale}. Such phenomenon was already observed for the volatility of financial asset in \cite{Gatheral2022}.
    \item The fGn framework describes  long-memory processes, characterised by an autocorrelation function $\rho(\tau)$ that decays slowly as $\tau^{2H_{\text{Hurst}}-2}$ when $\tau \to \infty$, with $1 > H_{\text{Hurst}} > 1/2$, as observed for example in Figure~\ref{tab: TreeCE11}. For a stationary process, the autocorrelation function is directly related to the aggregated variance
    \begin{equation}
    \label{eq:Ch}
    C(h)=\mathrm{Var}\Big(\sum_{i=t}^{t+h} Z_i^\Delta\Big).
    \end{equation}
    In particular, if
    \[
    \rho(\tau)\sim \tau^{2H_{\text{Hurst}}-2}
    \qquad \text{as}\; \tau\to\infty,
    \]
    then
    \begin{equation}
    \label{eq:varFGN}
    C(h)\sim h^{2H_{\text{Hurst}}}
    \qquad \text{as}\; h\to\infty.
    \end{equation}
    Relation~\eqref{eq:varFGN} therefore gives a way to estimate $H_{\text{Hurst}}$ by regressing $\log(C(h))$ against $\log(h)$ for large values of $h$. This approach is used, for instance, in~\cite{marani2003, markonis2016, iliopoulou2018}, where the authors obtain estimates $H_{\text{Hurst}} > 1/2$ for several precipitation records. Since our model is not stationary, the usual notions of autocorrelation and long memory are no longer well defined. Nevertheless, we show that our model reproduces the same empirical autocorrelation structure as the observed data; see Figure~\ref{tab: TreeCE11_simu}, which should be compared with Figure~\ref{tab: TreeCE11}. In Figure~\ref{fig:simu_var_emp}, we compare the linear regression of the empirical $\log(C(h))$ against $\log(h)$, whose slope is equal to $2H_{\text{Hurst}}$ in the fGn framework, for both the Central European paleoclimate data and simulations of the model $\sigma W_t^H$ displayed in Figure~\ref{tab: TreeCE11_simu}, with $H=0.06$. In both cases, we obtain a very similar estimate, namely $H_{\text{Hurst}} \approx 0.87$. This phenomenon has already been identified and investigated in~\cite{Gatheral2022}, where the authors show that an fBm can generate such spurious long memory; see in particular Section~4.
\end{itemize}

\begin{figure}[H]
    \centering
    \includegraphics[width=0.8\textwidth]{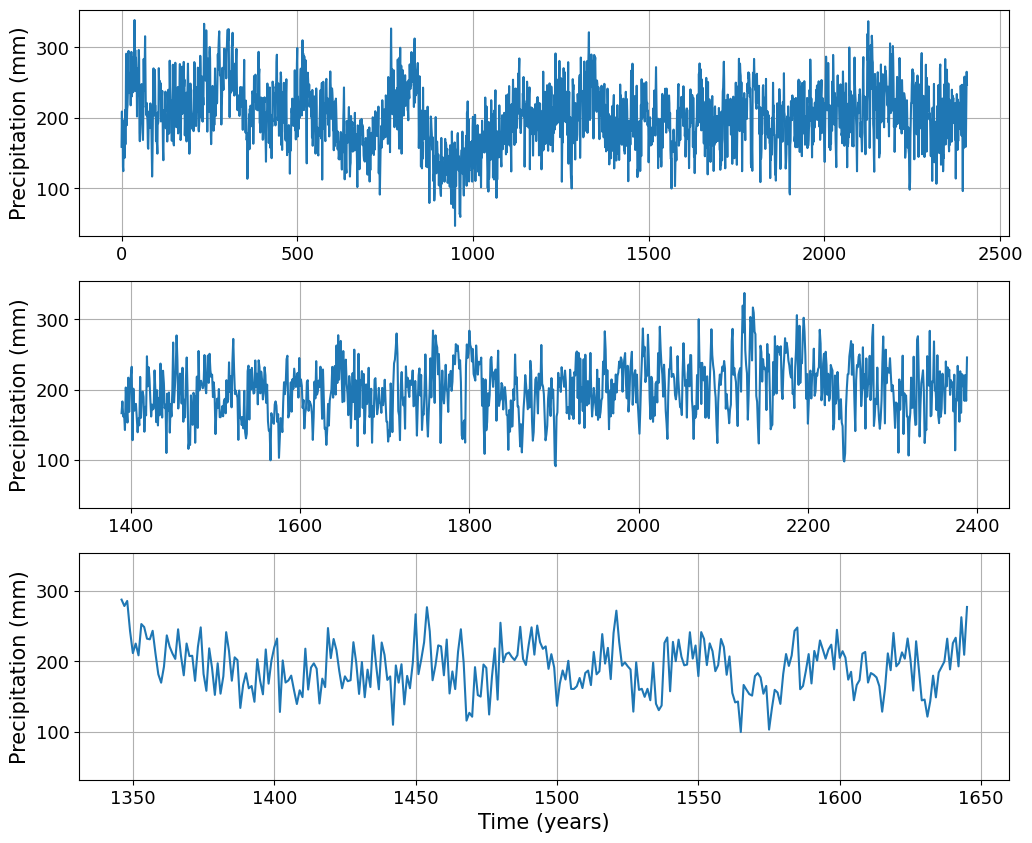}
    \caption{\label{fig:CE11multi}\small Central Europe data from \cite{buntgen2011} over 2407 years (top), 1000 years (middle), and 300 years (bottom).}
    \label{fig:CE11multiscale}
\end{figure}

\begin{figure}[H]
    \centering
    \begin{tabular}{cc}
        \includegraphics[width=0.39\linewidth]{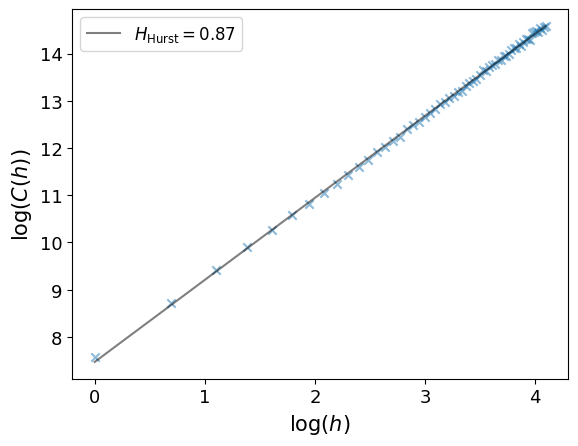}     & \includegraphics[width=0.39\linewidth]{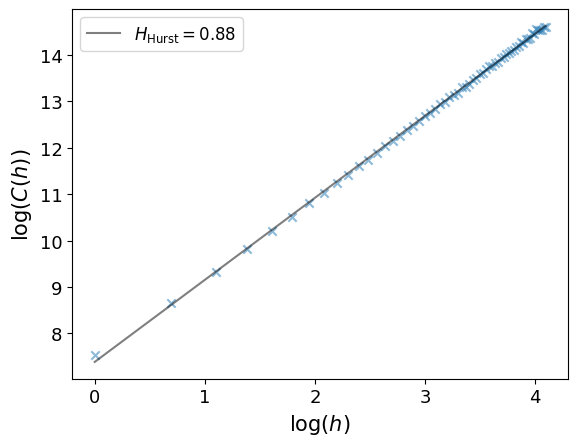} 
            \end{tabular}   
      \caption{\label{fig:simu_var_emp}\small \it Regression of the empirical version of $\log(C(h))$ against $\log(h)$ where $C(h)$ is defined in \eqref{eq:Ch} for the Central Europe data from \cite{buntgen2011} and for a simulation of $\sigma W^H_t$ where $W^H$ is an fBm with parameter $H$ and $\sigma$ and $H$ are estimated from this dataset. The quantity $H_{\text{Hurst}}$ is equal to the estimated slope of the linear regression divided by 2.} 
          \end{figure}


\subsection{From roughness to multifractality}

 In geophysics, there is a wide consensus that rainfall dynamics involve a wide range of time scales, and that scaling laws are somewhat ubiquitous. In particular, power-laws have been observed and characterised, at least some range of scales. In this context, multifractal analysis, first proposed by \cite{SchertzerLovejoy1987} has been applied with some success, despite several experimental biases. For a review, see \cite{LovejoySchertzer2007, LovejoySchertzer2010, VerrierMalletBarthes2011}.\\
 
 Close to our study (and with our notation), we consider the following statistics extracted from the studies of Lovejoy and Schertzer:
 \begin{itemize}
     \item The moment trace function 
     $$m_k(q,\Delta) =\E\big[|Z_k^\Delta|^q\big] = \E\big[\big|\int_{(k-1)\Delta T}^{k \Delta T} Y_s ds\big|^q\big].$$
     \item The fluctuation exponent $$f_{k\Delta}(h) = \E\big[\big|Z_{k+h\Delta^{-1}}^\Delta-Z_k^{\Delta}\big|\big] = \E[\big|Z_{k\Delta+h}-Z_{k\Delta}\big|].$$
 \end{itemize}
The universal multifractal (UM) law of Schertzer-Lovejoy predicts that 
$$m_k(q,\Delta) \approx C_q\Delta^{\phi(q)},$$
when $\Delta$ is small, where 
$$\phi(q) = q(1-H_{SL})-K(q).$$
In the universal log-stable case, 
$$K(q) = \frac{C_1}{\alpha-1}(q^\alpha-q),\;\;C_1\geq0, \;\;\alpha \in [0,2].$$
Here $q \mapsto K(q)$ is the (convex) moment scaling function, characterizing intermittency, with $K(0)=K(1)=0$. The conservation parameter $H_{SL}$ measures the deviation from strict conservation, in which case $H_{SL}=0$. In particular, for $\alpha=2$, we find back the log-normal multifractal model. Note that a stationary assumption on $(Y_t)$ imposes that $m_k(q,\Delta)$ does not depend on $k$. 

At first glance, the fact that $K(q) \neq Hq$ seems in contradiction with the approximation of a fractional process. However there is no contradiction at all as the moment trace function does not measure the increments of $Z_k^\Delta$. Closer to our approach is the fluctuation exponent $f_{k\Delta}(h)$ which corresponds to our Definition \ref{def: fractional process} for $q=1$. Here, we shall therefore anticipate from our empirical study a result of the form
$$f_{k\Delta}(h) \approx \Delta\, h^{H}$$
for some rough parameter $H \in (0,1/2)$. The factor $\Delta$ in front comes from the definition of the aggregated rainfall, it is simply a dimensional prefactor. The UM theory of Schertzer and Lovejoy predicts the relationship 
$$f_{k\Delta}(h) \approx \Delta\,h^{H_{SL}},\;\;\Delta \ll h.$$
The dominant finding across the Lovejoy-Schertzer literature is that rainfall is close to a conservative process, meaning $H_{SL} \approx 0$
in most observational studies\footnote{This is consistent with the physical intuition that total rainfall accumulation is approximately conserved across scales.}. This result is in accordance with our independent approach where we consistently find $H \approx 0$.\\

Note that a parallel can be made again between rainfall and volatility studies regarding multifractal properties. More precisely, it is shown in \cite{Neuman2018fBmZero} that when $H$ is small, processes based on integrals of fBM can reproduce similar multifractal properties\footnote{It is proved in \cite{Neuman2018fBmZero} that once properly normalized, the fBM converges as $H\to0$ to a random Gaussian distribution that is close to a log-correlated Gaussian field thereby establishing the rigorous bridge between the rough volatility class and multifractal processes.} as for example the model of \cite{Bacry2001MRW, Bacry2003LogInf} for describing aggregated volatility. This model corresponds to the log-normal case $\alpha=2$ of the UM approach with
$$\phi(q) = q-\lambda^2q(q-1),$$
where the intermittency parameter $\lambda$ is empirically found to be of order $10^{-2}$, smaller than the $C_1$ parameter in rainfall models. Also, the conservation parameter is systematically set to $0$. See also \cite{Wu2022LogSfBM} for connections between rough and multifractal models.
%



\section*{Acknowledgements}

The authors acknowledge support from the FiME Lab (Institut Europlace de Finance) and from the ILB Chair {\it Artificial Intelligence and Quantitative Methods for Finance} at University Paris Dauphine-PSL.

\bibliographystyle{apalike}
        \bibliography{biblio}

\section{Appendix}  \label{sec: appendix}

\subsection{The BL and NS models as latent degenerate Hawkes processes} \label{app: proof prop embedding}

\begin{prop} \label{prop: embedding} The BL rain cell arrival process $(N_t)_{t \geq 0}$ can be written as $N_t = N_t^{1} + N_t^{2}$, extracted from a three-dimensional Hawkes process $\boldsymbol{N} = (N_t^{1}, N_t^{2}, N_t^{3})_{t \geq 0}$ 
having intensity $\boldsymbol{\lambda} = (\lambda^1_t, \lambda^2_t, \lambda_t^3)_{t \geq 0}$ given by
\[
\boldsymbol{\lambda}_t = \begin{pmatrix}
\mu_{BL} \\ 0 \\ 0\end{pmatrix} + \int_{[0,t)} \begin{pmatrix}
0 & 0 & 0 \\ \nu_{BL} & 0 & - \nu_{BL} \\ \gamma_{BL} & 0 & -\gamma_{BL}
\end{pmatrix}d\boldsymbol{N}_s.\]
This is a degenerate linear Hawkes process (in the sense of allowing for negative entries in its kernel matrix). It has a well-defined nonlinear representation
in the sense of Definition 1 in \cite{DelattreFournierHoffmann2016}: we also have $\boldsymbol \lambda = \boldsymbol h \circ \boldsymbol \lambda$, with
$\boldsymbol h(x_1, x_2, x_3) = ((x_1)_+, (x_2)_+, (x_3)_+)$ with $x_+ = \max(x, 0)$ so that $\boldsymbol h$ is nonnegative and Lipschitz continuous.\\ 

For the NS model, if $C$ is Poisson distributed with mean $\nu_{NS}$, the NS  rain cell arrival process can be written as the second component $(N_t^{2})_{t \geq 0}$ of a bivariate linear Hawkes process $\boldsymbol{N} = (N_t^{1}, N^{2}_{t})_{t \geq 0}$ having intensity $\boldsymbol{\lambda} = (\lambda^1_t, \lambda^2_t)_{t \geq 0}$ given by
\[
\boldsymbol{\lambda}_t =
\begin{pmatrix}
\mu_{NS} \\ 0 \end{pmatrix} + \int_{[0,t)} \begin{pmatrix}
0 & 0 \\ \nu_{NS} f_{NS}(t-s) & 0 \end{pmatrix}d\boldsymbol{N}_{s}.
\]  
\end{prop}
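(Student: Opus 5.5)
We identify the components with the latent objects of each model and check that the stated intensities are the compensators, with respect to the natural filtration of the multivariate process, of the corresponding counting processes. The law of a multivariate counting process is characterised by its intensity (see \cite{Jacod1975}). So it suffices to construct the BL or NS model on a common probability space, define the components from it, and compute their intensities.

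\textbf{BL model.} Let $N^1$ count storm (parent) arrivals and $N^2$ count the rain cells generated during storm lifetimes. Let $N^3$ count storm terminations, i.e. jumps at times $\sigma_j+E_j$, where $\sigma_j$ are storm arrival times and the $E_j$ are i.i.d. exponential with parameter $\gamma_{BL}$. Set $A_t = N^1_{t-}-N^3_{t-}$, the number of storms alive just before $t$.

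1. Since storms arrive as a Poisson process, $\lambda^1_t=\mu_{BL}$.
2. By memorylessness of the lifetimes, the termination intensity is $\gamma_{BL}A_t=\gamma_{BL}\int_{[0,t)}(dN^1_s-dN^3_s)$, which is the third row of the kernel matrix.
3. Conditionally on the storm configuration, each alive storm generates cells at rate $\nu_{BL}$, so $\lambda^2_t=\nu_{BL}A_t$, which is the second row.
4. The cells in $N^2$ do not feed back into any intensity, so the second column is zero.
5. Hence $N=N^1+N^2$, with parents included by the paper's convention, is the BL arrival process.

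Rigorously, the compensator computations rest on two facts. First, the exponential clocks are independent and memoryless given $\mathcal F_{t-}$. Second, the cell-generating Poisson processes are independent of the lifetimes, so thinning and superposition arguments give the compensators.

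For the nonlinear representation, note that $A_t\geq 0$ always, because a storm cannot end before it starts. Hence every coordinate of $\boldsymbol\lambda$ is nonnegative, and applying $\boldsymbol h$ leaves it unchanged, so $\boldsymbol\lambda=\boldsymbol h\circ\boldsymbol\lambda$. This places $\boldsymbol N$ within Definition 1 of \cite{DelattreFournierHoffmann2016}: $\boldsymbol h$ is Lipschitz, and the kernel is constant, hence locally integrable. The existence and pathwise uniqueness results there then imply that $\boldsymbol N$ is the unique solution, so the law of $N^1+N^2$ is indeed that of BL.

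\textbf{NS model.} Let $N^1$ count parents and $N^2$ count offspring. By the Poisson–marking structure, each parent at time $s$ independently produces offspring at positions $s+D_k$, $k\le C$. Here $C$ is Poisson with mean $\nu_{NS}$ and the $D_k$ are i.i.d. with density $f_{NS}$. The offspring of a single parent therefore form a Poisson process with intensity $\nu_{NS}f_{NS}(t-s)$. Superposing over parents, which are independent given $N^1$, gives $\lambda^2_t=\nu_{NS}\int_{[0,t)} f_{NS}(t-s)\,dN^1_s$.

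One subtlety is that filtering with respect to the past of $N^2$ itself does not change this intensity. This holds because, conditionally on $N^1$, the process $N^2$ is a Poisson process with a deterministic intensity, so its past carries no information about its future. With $\lambda^1_t=\mu_{NS}$ as before, this yields the bivariate Hawkes form.

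The main obstacle is the BL compensator computation with the negative kernel entry. One must make sure that conditioning on $\mathcal F_{t-}$ of the whole triple, which includes observed terminations, gives exactly $\gamma_{BL}A_t$ and $\nu_{BL}A_t$. I would handle this by building the process via Poisson random measure thinning on $[0,\infty)\times\R_+$, which directly exhibits the intensities.
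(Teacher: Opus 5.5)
Your proposal is correct and follows essentially the paper's proof. Poisson storms with i.i.d.\ exponential lifetimes give $\lambda^3_t=\gamma_{BL}(N^1_{t-}-N^3_{t-})$ and $\lambda^2_t=\nu_{BL}(N^1_{t-}-N^3_{t-})$, nonnegativity of $N^1_{t-}-N^3_{t-}$ yields $\boldsymbol\lambda=\boldsymbol h\circ\boldsymbol\lambda$, and NS follows by superposing the parents' Poisson offspring processes with intensities $\nu_{NS}f_{NS}(t-\tau_i)$.

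For the filtration issue you flag, you do not need a Poisson random measure construction. The intensities in the enlarged filtration that tracks individual storms are already functions of $N^1_{t-}-N^3_{t-}$, so they remain the intensities in the natural filtration of $\boldsymbol N$. As the paper notes, the characterisation of the law by $\boldsymbol\lambda$ also uses that the components a.s.\ never jump simultaneously.
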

The second part of Proposition \ref{prop: embedding} is Proposition 1 in \cite{smith1985} for the special case where $f_{NS}$ is exponential. 

\begin{proof}
We first consider the Bartlett-Lewis model. Let $N^1_t = \sum_{i \geq 1} {\bf 1}_{\{\tau_i \leq t\}}$ be a Poisson process with intensity $\lambda^1_t = \mu_{BL}$ and arrival times $(\tau_i)_{i \geq 1}$ representing the arrival of parent rain cells. Let $(\varepsilon_i)_{i \geq 1}$ be a sequence of independent exponential random variables with common parameter $\gamma_{BL}$, independent of $(N^1_t)_{t \geq 0}$, representing the lifetime of each parent rain cell. Define, for $t \geq 0$ and $i \geq 1$,
$$N_i^3(t) = {\bf 1}_{\{\tau_i+\varepsilon_i \leq t\}},\;\;\lambda_i^3(t) = \gamma_{BL} {\bf 1}_{\{\tau_i < t \leq \tau_i+\varepsilon_i\}}.$$
The $N_i^3$ are degenerate counting processes (in the sense of counting only one event\footnote{namely the time of death of the parent rain cell $i$.}) with stochastic intensity 
$\lambda_i^3$. During the lifetime of the parent rain cell $i$, conditional on $N^1$ and $(\varepsilon_i)_{i \geq 1}$,  we draw a Poisson process with intensity $\nu_{BL}$. This results in a family of independent (conditional on $N^1$ and $(\varepsilon_i)_{i \geq 1}$) counting processes $(N^2_i)_{i \geq 1}$ with stochastic intensities
$$\lambda_i^2(t) = \nu_{BL}{\bf 1}_{\{\tau_i < t \leq \tau_i+\varepsilon_i\}}.$$
The counting process $N^2_t = \sum_{i=1}^{N_t^1}N_i^2(t)$
has intensity 
$$\lambda_t^2 = \nu_{BL}\sum_{i = 1}^{N_t^1}{\bf 1}_{\{\tau_i < t \leq \tau_i+\varepsilon_i\}} = \nu_{BL} (N_{t^-}^1-N_{t^-}^3),$$
where
$$N_t^3 = \sum_{i = 1}^{N_t^1}N_{i}^3(t)$$
is a counting process with intensity
$$\lambda^3(t) = \gamma_{BL}\sum_{i \geq 1}{\bf 1}_{\{\tau_i < t \leq \tau_i+\varepsilon_i\}} = \gamma_{BL} (N_{t^-}^1-N_{t^-}^3).$$
Finally, the total number of parent and children rain cells at time $t$ is given by $N_t^1+N_t^2$ and the result follows, noting that the components of the point process $\boldsymbol{N} = (N^1, N^2, N^3)$ never jump simultaneously so that its law is entirely characterised by $\boldsymbol{\lambda} = (\lambda^1, \lambda^2,\lambda^3)$. By construction we always have $N_{t^-}^1 - N_{t^-}^3 = (N_{t^-}^1 - N_{t^-}^3)_+$, therefore
\[
\boldsymbol{\lambda}_t = \left( \begin{pmatrix}
\mu_{BL} \\ 0 \\ 0\end{pmatrix} + \int_{[0,t)} \begin{pmatrix}
0 & 0 & 0 \\ \nu_{BL} & 0 & - \nu_{BL} \\ \gamma_{BL} & 0 & -\gamma_{BL}
\end{pmatrix}d\boldsymbol{N}_s
\right)_+
\]
entrywise, where $x_+ = \max(x, 0)$, hence $\boldsymbol N$ is a well-defined nonlinear Hawkes process in the sense of Definition 1 in \cite{DelattreFournierHoffmann2016}.\\


The proof for the Neyman-Scott model is similar: if the number of children rain cells in the NS model is Poisson distributed with parameter $\nu_{NS}$, and if the distance between the parent and the children rain cells has probability distribution function $f_{NS}$, each parent celle $i$ produces cells according to an inhomogeneous Poisson process with intensity $\lambda_i(t) = \nu_{NS} f_{NS}(t-\tau_{i}){\bf 1}_{\{\tau_i < t\}}$, where the $(\tau_{i})_{i \geq 1}$ are the arrival times of parent rain cells following a Poisson process $N^{1}$ with intensity $\mu_{NS}$. Since each parent cell generates rain cells independently, the total number of rain cells is a counting process $N^2$ with intensity 
$$\sum_{i=1}^{N^{1}_t}    \nu_{NS} f_{NS}(t-\tau_{i}){\bf 1}_{\{\tau_i < t\}} = \int_{[0,t)} \nu_{NS} f_{NS}(t-s)dN_{s}^1.$$ 
The process $\boldsymbol{N} = (N^1,N^2)$ has intensity  
\[
\boldsymbol{\lambda}_t =
\begin{pmatrix}
\mu_{NS} \\ 0 \end{pmatrix} + \int_{[0,t)} \begin{pmatrix}
0 & 0 \\ \nu_{NS} f_{NS}(t-s) & 0 \end{pmatrix}d\boldsymbol{N}_{s}
\]  
and the Neyman-Scott process is simply\footnote{Parent rain cells are not included in $N_t$ in the original Neyman-Scott model.} 
$N^2$. The proof of Proposition \ref{prop: embedding} is complete.
\end{proof}

\subsection{Proof of Proposition \ref{prop: expo second-order}} \label{app: proof prop expo second-order}

We only sketch the proof, as it actually follows from the computations of Propositions \ref{prop: fourier spectral} and \ref{prop: multiscale contrast} given below. 
From \eqref{eq: var en fourier} and \eqref{eq: cov en fourier} in the proof of Proposition \ref{prop: multiscale contrast} below, we have
$$\mathrm{Var}(Y_i^\delta)   = 2 \int_0^\delta (\delta-s)\mathrm{Cov}\big(Y_0, Y_s\big)ds$$
and
$$\mathrm{Cov}(Y_i^{\delta}, Y_{j}^\delta) =  \delta \int_{-\infty}^{\infty}\left(1 - \frac{|s|}{\delta}\right)_+\mathrm{Cov}(Y_0,Y_{(j-i) \delta-s})\,ds.$$
It then suffices to compute explicitly $\mathrm{Cov}\big(Y_0, Y_s\big)$ for all three models. This follows from the representation \eqref{eq: cov generique} once the Bartlett spectrum in \eqref{eq: bartlett spectrum} has been specified. The Hawkes case is treated in Proposition \ref{prop: multiscale contrast}, while the necessary ingredients for the BL and NS models are given explicitly in \eqref{eq: specific BL} and \eqref{eq: specific NS}, respectively, in the proof of Proposition \ref{prop: fourier spectral} below.

\subsection{About the stationarity assumption} \label{sec: stationary assumption}
For both methods, it is more convenient to work with a stationary version of $(N_t)_{t \geq 0}$ continuated over negative time $t \leq 0$. This means that if 
$\mathcal N((s,t]) = N_t-N_s$ denotes the (uniquely defined) random measure over Borel sets of $[0,\infty)$, then it admits an extension over the whole real line such that for any Borel set $ A \subset \R$ and every $\tau \in \R$, we have $\mathcal N(A+\tau) = \mathcal N(A)$ in distribution.\\

As soon as the Hawkes process is stable in the sense of Definition \ref{def: stability}, such an extension always exists. One possible construction is the following: we start with a Poisson random measure $\Pi(ds,d\vartheta)$ on $\R\times [0,\infty)$ with intensity $ds\otimes d\vartheta$. We then set
$$\lambda_t = \mu+\int_{(s,\vartheta) \in (-\infty, t) \times [0,\infty)}{\bf 1}_{\{\vartheta \leq \lambda_s\}}\varphi(t-s)\Pi(ds,d\vartheta),\;\;t \in \R,$$ 
where we set $\varphi(t)=0$ for $t <0$. As soon as the model is stable, the process $(\lambda_t)_{t \in \mathbf{R}}$ is uniquely defined and  stationary (in the sense that $\lambda_{t+\tau} = \lambda_t$ in distribution for every $t,\tau \in \R$). The stationary random counting measure $\mathcal N$ is defined using the same realisation of the Poisson random measure $\Pi$ by
\[
\mathcal N(ds)
=
\int_{[0,\infty)}
{\bf 1}_{\{\vartheta\leq\lambda_{s}\}}
\Pi(ds,d\vartheta).
\]
The associated process $(N_t)_{t \in \R}$, anchored at $N_0=0$, is then defined by
\[
N_t
=
\begin{cases}
\mathcal N((0,t]), & t\geq0,\\[2mm]
-\mathcal N((t,0]), & t<0.
\end{cases}
\]
In particular, we have
$$\lambda_t =  \mu + \int_{(-\infty, t)} \varphi(t-s)dN_s,\;\;t \in \R.$$
Moreover, by \cite[Lemma 1.1.5]{bremaud2020}, there exists a random sequence $\{ T_m\}_{m \in \mathbb N}$ such that the point measure $\mathcal N(dt)$ has representation
$$N(dt) = \sum_{m \in \mathbb{N}} \delta_{T_m}(dt).$$
From a modelling point of view, assuming a stationary version means that the process we observe has some history in the past combined with a stability property. This is not a strong assumption in the context of rainfall modelling. In particular, in the representation \eqref{eq:intensity_rain}, we now have, for $t \in \R$,
\begin{equation} \label{eq: def models spectral stat}
Y_t  = \sum_{m \in \mathbb{N}} I_m \,{\bf 1}_{\{T_m \leq t < T_m + L_m\}},
\end{equation}
where $\{I_m, L_m\}_{m \in \mathbb N}$ are independent with common distribution and represent the rectangular pulse associated to the arrival time 
$T_m$ of a rain cell.

\subsection{Proof of Proposition \ref{prop: fourier spectral}} \label{app: proof prop fourier}

We take a stationary version of $(Y_t)_{t \in \R}$ using the representation \eqref{eq: def models spectral stat}. We plan to apply \cite[Theorem 9.4.1]{bremaud2020}. Introduce $h_t(u, (i,l)) = i {\bf 1}_{\{l> t-u\}}{\bf 1}_{\{t-u \geq 0\}}$ and $H_t(u) = \E[h_t(u, (I, L))]$. The assumptions of the theorem are verified as soon as $H_t \in L^1(\R) \cap L^2(\R)$, which is granted by the conditions $\E[L]<\infty$ and $\E[I^2]<\infty$. We obtain
\begin{align}
\mathrm{Cov}(Y_t, Y_0)  & =  \mathrm{Cov}\Big(\sum_{m \in \mathbb N} h_t(T_m, (I_m,L_m)), \sum_{m \in \mathbb N} h_0(T_m, (I_m,L_m))\Big) \nonumber \\
& = \int_{\R} \mathcal F H_t(2\pi \omega) \overline{\mathcal F H_0(2\pi \omega)}\rho(d\omega) \nonumber \\
&+ \Lambda \int_{\R} \mathrm{Cov}\big(\mathcal Fh_t(2\pi \omega, (I, L)), \overline{\mathcal Fh_0(2\pi \omega, (I, L))}\big) d\omega, \label{eq: cov generique}
 \end{align}
where 
\begin{equation} \label{eq: bartlett spectrum}
\rho(d\omega) = \frac{\Lambda}{|1-\mathcal F\varphi(2\pi \omega)|^2}d\omega
\end{equation}
is the Bartlett spectrum of a linear Hawkes process with parameters $(\mu, \varphi)$, see {\it e.g.} \cite[Theorem 12.3.1]{bremaud2020}.
From $\mathcal FH_t(2\pi\omega) = \exp(-\iota 2\pi\omega t)\mathcal FH_0(2\pi \omega)$, with $\iota^2=-1$, and likewise for $\mathcal Fh_t$, we further have that $\mathrm{Cov}(Y_t, Y_0)$ equals
\begin{align*}
&\Lambda \int_{\R}\frac{ |\mathcal F H_0(2\pi \omega)|^2 }{|1-\mathcal F\varphi(2\pi \omega)|^2}\exp(-\iota 2\pi \omega t)d\omega + \Lambda \int_{\R} \mathrm{Var}\big(\mathcal Fh_0(2\pi \omega, (I, L)\big)\exp(-\iota 2\pi \omega t) d\omega \\
& = \frac{\Lambda}{2\pi} \int_{\R}\frac{ |\mathcal F H_0(\omega)|^2 }{|1-\mathcal F\varphi(\omega)|^2}\exp(\iota\omega t)d\omega + \frac{\Lambda}{2\pi} \int_{\R} \mathrm{Var}\big(\mathcal Fh_0( \omega, (I, L))\big)\exp(\iota\omega t) d\omega.
\end{align*}
By Fourier inversion, we deduce
$$\mathcal F_Y(\omega) =  \Lambda \frac{ |\mathcal F H_0(\omega)|^2 }{|1-\mathcal F\varphi(\omega)|^2}+ \Lambda\mathrm{Var}\big(\mathcal Fh_0( \omega, (I, L)).$$
Now, since $\mathcal Fh_0(\omega, (I,L)) = I \iota \frac{1-\exp(\iota\omega L)}{\omega}$ we have
$$\mathcal FH_0(\omega) = \E[I] \iota \frac{1-\E[\exp(\iota\omega L)]}{\omega}\;\;\text{and}\;\;|\mathcal FH_0(\omega) |^2 =  \E[I]^2 \frac{|1-\E[\exp(\iota\omega L)]|^2}{\omega^2}.$$
Moreover
$$\mathrm{Var}\big(\mathcal Fh_0(\omega, (I,L))\big) = 2\E[I^2]\frac{1 - \mathrm{Re}(\E[\exp(\iota\omega L)])}{\omega^2}- |\mathcal FH_0(\omega)|^2,$$
and the result follows by elementary computations. The proof of Proposition \ref{prop: fourier spectral} is complete.\\

We now treat the case of the BS and NS models. We have the same formula, replacing formally $\frac{1}{|1 - \mathcal F \varphi(\omega)|^2}-1$ by $p(\omega)$. For the BL model, we have
\begin{equation} \label{eq: specific BL}
p(\omega) = 2\frac{\nu_{BL} \,\gamma_{BL}}{\gamma_{BL}^2+\omega^2},\;\;\Lambda = \mu_{BL}\Big(1+\frac{\nu_{BL}}{\gamma_{BL}}\Big).
\end{equation}
For the NS model, we have: 
\begin{equation} \label{eq: specific NS}
p(\omega) = \frac{\nu_{NS}\,\gamma_{NS}^2}{\gamma_{NS}^2+\omega^2},\;\;\Lambda  = \mu_{NS}\,\nu_{NS},
\end{equation}
in the case where $C$ follows a Poisson distribution with parameter $\nu_{NS}$ and $f_{NS}(t)dt$ is exponentially distributed with parameter $\gamma_{NS}$. These formulas are obtained with the same strategy as in the Hawkes case. The Bartlett spectrum \eqref{eq: bartlett spectrum} becomes
$$\rho(d\omega)  = \Lambda (1+p(\omega))d\omega,$$
applying the Fourier transform to the covariance formulas given right after (4.11) for the BL model and in (3.3) for the NS model in \cite{RodriguezIturbe1987}.

%

\subsection{Proof of Proposition \ref{prop: multiscale contrast}} \label{app: proof of multiscale contrast}
We have, for $\omega \in \R$,
\[
\mathcal{F}\varphi(\omega) = \sum_{i=1}^d \frac{\alpha_i}{\beta_i+\iota\omega}, 
\]
and hence
\begin{equation}
    \label{eq:fraction_bartlett}
\frac{1}{|1-\mathcal{F}\varphi(\omega)|^2} = \frac{\prod_{i=1}^d (\beta_i^2+\omega^2)}{D(\omega)},
\end{equation}
with 
   $$
D(\omega)  = \Big(\prod_{i=1}^d (\beta_i-\iota\omega) - \sum_{j=1}^d \alpha_j\prod_{\substack{i=1\\i\neq j}}^d (\beta_i-\iota\omega)\Big)\Big(\prod_{i=1}^d (\beta_i+\iota\omega) - \sum_{j=1}^d \alpha_j\prod_{\substack{i=1\\i\neq j}}^d(\beta_i+\iota\omega)\Big),
$$
and $D(\omega)$ is a positive polynomial function with degree $2d$ hence with $2d$ non-real complex roots. Since the $\beta_i$ are distinct and $\sum_{i = 1}^d \alpha_i/\beta_i < 1$, the roots of $D$ are different from $\pm \iota \beta_i$. Moreover, it follows from the factorisation of $D$ that every root $\omega$ satisfies
\[
1 - \sum_{j=1}^d \frac{\alpha_j}{\beta_j - \iota \omega} = 0 \text{ or } 1 - \sum_{j=1}^d \frac{\alpha_j}{\beta_j + \iota \omega} = 0
\]
therefore a root $\omega$ has a vanishing real part. The roots of $D(\omega)$ can then be written as $\pm \iota p_i$ with $p_i \in \mathbb{R}$ for $i=1,\ldots,d$, and the $p_i$ are the roots of 
\[
P(\omega) = \prod_{i=1}^d (\beta_i-\omega) - \sum_{j=1}^d \alpha_j\prod_{\substack{i=1\\i\neq j}}^d (\beta_i-\omega)
\]
since $D(\iota\omega) = P(\omega)P(-\omega)$. Since $\sum_{i=1}^d \frac{\alpha_i}{\beta_i} < 1$,
\[
P(\omega) = \prod_{i=1}^d (\beta_i-\omega)\Big(1 - \sum_{j=1}^d \frac{\alpha_j}{\beta_j-\omega}\Big) > 0 \text{ for } \omega \leq 0
\]
and the roots $p_i \in \mathbb{R}$ of $P$ are all positive. Without loss of generality, assuming $\beta_1 < \ldots < \beta_d$, the function 
\[
f(\omega) = 1 - \sum_{j=1}^d \frac{\alpha_j}{\beta_j-\omega}
\]
is strictly decreasing on each interval $(\beta_{i-1},\beta_{i})$, $i=1,\ldots,d$, where $\beta_0 = 0$. Moreover, $f(\beta_0) = 1 - \sum_{j=1}^d \frac{\alpha_j}{\beta_j} > 0$, $\lim_{\omega\uparrow\beta_i} f(\omega)=-\infty$, and $\lim_{\omega\downarrow\beta_i} f(\omega)=\infty$ for $i=1,\ldots,d$. Therefore, there is a unique root in each interval $(\beta_{i-1}, \beta_{i})$ for $i=1,\ldots,d$ and the roots are pairwise distinct.

\medskip
We can then rewrite \eqref{eq:fraction_bartlett} as
\[
\frac{1}{|1-\mathcal{F}\varphi(\omega)|^2}  = Q(\omega) + \sum_{i=1}^d \Big(\frac{E_i}{\omega - \iota p_i} + \frac{F_i}{\omega + \iota p_i}\Big).
\]
Since the left-hand side of \eqref{eq:fraction_bartlett} converges to $1$ as $|\omega| \to \infty$, the polynomial $Q$ is identically equal to $1$. We have
$E_i = \lim_{\omega \to \iota p_i} \tfrac{\prod_{j=1}^d (\beta_j^2+\omega^2)(\omega - \iota p_i)}{D(\omega)}$. Equivalently,
$E_i = -\iota A_i$ with 
\[
A_i = -\frac{\prod_{j=1}^d (\beta_j^2-p_i^2)}{P(-p_i)}\lim_{\omega \to p_i}\frac{\omega - p_i}{P(\omega)} = -\frac{\prod_{j=1}^d (\beta_j^2-p_i^2)}{P(-p_i)}\frac{1}{P'(p_i)}.
\]
Moreover $F_i = \lim_{\omega \to -\iota p_i} \frac{\prod_{j=1}^d (\beta_j^2+\omega^2)(\omega + \iota p_i)}{D(\omega)}$ is the conjugate of $E_i$ and therefore equals $\iota A_i$. We finally get
\begin{equation} \label{eq:decomp_fraction}
 \frac{1}{|1-\mathcal{F}\varphi(\omega)|^2} = 1 + \sum_{i=1}^d \frac{2A_i p_i}{\omega^2 + p_i^2}.
\end{equation}
By Proposition \ref{prop: fourier spectral}, together with
$
\E[\exp(\iota \omega L)] = \frac{\lambda_L}{\lambda_L - \iota \omega}
$
and \eqref{eq:decomp_fraction}, we obtain 
\[
\begin{split}
\mathcal{F}_Y(\omega) 
&= \frac{\Lambda}{\omega^2 + \lambda_L^2} \Big(2\mathbb{E}[I^2] 
+ \mathbb{E}[I]^2\sum_{i=1}^d \frac{2A_i p_i}{\omega^2 + p_i^2} \Big)\\
&= \Lambda \lambda_L^{-1}\Big(\mathbb{E}[I^2] 
- \mathbb{E}[I]^2 \sum_{i=1}^d \frac{A_i p_i}{\lambda_L^2 - p_i^2}\Big)
\frac{2\lambda_L}{\omega^2 + \lambda_L^2} 
+ \Lambda\sum_{i=1}^d \frac{\mathbb{E}[I]^2 A_i}{\lambda_L^2 - p_i^2}
\frac{2p_i}{\omega^2+p_i^2} \\
& = \Lambda \sum_{i=1}^d C_i \frac{2p_i}{p_i^2+\omega^2} +\Lambda C_{d+1}\frac{2\lambda_L}{\lambda_L^2+\omega^2}.
\end{split}
\]
Fourier inversion yields
\begin{equation} \label{eq: cov continue}
\mathrm{Cov}\big(Y_0, Y_t\big) = \Lambda  \sum_{i=1}^d C_i \exp(-p_i|t|) +\Lambda C_{d+1}\exp(-\lambda_L|t|),\;\;t \in \R.
\end{equation}
We are ready to establish the variance formula of Proposition \ref{prop: multiscale contrast}. By stationarity, we have
\begin{align}
\mathrm{Var}(Y_k^h)  = \mathrm{Var}(Y_0^h) &  = \int_0^h\int_0^h\mathrm{Cov}\big(Y_s, Y_t\big)ds\,dt = 2 \int_0^h\int_0^t\mathrm{Cov}\big(Y_s, Y_t\big)ds\,dt \nonumber\\
& =  2 \int_0^h\int_0^t\mathrm{Cov}\big(Y_0, Y_{t-s}\big)ds\,dt =  2 \int_0^h (h-s)\mathrm{Cov}\big(Y_0, Y_s\big)ds.   \label{eq: var en fourier}
\end{align}
From the elementary identity
$$2\int_0^h \exp(-u s) (h-s)ds = \frac{2h}{u}\Big(1- \frac{1-\exp(-uh)}{uh}\Big),\;\;u>0,$$
and \eqref{eq: cov continue}, we obtain the variance formula. For the covariance formula, we have likewise
\begin{align}
\mathrm{Cov}(Y_k^{h}, Y_{k'}^h) &= \int_{(k'-1)h}^{k' h} \int_{(k-1)h}^{k h} \mathrm{Cov}(Y_s,Y_t)\,ds\,dt = \int_{(k'-1)h}^{k' h} \int_{t-kh}^{t-(k-1)h} \mathrm{Cov}(Y_0,Y_{s})\,ds\, dt \nonumber \\
&=  \int_{(k'-k-1)h}^{(k'-k+1)h}\mathrm{Cov}(Y_0,Y_{s}) \int_{(s+(k-1)h) \vee (k'-1)h}^{(s+kh)\wedge k'h}dt \, ds \nonumber \\
&= h \int_{(k'-k-1)h}^{(k'-k+1)h}\left(1 - \frac{|(k'-k)h-s|}{h}\right)\mathrm{Cov}(Y_0,Y_{s}) \, ds \nonumber \\
&= h \int_{-\infty}^{\infty}\left(1 - \frac{|s|}{h}\right)_+\mathrm{Cov}(Y_0,Y_{(k'-k) h-s})\,ds. \label{eq: cov en fourier}
\end{align}
The covariance formula then follows from \eqref{eq: cov en fourier} and the identity
$$
\int_{-\infty}^{\infty}\Big(1 - \frac{|s|}{h}\Big)_+ \exp(-u|\tau h- s|) ds = \frac{\exp(-u(\tau-1)h)}{u^2h}\big(1-\exp(-uh) \big)^2,\;\;u>0,\;\;\tau \geq 1.
$$
The proof of Proposition \ref{prop: multiscale contrast} is complete.

\subsection{Proof of Lemma \ref{lemma: error bound}} \label{app: proof of error bound}

We have
\begin{equation} \label{eq: first bound xi}
|\xi_{t} |  =  \sum_{i \geq 1} I_i (\tau_i+L_i-t){\bf 1}_{\{\tau_i \leq t \leq \tau_i+L_i\}} 
 \leq  \sum_{i \geq 1} I_i L_i{\bf 1}_{\{\tau_i \leq t \leq \tau_i+L_i\}}.
\end{equation}
For computational purposes, it will be convenient to use a Poisson measure representation; to that end, let $\Pi(ds\,d\vartheta, di, d\ell)$ be a Poisson random measure on $[0,\infty)^4$ with intensity $ds\,d\vartheta \otimes f_I(di) \otimes f_L(d\ell)$, where $f_I$ and $f_L$ denote the distributions of $I$ and $L$, respectively. If the intensity $(\lambda_t)_{t \geq 0}$ of $(N_t)_{t \geq 0}$ is realized with the same Poisson measure, namely
$$\lambda_t = \mu+\int_{[0,\infty)^2}{\bf 1}_{\{\vartheta \leq \lambda_s\}}\varphi(t-s)\widetilde{\Pi}(ds\,d\vartheta),$$
where $\widetilde{\Pi}(ds,d\vartheta) = \int_{[0,\infty)^2}\Pi(ds\,d\vartheta, di, d\ell)$is a standard Poisson random measure on $[0,\infty)^2$ with intensity $ds\,d\vartheta$,
we then have
$$\sum_{i \geq 1} I_i L_i{\bf 1}_{\{\tau_i \leq t \leq \tau_i+L_i\}} = \int_0^t \int_{[0, \infty)^3} i\ell {\bf 1}_{\{\vartheta \leq \lambda_s^T\}}{\bf 1}_{\{s \leq t \leq s+\ell\}}\Pi(ds\,d\vartheta, di, d\ell).$$ 
By \eqref{eq: first bound xi}, it follows that
\begin{align} 
\E\big[|\xi_{t} | \big] & \leq \int_0^{t} \int_{[0,\infty)}\E[I]\E[\lambda_s]\ell {\bf 1}_{\{s \leq t \leq s+\ell\}} f_L(d\ell) ds  \nonumber\\
& = \E[I] \int_0^{t}\E[\lambda_s]\int_{t-s}^\infty\ell f_L(d\ell) ds. \label{eq: second bound xi}
\end{align}
Introducing $\Psi(s)  = \sum_{n \geq 1} \varphi^{\star n}(s)$, where $\star n$ denotes $n$-fold convolution,
we have
\begin{equation} \label{eq: renewal}
\E[\lambda_s] = \mu\Big(1+\int_0^s \Psi(u)du\Big) \leq \mu\Big(1+\int_0^\infty \Psi(u)du\Big),
\end{equation}
as the solution of the renewal equation
$\E[\lambda_t]  = \mu +\int_0^t \varphi(t-s)\E[\lambda_s]ds$
that follows from the very definition $\lambda_t = \mu + \int_{[0,t)} \varphi(t-s)dN_s$ and the fact that $N_t-\int_0^t \lambda_sds$ is a (local) martingale. 
Injecting \eqref{eq: renewal} into \eqref{eq: second bound xi} yields
\begin{align*}
\E\big[|\xi_{t} | \big] & \leq \E[I]\mu\Big(1+\int_0^\infty \Psi(u)du\Big) \int_0^{t}\int_{t-s}^\infty\ell f_L(d\ell) ds \\
& \leq  \E[I]\mu\Big(1+\int_0^\infty \Psi(u)du\Big) \int_0^{\infty}\int_{s}^\infty\ell f_L(d\ell) ds \\
& = \E[I]\mu\Big(1+\int_0^\infty \Psi(u)du\Big)\E[L^2]. 
\end{align*}
Finally, 
$$\mu\Big(1+\int_0^\infty \Psi(u)du\Big) = \mu\big(1+\sum_{n \geq 1}\|\varphi\|_{1}^n\big) = \frac{\mu}{1-\|\varphi\|_{1}} \lesssim T^{2\alpha-1}$$
as follows from the criticality condition \eqref{eq: kernel expansion}, and this establishes the first bound. For the second bound, for $t\geq 0$, introduce
$$\zeta_{t} = \sum_{i = 1}^{N_{t}}(I_iL_i-\kappa).$$
By independence of $(N_{t})_{t \geq 0}$ and $(I_i,L_i)_{i \geq 1}$, $\zeta_{t}$ is centered. It then suffices to show that $\E[\zeta_{tT}^2] \lesssim T^{2\alpha}$ uniformly in $t \in [0,1]$. By independence again, we have
\begin{align*}
\E\big[\zeta_{tT}^2\big]   \leq \E[I^2]\E[L^2] \E[N_{tT}]  \leq \E[I^2]\E[L^2] \int_0^{tT} \E[\lambda_s]ds 
 \lesssim \frac{tT\mu}{1-\|\varphi\|_{1}} \lesssim T^{2\alpha}
\end{align*}
using the same estimates derived from \eqref{eq: renewal} above. The proof of Lemma \ref{lemma: error bound} is complete.

\subsection{Proof of Proposition \ref{prop: small meets large}} \label{sec: proof of small meets large}
To avoid trivialities, we assume $\Delta T \geq \delta$ or, in terms of sample size, $n \geq N$. First, we rescale large time data over $[0,T]$ in order to compare times series over the standardised time interval $[0,T]$ using \eqref{eq: continuous embedding}. For $k=1,\ldots, N$, a large time data $Z_{k}^\Delta$ measures the aggregation of rainfall over the (standardised) time interval 
\begin{align*}
&\big[(k-1)\Delta T, k\Delta T\big] = \bigcup_{\ell\;\text{such that}\;(\ell-1)\delta = (k-1)\Delta T}^{\ell\;\text{such that}\;\ell \delta= k\Delta T} [(\ell-1)\delta, \ell\delta].
\end{align*}
Ignoring boundary issues and assuming that $\Delta T/\delta$ is an integer, we thus have the correspondence
\begin{align*}
Z_k^\Delta  = \sum_{\ell = (k-1)\Delta T/\delta+1}^{k\Delta T/\delta} Y_\ell^\delta  = \int_{(k-1)\Delta T}^{k\Delta T} Y_s\,ds,
\end{align*}
using the aggregation representation $Y_\ell^\delta = \int_{(\ell-1)\delta}^{\ell \delta} Y_s\,ds$ in \eqref{eq: data micro}, and where $(Y_t)_{t \in [0,T]}$ is the latent intensity process \eqref{eq:intensity_rain}.  Now, 
let $(Y_t)_{t \in [0,T]}$ be driven by a linear critical Hawkes process as in Section \ref{sec: connecting hawkes and fractional} above. The following approximation becomes valid:
\begin{align*}
T^{-2\alpha}\int_{(k-1)\Delta T}^{k\Delta T} Y_sds  
& =  \kappa \int_{(k-1)\Delta}^{k\Delta} X_s\,ds+o(1)
\end{align*}
by Lemma \ref{lemma: error bound} and \eqref{eq: serious scale}.
By the continuity of the sample paths of $(X_t)_{t \in [0,1]}$, in the limit $\Delta \rightarrow 0$ and $T \rightarrow \infty$  we obtain, thanks to the interpolation \eqref{eq: continuous embedding}
\begin{align*}
T^{-2\alpha} Z_{k\Delta}  = T^{-2\alpha} Z_k^\Delta & =  \kappa \int_{(k-1)\Delta}^{k\Delta} X_s\,ds+o(1) \\
 & = \kappa \Delta X_{k\Delta}  + o(1),
\end{align*}
which yields 
$$T^{-2\alpha} Z_{t} = \kappa \Delta X_{t}  + o(1)$$
for $t \in [0,1]$ by the continuity of the path of $(X_t)_{t \in [0,1]}$ and  $(Z_t)_{t \in [0,1]}$.
Having $(X_t)_{t \in [0,1]}$ to be a fractional process with Hurst index $H$, the same result holds for $(Z_t)_{t \in [0,1]}$. The critical exponent $\alpha$ of the small time model of the intensity process $(Y_t)_{t \in [0,T]}$ has a large time trace via the Hurst index $H =\alpha-1/2$ in the fractional limit. The proof of Proposition \ref{prop: small meets large} is complete.

\subsection{Proof of Proposition \ref{prop: frac property}} \label{app: proof of frac property}
We first prove the upper bound in \eqref{eq: scaling property gen}. We plan to apply \cite{AbiJaberLarssonPulido2019}. By assumption, for $h >0$, the bound on $k_H$ implies, 
\begin{equation} \label{eq: small h}
(k_H(t+h)-k_H(t))^2 \leq C_H^2((t+h)^{H-1/2}+t^{H-1/2})^2 \lesssim h^{2H-1}\;\;\text{for}\;\;t \in [0,h].
\end{equation}
For $t \geq h$, writing 
$$k_H(t+h)-k_H(t) = k_H(t)\Big(\frac{k_H(t+h)}{k_H(t)}-1\Big),$$
from\footnote{The symbol $\sim$ means inequality in both ways, up to constants that may depend on $c_H, C_H$ and $H$ only.} 
$$\frac{k_H(t+h)}{k_H(t)} \sim \Big(1+\frac{h}{t}\Big)^{H-1/2},$$
and
the elementary bound $|(1+x)^{H-1/2}-1| \sim x$ valid for $H \in (0,1) \setminus \{\tfrac{1}{2}\}$ and $x \in (0,1)$, we derive
$$\Big|\frac{k_H(t+h)}{k_H(t)}-1\Big| \sim \frac{h}{t}\;\;\text{for}\;\;t \geq h..$$
It follows that 
\begin{equation} \label{eq: big h}
(k_H(t+h)-k_H(t))^2 \lesssim t^{2H-3}h^2\;\;\text{for}\;\;t \geq h.
\end{equation}
Putting together \eqref{eq: small h} and \eqref{eq: big h}, we obtain for $T>0$ and small enough $h$:
\begin{equation} \label{eq: eduardo 1}
\int_0^T(k_H(t+h)-k_H(t))^2dt \lesssim h^{2H-1}h+h^2\int_h^T t^{2H-3}dt \lesssim h^{2H}.
\end{equation}
Moreover, we readily have $\int_0^h k_H(t)^2dt \lesssim h^{2H}$. The kernel $k_H$ satisfies the crucial condition (2.5) of \cite{{AbiJaberLarssonPulido2019}} and the upper bound of  Proposition \ref{prop: frac property} readily follows from the proof of \cite[Lemma 2.4]{AbiJaberLarssonPulido2019}, see in particular the bound right after the estimate (2.10) in the proof of this lemma. Moreover, for every $T>0$, the lemma entails $\sup_{t \in [0,T]}\E[|h(Z_t)|^q] < \infty$ for some $q>2$, a bound that we will need later on.\\ 

We next prove the lower bound in \eqref{eq: scaling property gen} under the restriction $q \geq 2$.  For $t\geq 0$, introduce 
$$\xi_t = \int_{0}^t (k_H(t-u)-k_H(s-u){\bf 1}_{\{u \leq s\}})h(Z_u)dB_u.$$
From 
$$\E\big[|Z_t-Z_s|^q\big] \geq 2^{1-q}\E\big[|\xi_t-\xi_s|^q\big]-|g(t)-g(s)|^q$$
and the fact that $|g(t)-g(s)|^q \lesssim |t-s|^q$ by Lipschitz continuity, it suffices to prove the bound for $(\xi_t)_{t \geq 0}$.  
The random process
$$t'\mapsto \int_{0}^{t'} (k_H(t-u)-k_H(s-u){\bf 1}_{\{u \leq s\}})h(Z_u)dB_u,\;\;\text{for fixed}\;\;0 \leq s \leq t,$$
is a local martingale. The Burkholder-Davis-Gundy inequality at $t'=t$ and Jensen's inequality yields
\begin{align*}
\E\big[|\xi_t-\xi_s|^q\big]  & \geq c_q \E\Big[\big(\int_0^t (k_H(t-u)-k_H(s-u){\bf 1}_{\{u \leq s\}})^2h(Z_u)^2du\big)^{q/2}\Big] \\
& \geq c_q \Big(\int_0^t (k_H(t-u)-k_H(s-u){\bf 1}_{\{u \leq s\}})^2\E[h(Z_u)^2]du\Big)^{q/2}.
\end{align*}
From the upper bound, by Kolmogorov's continuity criterion,  we have that $t \mapsto h(Z_t)^2$ has a continuous modification so that $t \mapsto \E[h(Z_t)^2]$ is continuous likewise by the uniform integrability of the family $h(Z_t)_{t \in [0,1]}$ granted by $\sup_{t \in [0,1]}\E[|Z_t|^q]< \infty$ for every $q \geq 2$,  that follows from polynomial growth of $h$ and the upper bound. Since $\E[h(Z_t)^2] \neq 0$ for every $t$, we have that $c_h = \inf_{u \in [0,1]}\E[h(Z_u)^2] >0$. It follows that 
\begin{align*}
\E\big[|\xi_t-\xi_s|^q\big]  & \geq c_qc_h\Big(\int_0^t (k_H(t-u)-k_H(s-u){\bf 1}_{\{u \leq s\}})^2du\Big)^{q/2} \\
& \geq c_qc_h\Big(\int_{s}^t k_H(t-u)^2du\Big)^{q/2} \\
& \geq c_qc_hc_H^q\Big(\int_{s}^t (t-u)^{2H-1}du\Big)^{q/2} \\
& = \frac{c_qc_hc_H^q}{(2H)^{q/2}}(t-s)^{Hq},
\end{align*}
and we obtain the lower bound with $k_q = \frac{c_qc_hc_H^q}{(2H)^{q/2}}$. The proof of Proposition \ref{prop: frac property} is complete.

\subsection{Detailed results for M\'et\'eo France data at small time scales}
\label{app:mf_micro_results}
In this section, we report the different results obtained for the four M\'et\'eo France weather stations at small time scales, in addition to results of the Section~\ref{sec: stat criticality}.

\begin{table}[H]
\centering
\begin{tabular}{lllllll}
\toprule
Month & $\|\varphi\|_1$ & $\sigma_{\|\varphi\|_1}$ & $\alpha$ & $\sigma_{\alpha}$ & $n_{>0}$ & $\frac{\mu T}{1-\|\varphi\|_1}$ \\
\midrule
Jan & 0.961 & 0.009 & 0.591 & 0.066 & \num{4.74e+03} & \num{7.19e+04} \\
Feb & 0.952 & 0.009 & 0.662 & 0.060 & \num{4.28e+03} & \num{4.88e+04} \\
Mar & 0.917 & 0.020 & 0.451 & 0.106 & \num{4.15e+03} & \num{6.22e+03} \\
\midrule
Apr & 0.851 & 0.037 & 0.417 & 0.120 & \num{2.56e+03} & \num{2.75e+03} \\
May & 0.936 & 0.012 & 0.507 & 0.090 & \num{3.64e+03} & \num{1.96e+04} \\
Jun & 0.957 & 0.022 & 0.456 & 0.111 & \num{3.24e+03} & \num{3.71e+04} \\
\midrule
Jul & 0.913 & 0.025 & 0.392 & 0.099 & \num{2.98e+03} & \num{7.58e+03} \\
Aug & 0.716 & 0.041 & 0.815 & 1.621 & \num{3.51e+03} & \num{1.53e+03} \\
Sep & 0.914 & 0.030 & 0.372 & 0.124 & \num{2.92e+03} & \num{6.82e+03} \\
\midrule
Oct & 0.859 & 0.024 & 0.853 & 1.337 & \num{4.96e+03} & \num{6.45e+03} \\
Nov & 0.977 & 0.006 & 0.473 & 0.072 & \num{5.70e+03} & \num{1.61e+05} \\
Dec & 0.976 & 0.006 & 0.494 & 0.059 & \num{5.16e+03} & \num{1.58e+05} \\
\bottomrule
\end{tabular}
\caption{{\it \small Parameter estimates for $\alpha$ and $\|\varphi\|_{1}$ for the Lille station under the Hawkes model with the power-law approximation \eqref{eq: power-law approximation} based on the second-order contrast method, with standard deviation 
$\sigma_{\|\varphi\|_{1}}$ or $\sigma_\alpha$ based 
on 100 repeated simulations with estimated parameters. The number $n_{>0}$ indicates the number of non-zero data. The last column displays a proxy of the statistical information (in number of events) $\frac{\mu T}{1-\|\varphi\|_1}$, where $\mu$ and $\|\varphi\|_{1}$ are replaced by our estimators.}  \label{table:lille_ar1}
}
\end{table}

\begin{table}[H]
\centering
\begin{tabular}{lllllll}
\toprule
Month & $\|\varphi\|_1$ & $\sigma_{\|\varphi\|_1}$ & $\alpha$ & $\sigma_{\alpha}$ & $n_{>0}$ & $\frac{\mu T}{1-\|\varphi\|_1}$ \\
\midrule
Jan & 0.925 & 0.010 & 1.031 & 0.211 & \num{4.59e+03} & \num{2.88e+04} \\
Feb & 0.925 & 0.010 & 0.950 & 0.083 & \num{4.14e+03} & \num{2.57e+04} \\
Mar & 0.926 & 0.010 & 0.990 & 0.130 & \num{4.15e+03} & \num{3.08e+04} \\
\midrule
Apr & 0.878 & 0.022 & 0.903 & 0.166 & \num{2.56e+03} & \num{8.72e+03} \\
May & 0.876 & 0.030 & 0.692 & 0.132 & \num{3.64e+03} & \num{7.18e+03} \\
Jun & 0.902 & 0.018 & 1.311 & 1.494 & \num{3.24e+03} & \num{1.73e+04} \\
\midrule
Jul & 0.865 & 0.031 & 0.627 & 0.113 & \num{2.98e+03} & \num{5.75e+03} \\
Aug & 0.983 & 0.013 & 0.852 & 0.109 & \num{3.51e+03} & \num{6.36e+05} \\
Sep & 0.942 & 0.030 & 0.550 & 0.181 & \num{2.82e+03} & \num{3.05e+04} \\
\midrule
Oct & 0.921 & 0.020 & 0.600 & 0.074 & \num{4.96e+03} & \num{1.76e+04} \\
Nov & 0.931 & 0.009 & 0.929 & 0.104 & \num{5.04e+03} & \num{3.77e+04} \\
Dec & 0.922 & 0.012 & 0.760 & 0.069 & \num{4.12e+03} & \num{2.68e+04} \\
\bottomrule
\end{tabular}\caption{{\it {\small Same experiment as in Table \ref{table:lille_ar1}, except that the spectral method is used for parameter estimation instead of the second-order contrast method.}}}
\end{table}

\begin{table}[H]
\centering
\begin{tabular}{lllllll}
\toprule
Month & $\|\varphi\|_1$ & $\sigma_{\|\varphi\|_1}$ & $\alpha$ & $\sigma_{\alpha}$ & $n_{>0}$ & $\frac{\mu T}{1-\|\varphi\|_1}$ \\
\midrule
Jan & 0.975 & 0.009 & 0.688 & 0.112 & \num{2.66e+03} & \num{4.02e+04} \\
Feb & 0.969 & 0.011 & 0.709 & 0.124 & \num{2.24e+03} & \num{4.44e+04} \\
Mar & 0.925 & 0.027 & 0.886 & 0.575 & \num{2.46e+03} & \num{6.76e+03} \\
\midrule
Apr & 0.977 & 0.011 & 0.576 & 0.130 & \num{2.30e+03} & \num{7.43e+04} \\
May & 0.960 & 0.032 & 0.466 & 0.130 & \num{1.91e+03} & \num{1.62e+04} \\
Jun & 0.880 & 0.252 & 0.159 & 1.067 & \num{8.63e+02} & \num{3.16e+02} \\
\midrule
\textcolor{red}{Jul} & 0.881 & 0.251 & 0.854 & 2.081 & \num{3.33e+02} & \num{8.94e+02} \\
Aug & 0.591 & 0.199 & 1.031 & 2.507 & \num{5.96e+02} & \num{1.48e+02} \\
Sep & 0.773 & 0.087 & 0.395 & 0.565 & \num{1.65e+03} & \num{5.14e+02} \\
\midrule
Oct & 0.970 & 0.017 & 0.524 & 0.139 & \num{2.61e+03} & \num{3.49e+04} \\
Nov & 0.865 & 0.027 & 0.581 & 0.314 & \num{3.55e+03} & \num{2.52e+03} \\
Dec & 0.932 & 0.025 & 1.086 & 1.913 & \num{2.76e+03} & \num{7.88e+03} \\
\bottomrule
\end{tabular}
\caption{{\it {\small Parameter estimates for $\alpha$ and $\|\varphi\|_{1}$ for the Marseille station under the Hawkes model with the power-law approximation \eqref{eq: power-law approximation} based on the second-order contrast method, with standard deviation 
$\sigma_{\|\varphi\|_{1}}$ or $\sigma_\alpha$ based 
on 100 repeated simulations with estimated parameters. The number $n_{>0}$ indicates the number of non-zero data. The last column displays a proxy of the statistical information (in number of events) $\frac{\mu T}{1-\|\varphi\|_1}$, where $\mu$ and $\|\varphi\|_{1}$ are replaced by our estimators. Months shown in red indicate cases where model does not achieve the lowest score~\eqref{eq:contrast} among the Hawkes model with an exponential kernel, the BL model, and the NS model.}}  \label{table:marseille_ar1}
}
\end{table}

\begin{table}[H]
\centering
\begin{tabular}{lllllll}
\toprule
Month & $\|\varphi\|_1$ & $\sigma_{\|\varphi\|_1}$ & $\alpha$ & $\sigma_{\alpha}$ & $n_{>0}$ & $\frac{\mu T}{1-\|\varphi\|_1}$ \\
\midrule
Jan & 0.957 & 0.011 & 1.021 & 0.140 & \num{2.66e+03} & \num{2.58e+04} \\
Feb & 0.962 & 0.011 & 0.800 & 0.189 & \num{2.24e+03} & \num{3.87e+04} \\
Mar & 0.965 & 0.010 & 0.583 & 0.098 & \num{2.46e+03} & \num{2.98e+04} \\
\midrule
Apr & 0.954 & 0.014 & 0.733 & 0.157 & \num{2.30e+03} & \num{2.51e+04} \\
May & 0.913 & 0.032 & 0.614 & 0.136 & \num{1.91e+03} & \num{3.85e+03} \\
\textcolor{red}{Jun} & 0.909 & 0.036 & 0.791 & 0.295 & \num{8.63e+02} & \num{3.41e+03} \\
\midrule
\textcolor{red}{Jul} & 0.994 & 0.004 & 10.000 & 1.939 & \num{3.33e+02} & \num{5.09e+05} \\
\textcolor{red}{Aug} & 0.917 & 0.069 & 1.348 & 2.733 & \num{5.96e+02} & \num{2.66e+03} \\
\textcolor{red}{Sep} & 0.940 & 0.011 & 10.000 & 2.598 & \num{1.65e+03} & \num{1.43e+04} \\
\midrule
\textcolor{red}{Oct} & 0.831 & 0.048 & 1.157 & 3.460 & \num{2.61e+03} & \num{2.02e+03} \\
Nov & 0.906 & 0.029 & 0.964 & 0.209 & \num{3.55e+03} & \num{8.74e+03} \\
Dec & 0.955 & 0.014 & 0.814 & 0.142 & \num{2.76e+03} & \num{1.67e+04} \\
\bottomrule
\end{tabular}\caption{{\it {\small Same experiment as in Table~\ref{table:marseille_ar1}, except that the spectral method is used for parameter estimation instead of the second-order contrast method. Unlike Table~\ref{table:marseille_ar1}, months shown in red indicate cases where the model does not achieve the lowest AIC score~\eqref{eq:aic} among the Hawkes model with an exponential kernel, the Bartlett-Lewis (BL) model, and the Neyman-Scott (NS) model, rather than cases where it does not achieve the lowest score~\eqref{eq:contrast}.}}}
\end{table}

\begin{table}[H]
\centering
\begin{tabular}{lllllll}
\toprule
Month & $\|\varphi\|_1$ & $\sigma_{\|\varphi\|_1}$ & $\alpha$ & $\sigma_{\alpha}$ & $n_{>0}$ & $\frac{\mu T}{1-\|\varphi\|_1}$ \\
\midrule
Jan & 0.944 & 0.008 & 0.676 & 0.070 & \num{3.52e+03} & \num{2.44e+04} \\
Feb & 0.917 & 0.015 & 0.727 & 0.086 & \num{2.80e+03} & \num{1.39e+04} \\
Mar & 0.946 & 0.010 & 0.561 & 0.057 & \num{3.09e+03} & \num{2.50e+04} \\
\midrule
Apr & 0.949 & 0.015 & 0.503 & 0.091 & \num{3.02e+03} & \num{2.06e+04} \\
May & 0.781 & 0.043 & 0.612 & 1.094 & \num{4.60e+03} & \num{2.11e+03} \\
Jun & 0.892 & 0.038 & 0.381 & 0.134 & \num{3.42e+03} & \num{6.81e+03} \\
\midrule
Jul & 0.819 & 0.025 & 0.518 & 0.162 & \num{3.08e+03} & \num{2.30e+03} \\
Aug & 0.753 & 0.038 & 0.881 & 1.172 & \num{3.36e+03} & \num{1.92e+03} \\
Sep & 0.927 & 0.020 & 0.421 & 0.098 & \num{2.87e+03} & \num{9.54e+03} \\
\midrule
Oct & 0.949 & 0.014 & 0.580 & 0.074 & \num{3.88e+03} & \num{1.67e+04} \\
Nov & 0.934 & 0.013 & 0.821 & 0.200 & \num{3.78e+03} & \num{1.98e+04} \\
Dec & 0.959 & 0.009 & 0.533 & 0.050 & \num{3.53e+03} & \num{3.94e+04} \\
\bottomrule
\end{tabular}
\caption{{\it {\small Parameter estimates for $\alpha$ and $\|\varphi\|_{1}$ for the Strasbourg station under the Hawkes model with the power-law approximation \eqref{eq: power-law approximation} based on the second-order contrast method, with standard deviation 
$\sigma_{\|\varphi\|_{1}}$ or $\sigma_\alpha$ based 
on 100 repeated simulations with estimated parameters. The number $n_{>0}$ indicates the number of non-zero data. The last column displays a proxy of the statistical information (in number of events) $\frac{\mu T}{1-\|\varphi\|_1}$, where $\mu$ and $\|\varphi\|_{1}$ are replaced by our estimators.}}  \label{table:strasbourg_ar1}
}
\end{table}

\begin{table}[H]
\centering
\begin{tabular}{lllllll}
\toprule
Month & $\|\varphi\|_1$ & $\sigma_{\|\varphi\|_1}$ & $\alpha$ & $\sigma_{\alpha}$ & $n_{>0}$ & $\frac{\mu T}{1-\|\varphi\|_1}$ \\
\midrule
Jan & 0.926 & 0.013 & 0.896 & 0.103 & \num{3.52e+03} & \num{2.18e+04} \\
Feb & 0.913 & 0.013 & 0.838 & 0.088 & \num{2.79e+03} & \num{1.70e+04} \\
Mar & 0.902 & 0.012 & 0.856 & 0.126 & \num{3.09e+03} & \num{1.73e+04} \\
\midrule
Apr & 0.917 & 0.022 & 0.683 & 0.104 & \num{2.94e+03} & \num{1.07e+04} \\
May & 0.824 & 0.035 & 0.773 & 0.155 & \num{4.60e+03} & \num{5.18e+03} \\
Jun & 0.972 & 0.019 & 0.697 & 0.111 & \num{3.42e+03} & \num{1.94e+05} \\
\midrule
Jul & 0.888 & 0.047 & 0.330 & 0.097 & \num{3.03e+03} & \num{3.38e+03} \\
\textcolor{red}{Aug} & 0.859 & 0.032 & 0.524 & 0.117 & \num{3.36e+03} & \num{3.77e+03} \\
\textcolor{red}{Sep} & 0.877 & 0.042 & 0.516 & 0.126 & \num{2.87e+03} & \num{4.10e+03} \\
\midrule
Oct & 0.940 & 0.016 & 0.684 & 0.085 & \num{3.88e+03} & \num{1.87e+04} \\
Nov & 0.946 & 0.011 & 0.918 & 0.088 & \num{3.78e+03} & \num{2.74e+04} \\
Dec & 0.907 & 0.016 & 0.937 & 0.172 & \num{2.78e+03} & \num{1.50e+04} \\
\bottomrule
\end{tabular}\caption{{\it {\small Same experiment as in Table \ref{table:strasbourg_ar1}, except that the spectral method is used for parameter estimation instead of the second-order contrast method. Months shown in red indicate cases where the power-law Hawkes model does not achieve the lowest AIC score~\eqref{eq:aic} among the Hawkes model with an exponential kernel, the BL model, and the NS model.}}}
\end{table}

\begin{table}[H]
\centering
\begin{tabular}{lllllll}
\toprule
Month & $\|\varphi\|_1$ & $\sigma_{\|\varphi\|_1}$ & $\alpha$ & $\sigma_{\alpha}$ & $n_{>0}$ & $\frac{\mu T}{1-\|\varphi\|_1}$ \\
\midrule
Jan & 0.950 & 0.010 & 0.823 & 0.106 & \num{4.22e+03} & \num{2.51e+04} \\
Feb & 0.977 & 0.006 & 0.509 & 0.059 & \num{2.88e+03} & \num{9.70e+04} \\
Mar & 0.902 & 0.015 & 0.886 & 0.123 & \num{3.38e+03} & \num{9.96e+03} \\
\midrule
Apr & 0.961 & 0.012 & 0.410 & 0.102 & \num{3.56e+03} & \num{3.49e+04} \\
May & 0.794 & 0.034 & 0.489 & 0.258 & \num{4.17e+03} & \num{2.26e+03} \\
Jun & 0.816 & 0.124 & 0.288 & 0.228 & \num{2.55e+03} & \num{1.04e+03} \\
\midrule
Jul & 0.907 & 0.038 & 0.316 & 0.136 & \num{1.71e+03} & \num{3.22e+03} \\
Aug & 0.830 & 0.023 & 0.753 & 0.153 & \num{1.65e+03} & \num{1.69e+03} \\
Sep & 0.866 & 0.032 & 0.611 & 0.488 & \num{1.85e+03} & \num{2.28e+03} \\
\midrule
Oct & 0.794 & 0.069 & 0.689 & 0.433 & \num{2.66e+03} & \num{1.00e+03} \\
Nov & 0.973 & 0.006 & 0.524 & 0.062 & \num{3.93e+03} & \num{8.26e+04} \\
Dec & 0.965 & 0.010 & 0.437 & 0.069 & \num{3.38e+03} & \num{2.99e+04} \\
\bottomrule
\end{tabular}
\caption{{\it {\small Parameter estimates for $\alpha$ and $\|\varphi\|_{1}$ for the Toulouse station under the Hawkes model with the power-law approximation \eqref{eq: power-law approximation} based on the second-order contrast method, with standard deviation 
$\sigma_{\|\varphi\|_{1}}$ or $\sigma_\alpha$ based 
on 100 repeated simulations with estimated parameters. The number $n_{>0}$ indicates the number of non-zero data. The last column displays a proxy of the statistical information (in number of events) $\frac{\mu T}{1-\|\varphi\|_1}$, where $\mu$ and $\|\varphi\|_{1}$ are replaced by our estimators.}}  \label{table:toulouse_ar1}
}
\end{table}

\begin{table}[H]
\centering
\begin{tabular}{lllllll}
\toprule
Month & $\|\varphi\|_1$ & $\sigma_{\|\varphi\|_1}$ & $\alpha$ & $\sigma_{\alpha}$ & $n_{>0}$ & $\frac{\mu T}{1-\|\varphi\|_1}$ \\
\midrule
Jan & 0.953 & 0.009 & 0.850 & 0.071 & \num{4.22e+03} & \num{3.34e+04} \\
Feb & 0.936 & 0.017 & 0.707 & 0.098 & \num{2.65e+03} & \num{1.62e+04} \\
Mar & 0.976 & 0.011 & 0.485 & 0.056 & \num{3.38e+03} & \num{8.26e+04} \\
\midrule
Apr & 0.940 & 0.015 & 0.885 & 0.221 & \num{3.56e+03} & \num{4.85e+04} \\
May & 0.921 & 0.019 & 0.625 & 0.082 & \num{4.17e+03} & \num{1.51e+04} \\
\textcolor{red}{Jun} & 0.810 & 0.055 & 1.153 & 1.752 & \num{2.55e+03} & \num{3.02e+03} \\
\midrule
Jul & 0.973 & 0.026 & 0.683 & 0.152 & \num{1.71e+03} & \num{1.04e+05} \\
\textcolor{red}{Aug} & 0.743 & 0.048 & 0.822 & 3.925 & \num{1.65e+03} & \num{9.39e+02} \\
Sep & 0.872 & 0.038 & 0.631 & 0.187 & \num{1.85e+03} & \num{3.05e+03} \\
\midrule
\textcolor{red}{Oct} & 0.941 & 0.010 & 10.000 & 3.000 & \num{2.66e+03} & \num{2.75e+04} \\
Nov & 0.915 & 0.014 & 1.046 & 0.215 & \num{3.62e+03} & \num{1.81e+04} \\
Dec & 0.923 & 0.011 & 0.840 & 0.081 & \num{3.38e+03} & \num{2.33e+04} \\
\bottomrule
\end{tabular}\caption{{\it {\small Same experiment as in Table \ref{table:toulouse_ar1}, except that the spectral method is used for parameter estimation instead of the second-order contrast method. Months shown in red indicate cases where the power-law Hawkes model does not achieve the lowest AIC score~\eqref{eq:aic} among the Hawkes model with an exponential kernel, the BL model, and the NS model.
}}}
\end{table}

\end{document}